%% file: arxiv.tex
\documentclass{article}
\usepackage{a4wide}
\usepackage[utf8]{inputenc}
\usepackage[T1]{fontenc}    % use 8-bit T1 fonts
\usepackage{hyperref}       % hyperlinks
\usepackage{url}            % simple URL typesetting
\usepackage{amsfonts}       % blackboard math symbols
\usepackage{nicefrac}       % compact symbols for 1/2, etc.
\usepackage{microtype}      % microtypography
\usepackage{xcolor} % colors
\usepackage{amsmath,amsthm,amssymb,bbm,graphicx,derivative}
\usepackage{todonotes}

\usepackage{cleveref}
\crefname{figure}{Figure}{Figures}
\crefname{equation}{Equation}{Equations}
\crefname{lemma}{Lemma}{Lemmas}
\crefname{theorem}{Theorem}{Theorems}
\crefname{section}{Section}{Sections}

\newtheorem{theorem}{Theorem}
\newtheorem{corollary}[theorem]{Corollary}
\newtheorem{lemma}[theorem]{Lemma}
\newtheorem{proposition}[theorem]{Proposition}
\newtheorem{definition}[theorem]{Definition}

\newcommand{\wbot}{\underline{w}}
\newcommand{\wtop}{\overline{w}}
\newcommand{\NC}{\operatorname{nc}}
\newcommand{\CR}[1]{\tilde{#1}}
\newcommand{\RR}{\mathbb{R}}
\newcommand{\EE}{\mathbb{E}}
\newcommand{\ZZ}{\mathbb{Z}}
\newcommand{\rob}{\operatorname{rob}}
\newcommand{\cost}{\operatorname{cost}}
\newcommand{\cons}{\operatorname{cons}}
\newcommand{\diff}{\mathop{}\!\mathrm{d}}

\newenvironment{ack}{%
  \section*{Acknowledgements}
}{%
}

\begin{document}

\author{
  Mathis Degryse\thanks{Sorbonne University, CNRS, LIP6, Paris, France}
    \and
  Imrane Zaakour\thanks{Centrale Supélec, Gif-sur-Yvette, Paris and ILLS, Montreal, Canada}
    \and
  Christoph Dürr\footnotemark[1]
    \and
  Spyros Angelopoulos\thanks{CNRS and International Laboratory on Learning Systems (ILLS), Montreal, Canada}
}

\input{main}

\end{document}

%% file: main.tex
\title{The Pareto Frontier of Randomized Learning-Augmented Online Bidding}

\maketitle
% import \ref{preview}/dashy-todo:0.1.3: todo import \ref{preview}/rubber-article:0.5.2: \textbf{ import \ref{preview}/equate:0.3.2: equate

%  }import images/draw\_seq.typ : plot\_seq, continuous\_seq, uniform\_sep import images/uniform\_sep\_smooth.typ: plot\_alpha\_beta\_uniform\_sep

\begin{abstract}
Online bidding is a classical problem in online decision-making, with applications in resource allocation, hierarchical clustering, and the analysis of approximation algorithms. We study its randomized learning-augmented variant, where an online algorithm generates a sequence of random bids while leveraging predictions from an oracle. 
We provide analytical upper and lower bounds on the optimal consistency $C$ as a function of the robustness $R$, which match when $R \geq 2.885$, effectively closing the gap left by previous work. 
The key technical ingredient is the notion of a {\em bidding function}, a novel abstraction that provides a unified framework for the design and analysis of randomized bidding strategies. We complement our theoretical results with an experimental application of randomized bidding to the incremental median problem, demonstrating the applicability of our algorithm in practical clustering settings.

% Online bidding is a classical problem in online decision-making, with applications in resource allocation, hierarchical clustering, and the analysis of approximation algorithms. We study its randomized learning-augmented variant, where an online algorithm generates a sequence of random bids while leveraging predictions from an oracle. 

% Our main result is an algorithm that achieves a provably near-optimal tradeoff between consistency and robustness, effectively closing the gap in prior work.
% The key technical ingredient is the notion of a {\em bidding function}, a novel abstraction that provides a unified framework for the design and analysis of randomized bidding strategies. This abstraction enables a more systematic understanding of the problem and leads to a principled analysis of randomized sequencing problems such as bidding.
\end{abstract}

\section{ Introduction}
\label{sec:introduction}

In several sequential decision-making settings, the algorithm relies on increasing a quantity until it reaches a level that guarantees success, without knowing that threshold in advance. For instance, a scheduler may gradually increase the runtime allocated to a job until it finishes, or an algorithm may successively guess larger values of an unknown optimum until feasibility is reached. In such settings, all unsuccessful choices contribute to the total cost, hence the performance of the algorithm depends on how quickly it reaches the threshold without excessive overshooting.

{\em Online bidding}~\cite{chrobak2008incremental} provides a simple abstraction of this fundamental tradeoff. In this problem, a player submits an increasing sequence of {\em bids} until one of them reaches or exceeds an unknown threshold $u$, and pays the sum of all bids made up to that point. The objective is to design a bidding strategy whose total cost is as small as possible relative to the unknown threshold. Formally, in the deterministic setting, the player produces an increasing sequence\footnote{As in previous work on online bidding and related incremental allocation problems, we consider {\em bi-infinite} sequences. This convention simplifies exposition and avoids additional technical assumptions, such as requiring that $u$ be at least as large as the first bid.} $X =(x_i)_{i \in \ZZ}$, and given a threshold $u$ it incurs cost equal to
\[
\cost(X,u)= \sum_{j\leq i} x_j \text{ for $i$ such that } x_{i-1}<u \leq x_i.
\]
The {\em normalized} cost of $X$ is then defined as $\NC(X,u)=\cost(X,u)/u$. For simplicity, we will use the notation $\NC(u)$ when $X$ is implied from context. In the {\em randomized} setting, $X$ is an increasing sequence of random bids, and the cost of $X$ on a threshold $u$ is measured by its expected value. 

Under the standard framework of {\em competitive analysis}~\cite{borodin1998online}, the objective is to minimize the worst-case normalized ratio, i.e., when the threshold is chosen adversarially. In practice, however, the algorithm may have access to additional information in the form of a {\em prediction}, obtained from historical data or learned models. This has motivated the study of {\em learning-augmented} algorithms~\cite{DBLP:books/cu/20/MitzenmacherV20}, where a prediction oracle is incorporated into the design and analysis of the algorithm. The goal is to leverage such predictions to obtain significantly improved performance when they are accurate, while still retaining strong worst-case guarantees when they are not.

We consider a natural oracle which provides the algorithm with a prediction $\hat{u}$. In this context, the {\em consistency} and the {\em robustness} of a bidding strategy are defined as its worst-case normalized cost, assuming error-free or adversarial predictions, respectively. Formally 
\begin{equation}
\cons(X)= 
\NC(X,\hat{u}), \quad \text{and} \quad
\rob(X)=\sup_{u} \NC(X,u).
\label{eq:cons-rob.formal}    
\end{equation}

A natural goal is to design algorithms that achieve the optimal tradeoff between consistency and robustness; in the learning-augmented literature, such algorithms are often called \emph{Pareto-optimal}. While this question has been answered for deterministic online bidding~\cite{angelopoulos_et_al_LIPIcs.ITCS.2020.52}, it remains open in the randomized setting. Randomization is known to improve performance even in the classical setting without predictions: the optimal randomized competitive ratio is 
$e$~\cite{chrobak2008incremental}, compared to the optimal deterministic ratio of 
4~\cite{beck:yet.more}. This naturally raises the question of how much randomization can improve learning-augmented bidding. In this work, we quantify this improvement by characterizing the best-possible tradeoff achievable by randomized algorithms.

%It is worth noting that for some problems, including online bidding, deterministic Pareto-optimal algorithms suffer from {\em brittleness}, in that an imperceptible prediction error may result in marked performance degradation. As shown in~\cite{elenter2024overcoming,Benomar-Perchet-tradeoffs-25} randomization helps alleviate the effect of brittleness since, at an intuitive level, the algorithm does not overfit the point prediction. For this reason, randomized learning-augmented bidding algorithms are more suitable for deployment in practical settings than deterministic ones, which further motivates their study.

\subsection{Contributions}
\label{subsec:contributions}

We present and analyze an algorithm that is $R$-robust for all $R\geq e$, and provably attains the  optimal consistency\footnote{During the preparation of this manuscript, we became aware of concurrent independent work by C. Lee et al.~\cite{Yongho-Shin-2026} that obtains a similar result.} for all $R\geq 2/\ln 2\approx 2.885$.  For the remaining, and very narrow  regime $R\in [e,2.885]$, we derive analytical upper and lower bounds which significantly improve upon the state of the art of~\cite{angelopoulos-simon-bidding-2025,shin2025improved}. In particular, for $R=e+\varepsilon$ and $\varepsilon \to 0$, the consistency of our algorithm is $C=e-\Omega(\varepsilon^{1/4})$, improving upon the previously known guarantee of $C=e-\Omega(\varepsilon)$. Our lower bound, which is tight for $R \geq 2.885$, also yields improved impossibility guarantees throughout the regime $R \in [e,2.885]$. Figure~\ref{fig:pareto} illustrates the obtained results and their comparison to previous work.

We complement our theoretical analysis with a numerical evaluation that explicitly incorporates prediction {\em error}. This study shows that our algorithm achieves improved \emph{smoothness}, namely a more graceful degradation in performance as prediction quality deteriorates, compared to previous approaches. Finally, we present an experimental application to the classical \emph{incremental median} problem, demonstrating that randomized online bidding can yield significant practical improvements in clustering settings, even under highly inaccurate predictions.

From a technical perspective, our design and analysis rely on a new abstraction that we call a \emph{bidding function}. This framework shifts the focus from discrete, randomized sequences to the analysis of continuous functions, providing substantially greater flexibility in both algorithm design and lower-bound arguments. In particular, it enables a more systematic treatment of consistency--robustness tradeoffs than previous sequence-based approaches. A useful parallel can be drawn with online {\em conversion} problems, where {\em threshold functions} have played a central role in the development of improved algorithms, and in unifying the analysis of several online problems~\cite{cao2020optimal,
sun_pareto-optimal_2021}.

\subsection{Related Work}
\label{subsec:related.work}

\noindent
{\bf Learning-augmented algorithms} \ 
Algorithms augmented with a machine-learned prediction on some unknown aspect of the input have been studied in a large variety of online problems, such as rent-and-buy problems~\cite{DBLP:conf/icml/GollapudiP19}, 
scheduling~\cite{lattanzi2020online}, caching~\cite{lykouris2021competitive}, matching~\cite{antoniadis2020secretary}, packing~\cite{im2021online}, covering~\cite{bamas2020primal}, sorting~\cite{bai-coester-sorting-2023}, graph problems~\cite{azar2022online},  optimal stopping~\cite{dutting2024secretaries} and data structures~\cite{lin2022learning}. These are only some representative works, and since the seminal works~\cite{lykouris2021competitive,purohit2018improving} the field has experienced a remarkable growth; we refer to the online compendium~\cite{ALPS-website} for a listing of several recent works.

The analysis of learning-augmented algorithms is intricate.
Several works have focused on tradeoffs between the consistency (i.e., the performance of the algorithm assuming error-free prediction) and the robustness (i.e., the performance under adversarial prediction). Tradeoffs for deterministic algorithms have been studied for a variety of problems, e.g.~\cite{sun_pareto-optimal_2021,DBLP:conf/eenergy/Lee0HL24,DBLP:journals/iandc/Angelopoulos23,bamas2020primal,almanza2021online, benomar2025pareto}, often yielding {\em Pareto}-optimal results; in contrast, randomized tradeoffs have  been limited to 
problems such as ski rental~\cite{wei2020optimal}, search games~\cite{doi:10.1287/opre.2024.1498}, $k$-server~\cite{DBLP:conf/aistats/ChristiansonSW23} and online bidding~\cite{angelopoulos-simon-bidding-2025,shin2025improved}, as we discuss below. 

\begin{figure}[t]
\centering
\includegraphics[width=\textwidth]{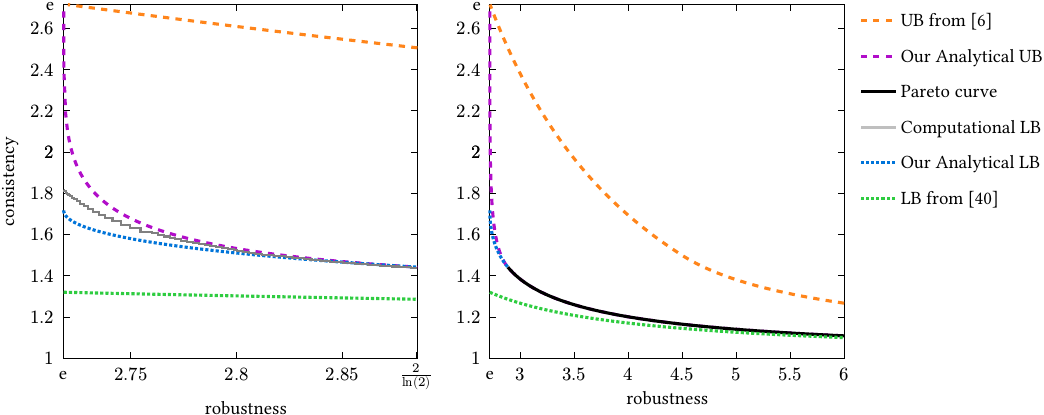}
 \caption{Consistency--robustness  tradeoff for various bidding strategies and different ranges of the robustness $R$. The dashed orange curve corresponds to the bidding sequence from \cite{angelopoulos-simon-bidding-2025}, while the dotted green curve corresponds to the lower bound given in \cite{shin2025improved}.
 UB and LB denote upper and lower bounds, respectively.}
 \label{fig:pareto}
\end{figure}

\noindent
{\bf Online bidding and related problems} \ 
For deterministic bidding, and in the absence of predictions,  a folklore result that goes back to linear search~\cite{beck:yet.more} shows that the optimal robustness (i.e., the optimal {\em competitive} ratio) is equal to 4, and is achieved by a simple doubling strategy of the form $X=(2^i)_{i \in \mathbb{Z}}$. Randomization improves the competitive ratio to $e$, which is tight as proven in~\cite{chrobak2008incremental} using a complex approach based on dual
fitting of a linear program. Online bidding has many applications in optimization problems such as incremental clustering and online $k$-median~\cite{charikar1997incremental}, load balancing~\cite{azar2005line}, searching for a hidden target in the infinite line~\cite{beck:yet.more}, and the design of algorithms with interruptible capabilities~\cite{RZ.1991.composing}; we refer to the survey~\cite{chrobak2006sigact} for a discussion of several applications.

Beyond worst-case competitive analysis, online bidding is one of the most studied problems under the learning-augmented lens. For deterministic algorithms,~\cite{angelopoulos_et_al_LIPIcs.ITCS.2020.52} 
gave Pareto-optimal algorithms, whereas~\cite{anand2021regression, im2023online} gave performance guarantees as a function of the prediction error. Concerning randomized strategies, a heuristic and a non-tight analytical lower bound were given in~\cite{angelopoulos-simon-bidding-2025}. In~\cite{Benomar-Perchet-tradeoffs-25}, randomization is leveraged implicitly by a perturbation of the prediction, though their algorithm is otherwise deterministic. The tradeoff between consistency and robustness was also studied in \cite{shin2025improved} under the name of the {\em button} problem, for which a non-tight analytic lower bound was presented.

\section{Preliminaries}
\label{sec:preliminaries}

We say that a bidding algorithm is $R$-robust and $C$-consistent if its robustness is at most $R$ and its consistency at most $C$. In our analysis, we will typically minimize the consistency, given a robustness requirement $R$. Note that $R$ must be at least the optimal randomized competitive ratio, namely at least $e$. 

Note that the normalized cost of a sequence is invariant under multiplication of the bids, a property that follows from the fact that the bidding sequences are bi-infinite. Formally, we have $\NC(\delta X, \delta u) = \NC(X, u)$ for any $\delta > 0$, where $\delta X:= (\delta x_i)_{i \in \ZZ}$. This implies that for any bidding sequence $X'$, we can find a scaled sequence $X$ such that
\begin{equation}
\cons(X)=\inf_u \NC(X,u) \
\text{ and }  \
\rob(X)=\sup_u \NC(X,u),
\label{eq:convenient_CR}
\end{equation}
which provides a convenient expression that we will be using in the remainder of this work. 

Given $R\geq e$, we define
\begin{equation}
\wbot(R) := -\frac{1}{W_{- 1}(- 1/R)} , \quad \wtop(R) := -\frac{1}{W_0(- 1/R)},
\label{eq:lamberts}
\end{equation}
where $W_0$ and $W_{-1}$ denote the principal and lower real branches of the Lambert function. We will use the shorthanded versions $\wtop,\wbot$ when $R$ is clear from context.

%We emphasize that for most of the sequences studied in this paper, the consistency is achieved at a particular value $u_0$, and by scaling it with the factor $\widetilde{u}/u_0$ we can guarantee that ratio at a predicted value $\widetilde{u}$. For sequences where the consistency $C$ is reached only in the limit, we know that for every $\varepsilon > 0$, we have $\NC(X, u_{\varepsilon})\leq C +\varepsilon$ for some $u_{\varepsilon}$, and we can guarantee this ratio at value $\widetilde{u}$, by scaling the sequence with the factor $\widetilde{u}/u_{\varepsilon}$.

The remainder of the paper is organized as follows. In Section~\ref{sec:functions}, we introduce the framework of bidding functions, establish its equivalence with bidding sequences, and illustrate the analysis of several classes of algorithms. Section~\ref{sec:upper} presents and analyzes our main algorithm, while in Section~\ref{sec:lower} we prove  the corresponding lower bounds using a linear programming approach.  Section~\ref{sec:numerical} concludes with a numerical comparison of algorithms that incorporates prediction error. 
Finally, in Section~\ref{sec:incremental}, we present an experimental application of randomized online bidding to the incremental median problem.

\section{ Bidding Functions : a Framework for Bidding Sequences}
\label{sec:functions}

In this section, we introduce the concept of a \textit{bidding function} that is instrumental in our analysis of the problem. Let $B : \RR \rightarrow \RR_+$ be a nondecreasing function such that $\lim_{t \rightarrow -\infty} B(t) = 0 , \lim_{t \rightarrow +\infty} B(t) = + \infty$ and for all $t \in \RR$, $\int_{-\infty}^t B(x) \diff x$ is finite.
 We call the domain of $B$ the set of \emph{reference points}.
We define the randomized bidding sequence {\em induced by} $B$ as $X_B\mathrel{:=}(B(i + \lambda))_{i \in \ZZ}$ where $\lambda$ is chosen uniformly at random in $\left[0 , 1\right]$.

In order to express the normalized cost of $X$ as a function of $B$, we introduce the concept of \emph{normalized mass} $\CR{B}$, defined by \[
\CR{B}(t) \mathrel{:=} \int_{-\infty}^{t + 1} \frac{B(x)}{B(t)} \diff x .
\] 
In addition, we define the \textit{work function} $w_B$, which will be useful in the analysis of our algorithms. This function  represents the expected sum over all bids up to the threshold $B(t)$, normalized by this threshold. \[
w_B (t) \mathrel{:=} \int_{-\infty}^t \frac{B(x)}{B(t)} \diff x .
\] 
When $B$ is clear from the context, we will sometimes use the shortcut $w_t := w_B(t)$.
To be able to relate $\CR{B}$ to $\NC(X_B)$, we define the \emph{generalized inverse} of $B$. Formally $B^{-}(u)$ is the infimum over all $t$ such that $B(t) \geq u$, see \cite{embrechts2013note}.

With the above definitions established, we can prove a strong connection between $\NC(X_B, u)$ and $\CR{B}$.

\begin{theorem}
\label{thm:bidding-fct}
For every $u \in \RR_{+}$ we have \begin{align*}
    \inf_u \NC(X_B, u) = \inf_t \CR{B}(t), \quad \text{and} \quad 
    \sup_u \NC(X_B, u) &= \sup_t \CR{B}(t).
\end{align*}
\end{theorem}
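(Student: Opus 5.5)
We plan to compute $\NC(X_B,u)$ explicitly as an average of $\CR{B}$ over one unit interval of reference points, and then compare the infima and suprema of the two sides.

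\textbf{Computing the expected cost.} Fix $u>0$ and set $t_0 := B^{-}(u)$. For a realisation $\lambda\in[0,1]$, the first bid reaching $u$ is $B(i^*+\lambda)$. Up to boundary cases where $B$ is flat at level $u$ (these form a null set of $\lambda$ and we will treat them separately), $i^*+\lambda$ is the unique point $s\in[t_0,t_0+1)$ with $s\equiv\lambda \pmod 1$. Hence, as $\lambda$ ranges uniformly over $[0,1]$, the terminal reference point $s$ is uniform on $[t_0,t_0+1)$. The cost for this realisation is $\sum_{k\ge 0}B(s-k)$.

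Taking expectations and substituting $x=s-k$ gives
\[
\EE[\cost(X_B,u)]=\int_{t_0}^{t_0+1}\sum_{k\ge0}B(s-k)\,\diff s=\int_{-\infty}^{t_0+1}B(x)\,\diff x .
\]
Here we use that the shifted intervals $[t_0-k,t_0+1-k)$ tile $(-\infty,t_0+1)$, and Tonelli applies because $B\ge 0$. Therefore
\[
\NC(X_B,u)=\frac{1}{u}\int_{-\infty}^{t_0+1}B(x)\,\diff x .
\]

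\textbf{Comparison with $\CR{B}$.} If $B$ is continuous at $t_0$, then $B(t_0)=u$ and $\NC(X_B,u)=\CR{B}(t_0)$ exactly. In general we only have $B(t_0^-)\le u\le B(t_0^+)$, so $\NC(X_B,u)$ lies between
\[
\frac{1}{B(t_0^+)}\int_{-\infty}^{t_0+1}B \qquad\text{and}\qquad \frac{1}{B(t_0^-)}\int_{-\infty}^{t_0+1}B .
\]
Both of these endpoints are limits of $\CR{B}(t)$ as $t\to t_0^{\pm}$, since the integral is continuous in $t$. Consequently every value $\NC(X_B,u)$ lies in the closed range of $\CR{B}$, which gives
\[
\inf_t\CR{B}\le\inf_u\NC(X_B,u) \qquad\text{and}\qquad \sup_u\NC(X_B,u)\le\sup_t\CR{B}.
\]

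\textbf{Converse inequalities.} For any $t$, take $u=B(t)$. Then $t_0=B^{-}(B(t))\le t$, and $B$ is constant on $[t_0,t]$, with value $u$ at least on $(t_0,t]$. This yields
\[
\NC(X_B,u)=\frac{1}{B(t)}\int_{-\infty}^{t_0+1}B\le \CR{B}(t).
\]
So the infimum over $u$ is at most $\inf_t\CR{B}$.

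For the supremum, we argue that $\CR{B}$ on a flat piece $[t_0,t]$ is largest at its left end, since the numerator grows while the denominator stays fixed. Any value of $\CR{B}(t)$ is therefore dominated by $\CR{B}$ at, or just to the right of, the left end of the flat piece. At that point, $u=B(t_0^+)$, or $u$ slightly above it, realises the value in the limit.

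The main obstacle is exactly this bookkeeping at discontinuities and flat pieces of $B$, where $B(B^{-}(u))\ne u$. We will handle it with one-sided limits together with the continuity of $t\mapsto\int_{-\infty}^{t+1}B$. Separately, we must check the measure-zero event in $\lambda$ where a bid equals $u$ exactly; by the convention $x_{i-1}<u\le x_i$ it does not affect the expectation.
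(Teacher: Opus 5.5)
Your computation $\NC(X_B,u)=\frac1u\int_{-\infty}^{B^-(u)+1}B$ is sound, and so is the one-sided sandwich around $t_0=B^-(u)$. The infimum converse via $u=B(t)$ also works. All three coincide with the paper's argument. The gap is in the \emph{supremum} converse, which is exactly where flat pieces of $B$ matter.

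On a flat piece $(t_0,t]$ the denominator $B$ is constant while $\int_{-\infty}^{s+1}B$ increases with $s$. So $\CR{B}$ is \emph{increasing} there: it is largest at the \emph{right} end, not the left. Your sentence says this ("the numerator grows while the denominator stays fixed") but draws the opposite conclusion. Consequently, the claim that every $\CR{B}(t)$ is dominated by the value at, or just right of, the left end is false. The choice $u=B(t_0^+)=B(t)$ only realises the \emph{smallest} value on the piece.

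Concretely, take $B(t)=2^{\lfloor t\rfloor}$. Then $\CR{B}(t)=2+2t$ on $[0,1)$, and $u=B(0)=1$ gives $\NC(X_B,1)=2$. The supremum $4$, however, is only approached as $t\to1^-$. For such deterministic-like bidding functions, where the robustness is attained precisely at right ends of flat pieces, your argument does not bound $\sup_t\CR{B}$ by $\sup_u\NC(X_B,u)$.

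The fix is the paper's: take $u$ strictly above $B(t)$ rather than equal to it. For $u>B(t)$, every $s\le t$ has $B(s)\le B(t)<u$, so $B^-(u)\ge t$ and
\[
\NC(X_B,u)=\frac1u\int_{-\infty}^{B^-(u)+1}B \;\ge\; \frac1u\int_{-\infty}^{t+1}B .
\]
The right-hand side tends to $\CR{B}(t)$ as $u\downarrow B(t)$. Hence $\CR{B}(t)\le\liminf_{u\downarrow B(t)}\NC(X_B,u)\le\sup_u\NC(X_B,u)$ for every $t$, with no case distinction on flat pieces. In the doubling example, $u\downarrow 1$ gives $B^-(u)=1$ and $\NC(X_B,u)=4/u\to4$. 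Your phrase ``or $u$ slightly above it'' points this way, but with the reversed monotonicity claim the step as written does not go through.

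Two minor points do not affect your conclusions. First, $\NC(X_B,u)$ lies between two one-sided limits of $\CR{B}$, not necessarily in the closure of its range; this still suffices for the inf and sup inequalities. Second, the event that some bid equals $u$ exactly is not null when $B$ is flat at level $u$. Your formula handles this case correctly anyway, via the convention $x_{i-1}<u\le x_i$.
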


\cref{thm:bidding-fct} allows us to use bidding functions explicitly in the analysis. Hence, we can restate the consistency and robustness expressions~\eqref{eq:convenient_CR}  in terms of $\CR{B}(t)$ as
\begin{align}\label{eq:cons-rob-CRb}
\cons(B) \mathrel{:=} \inf_t \CR{B}(t), &\hspace{1cm}& \rob(B) \mathrel{:=} \sup_t \CR{B}(t).
\end{align}
% we decided to remove C^* from the paper

% Moreover we define the optimal consistency-robustness tradeoff as a function $C^{*}$ mapping $R$ to the best consistency over all $R$-robust bidding functions.

As an example, the exponential bidding function $B(t) = e^t$ induces the sequence $(e^{i + \lambda})_{i \in \ZZ}$ where $\lambda \sim U[0,1]$ which has optimal robustness  $e$ as shown in~\cite{chrobak2008incremental}. As a different example, the randomized algorithm of \cite{angelopoulos-simon-bidding-2025} is induced by the exponential of a piecewise linear function, as we will  discuss shortly. As a last example, a piecewise constant function such as $B(t) = 2^{\lfloor t\rfloor}$ induces the doubling sequence that has optimal deterministic robustness equal to 4~\cite{beck:yet.more}. These examples illustrate that bidding functions capture a broad spectrum of classical bidding strategies, and in fact this connection is complete: \emph{every} randomized bidding sequence is induced by some bidding function.

\begin{theorem}\label{thm:equivalence-seq-fct}
For every randomized sequence $X$, there is a bidding function $B$ such that the consistency (respectively,   robustness) of $X$ is upper bounded by the consistency (respectively,  robustness) of $B$. 
\end{theorem}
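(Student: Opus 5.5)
The plan is to build $B$ from the \emph{intensity measure} of $X$ rather than from its law. The expected cost splits into two parts: the expected sum of the bids strictly below the threshold, and the expected first bid at or above the threshold. The first part depends only on the expected number of bids in each interval, and the second part can only decrease under a particular "evenly spaced" rearrangement of that intensity. The bidding function $B$ will realize exactly this rearrangement. We may assume $\rob(X)<\infty$, since otherwise there is nothing to prove. In that case, for every $u$ the expected sum of bids below $u$ is finite, and almost surely bids accumulate only at $0$ and $+\infty$.

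\textbf{Step 1 (counting function).} Fix a reference value $v_0>0$ and define $G(v)=\EE[\#\{i: v_0<x_i\le v\}]$ for $v\ge v_0$ and $G(v)=-\EE[\#\{i: v<x_i\le v_0\}]$ for $v<v_0$. Then $G$ is nondecreasing and right-continuous, and $G(b)-G(a)$ is the expected number of bids in $(a,b]$. These quantities are finite: the number of bids in $(a,b]$ is at most $1+(\text{sum of bids in } (a,b])/a$, whose expectation is bounded by a multiple of $\rob(X)$. Moreover $G(v)\to\pm\infty$ as $v\to+\infty$ or $v\to 0$, because the sequence is bi-infinite. Let $B(t)=\inf\{v: G(v)\ge t\}$ be the generalized inverse of $G$, as for $B^-$. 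This $B$ is nondecreasing with $B(-\infty)=0$ and $B(+\infty)=+\infty$.

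By construction, $\#\{i\in\ZZ : B(i+\lambda)\in(a,b]\}$ has expectation $G(b)-G(a)$ over $\lambda\sim U[0,1]$. Hence $X_B$ and $X$ have the same intensity measure $\mu$. Integrability of $\int_{-\infty}^t B$ follows from $\int_{-\infty}^t B(x)\diff x=\int_{(0,B(t)]} v\,\diff\mu(v)$ (up to a boundary term), which is an expected sum of bids below a threshold and therefore finite.

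\textbf{Step 2 (cost comparison).} For a threshold $u$, write
\[
\cost(Y,u)=\sum_{j} y_j\mathbbm{1}[y_j<u]+\min\{y_j: y_j\ge u\}.
\]
The expectation of the first term equals $\int_{(0,u)} v\,\diff\mu(v)$, so it is identical for $X$ and $X_B$. For the second term, $\EE[\min\{y_j\ge u\}]=u+\int_u^\infty \Pr[\text{no bid in }[u,y]]\diff y$. For $X$, Markov's inequality on the count gives
\[
\Pr[\text{no bid in }[u,y]]\ \ge\ \max\bigl(0,\,1-\mu([u,y])\bigr).
\]
For $X_B$ the bids are spaced exactly one unit apart in $G$-scale with a uniform offset, so this inequality becomes an equality. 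Therefore $\NC(X_B,u)\le \NC(X,u)$ for every $u$.

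Taking the supremum over $u$ gives $\rob(X_B)\le\rob(X)$. For consistency, $\cons(X)=\NC(X,\hat u)\ge \NC(X_B,\hat u)\ge\inf_u\NC(X_B,u)$. This last quantity equals the consistency of $B$ by \cref{thm:bidding-fct} together with the rescaling convention \eqref{eq:convenient_CR}, which places $\hat u$ where the infimum is (nearly) attained. The robustness of $B$ is handled the same way via \cref{thm:bidding-fct}.

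\textbf{Main obstacle.} I expect the delicate step to be the equality case for $X_B$ in Step 2 when $G$ has atoms or flat parts, that is, when $B$ has jumps or plateaus. The event "no bid of $X_B$ in $[u,y]$" must be computed carefully at endpoints: closed versus half-open intervals, and the convention $x_{i-1}<u\le x_i$. I would handle this by working with left limits $\mu([u,y])=G(y)-G(u^-)$ and checking directly that $\Pr_\lambda[\exists i: B(i+\lambda)\in[u,y]]=\min(1,G(y)-G(u^-))$ using the generalized-inverse identities from \cite{embrechts2013note}.
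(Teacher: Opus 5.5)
Your proposal is correct and follows essentially the paper's proof. It uses the same $B$ (the generalized inverse of the expected counting function), the same split of the cost into the bids below $u$ plus the first bid $Y_u\ge u$, and the same union (Markov) bound on the probability that some bid lies in $[u,y]$. That bound is what yields $\EE[Y_u]\ge\int_{F(u)}^{F(u)+1}B$. The difference is only in packaging: you pass through $X_B$, which has the same intensity and attains equality in that bound, to get the pointwise domination $\NC(X_B,u)\le\NC(X,u)$ and then invoke \cref{thm:bidding-fct}, whereas the paper compares $\CR{B}(F(u))$ with $\NC(X,u)$ directly and handles the $t\leftrightarrow u$ correspondence by a one-sided limiting argument.
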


In Sections~\ref{sec:upper} and~\ref{sec:lower}, we leverage \cref{thm:equivalence-seq-fct} to obtain near-tight consistency--robustness tradeoffs. While these results require more intricate bidding functions, several interesting structural insights already arise from considering relatively simple ones. These examples help illustrate the main ideas and provide intuition for the more general techniques developed later. In particular, we consider the following classes of bidding functions:

\noindent
$\bullet$ \
Class $\cal E$ is comprised of all exponential bidding functions. This class contains, for instance, the $e$-robust bidding function $B(t)= e^t$ of ~\cite{chrobak2008incremental}. This warm-up class induces sequences of constant normalized cost, and reveals a connection between $R$-robustness and the expressions $\wtop,\wbot$, as defined in Section~\ref{sec:preliminaries}. 
%Class $\cal L$ belongs to the intersection of the next two classes $\cal D$ and $\cal I$. 

\noindent
$\bullet$ \ Class $\cal D$ contains all bidding functions $B$ that are exponential within each half-open interval of the form $[i , i + 1)$ for $i \in \ZZ$, and satisfy, in  addition, that $B(t + 1)/B(t)$ is constant. This class captures the bidding sequences studied in \cite{angelopoulos-simon-bidding-2025}. Our analysis allows us to obtain exact, closed-form expressions of the consistence--robustness tradeoffs, which was left open in~\cite{angelopoulos-simon-bidding-2025}.  

\noindent
$\bullet$ \ Class $\cal I$ contains all \textit{continuous} functions that are exponential within each half-open interval of the form $[i , i + 1)$ for $i \in \ZZ$. Our analysis proves, in particular, that if the robustness requirement $R$ is relatively small (i.e., smaller than 4.8), this class induces $R$-robust algorithms that have better consistency than the best known result of~\cite{angelopoulos-simon-bidding-2025}. We also prove that the tradeoff improvement becomes quantifiably more pronounced as $R \to e$.

We refer to the Appendix~\ref{app:classes} for the detailed discussion, with statements and proofs of results. We note that the class $\cal D$ consists of functions that are {\em non-continuous}, whereas the class $\cal I$ is comprised by linear {\em interpolations}. Together, these classes illustrate that achieving Pareto-optimality requires bidding functions with genuinely nontrivial structure.

\section{Upper Bounds} \label{sec:upper}

In this section, we present a bidding algorithm and analyze its performance. In Section~\ref{sec:lower}, we will prove that our algorithm is Pareto-optimal for almost all robustness values $R$. Our algorithm is defined in terms of a bidding function, given in Definition~\ref{def:bidding_function_main}.  Lemma~\ref{lemma:valid_main} will establish that this function is a valid bidding function and Theorems~\ref{thm:upper_main} and~\ref{thm:upper_second} will prove its theoretical guarantees. Last, Corollary~\ref{cor:A_asymptotic} will prove that the algorithm has much better consistency than state of the art algorithms even when $R$ is relatively small.

\begin{definition} \label{def:bidding_function_main}
    Given $R \geq 2 / \ln 2$, define $x := 1/R$, $\mu := R - \wtop$. Define the function $A$ \[
        A(t) = \begin{cases} 
          e^{(t-1) / \wtop} & \text{if } 1 \leq t \\
          1 & \text{if } 0 \leq t \leq 1, \\ 
          p_k(s) &  \text{if } t \leq 0 \text{ for } k=\lceil-t\rceil, s=[-t].
        \end{cases}
    \]
    % whereas for $t < 0$, $A$ is defined according to the rule \[
    %     A(s - k) = p_k (s) \text{ for } s \in [0,1) \text{ and } k \in \mathbb{N}.
    % \]
    
    Here the polynomial $p_k$ is defined inductively as follows:
    \begin{align*}
        p_0 (s) = 1, \quad p_1 (s) = x (\mu - 1 + s), \quad p_{k+1} (s) = p_k (0) - x \int_s^1 p_k (u) \diff u.
    \end{align*}
\end{definition}

We first give some intuition about this complex bidding function. The  function is chosen so that its normalized mass attains the minimum at $t = 0$, where the consistency is also evaluated, from~\eqref{eq:cons-rob-CRb}. For $t \geq 0$, the function induces first a deterministic bid (from $A(t) = 1$ for $t \in [0,1]$) followed by a randomized sequence that obeys an exponential rule as in class $\mathcal{E}$ (from $A(t) = e^{(t-1) / \wtop}$ for $t > 1$). For $t \leq 0$, $A(t)$ must be defined in a careful manner that is not symmetric to the case $t \geq 0$. First, there is a necessary discontinuity at $t = 0^-$. More importantly, for $t < 0$, $A(t)$ must obey a {\em delay} differential equation as demonstrated in Proposition~\ref{prop:delay}. 
The solution to this delay differential equation is expressed in terms of the sequence of polynomials given in Definition~\ref{def:bidding_function_main}. 

\begin{proposition} \label{prop:delay}
Let $R \geq e$ and $S = (-\infty,0) \backslash \ZZ_-$. Let $B$ be a bidding function continuous in $(-\infty,0)$ and differentiable in $S$ that satisfies the following delay differential equation in $S$ \[
    R B'(t) = B(t+1).
\]

Then, for $t < 0$, we have $\CR{B}(t) = R$.
\end{proposition}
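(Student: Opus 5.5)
The plan is to express the normalized mass through the work function and integrate the delay equation. By definition, for every $t$,
\[
B(t)\,\CR{B}(t) = \int_{-\infty}^{t+1} B(x)\diff x .
\]
So the claim $\CR{B}(t)=R$ for $t<0$ is equivalent to
\[
F(t) := \int_{-\infty}^{t+1} B(x)\diff x = R\,B(t) \quad \text{for all } t<0 .
\]
This identity follows by integrating the delay differential equation $RB'(x)=B(x+1)$ from $-\infty$ to $t$. The steps below make that integration rigorous.

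\textbf{Step 1.} Fix $t<0$ and integrate the equation over $(-\infty,t]$. Since $B$ is continuous on $(-\infty,0)$ and differentiable off the countable set $\ZZ_-$, I will apply the fundamental theorem of calculus on each interval $[a,t]$ with $a<t$. For this I need $B$ to be absolutely continuous on $[a,t]$. Within $S$, $B'=B(\cdot+1)/R$ is bounded on compacts, since $B$ is nondecreasing and hence locally bounded. A continuous function whose derivative exists and is bounded off a finite set in $[a,t]$ is Lipschitz there, so $B(t)-B(a)=\int_a^t B'(x)\diff x$ holds exactly. The finite set consists of the integers in $[a,t]$, where only one-sided behaviour is unknown; continuity handles these points.

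\textbf{Step 2.} Substitute the equation:
\[
R\bigl(B(t)-B(a)\bigr) = \int_a^t B(x+1)\diff x = \int_{a+1}^{t+1} B(y)\diff y .
\]
Now let $a\to-\infty$. The bidding-function axiom $\lim_{-\infty}B=0$ kills $B(a)$. Finiteness of $\int_{-\infty}^{t+1}B$, together with monotone convergence since $B\ge 0$, gives
\[
R\,B(t) = \int_{-\infty}^{t+1}B(y)\diff y .
\]
Note that $t+1<1$ may exceed $0$. The integral over $[0,t+1]$ uses $B$ only as an integrand, so no continuity of $B$ at or beyond $0$ is needed; this is exactly why the statement only requires continuity on $(-\infty,0)$.

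\textbf{Step 3.} Divide by $B(t)$, which requires $B(t)>0$ for $t<0$. I would argue this as follows. If $B(t_0)=0$ for some $t_0<0$, then by monotonicity $B\equiv 0$ on $(-\infty,t_0]$. Step 2 then gives $\int_{-\infty}^{t+1}B=0$ for every $t\le t_0$, hence $B\equiv 0$ on $(-\infty,t_0+1]$ (up to a null set, then everywhere below by monotonicity). Iterating pushes the zero set past $0$, and then past every point, contradicting $B\to+\infty$. This iteration stays valid while the points remain below $0$; once $t_0+1\ge 0$, Step 2 applied at $t$ close to $0^-$ gives vanishing of $B$ on $(-\infty,1)$, and the degenerate case has to be excluded or treated as vacuous.

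I expect the main obstacle to be this positivity and degenerate-case handling, together with the justification of the fundamental theorem of calculus across the integer breakpoints. The core identity itself is a one-line integration.
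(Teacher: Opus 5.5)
Your argument is correct and is the paper's proof in integrated form: the paper observes that $F(t+1)-RB(t)$, with $F(t)=\int_{-\infty}^t B(x)\diff x$, is continuous on $(-\infty,0)$, has zero derivative on $S$ and tends to $0$ at $-\infty$. That is exactly your identity $R(B(t)-B(a))=\int_{a+1}^{t+1}B(y)\diff y$ followed by $a\to-\infty$.

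The paper simply takes $B>0$ for granted, and the closing caveat of your Step 3 is the right one. Your earlier claim that the zero set is pushed past every point is false: $B\equiv 0$ on $(-\infty,1)$ with $B(t)=e^t$ on $[1,\infty)$ satisfies all hypotheses, so positivity must be assumed rather than derived.
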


The following lemma shows that $A$ satisfies the properties outlined in \cref{sec:functions}, hence it is a valid bidding function.

\begin{lemma} \label{lemma:valid_main}
$A$ is a valid bidding function.
\end{lemma}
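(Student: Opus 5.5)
The plan is to check the four defining properties of a bidding function in turn: $A\geq 0$ and nondecreasing, $\lim_{t\to-\infty}A(t)=0$, $\lim_{t\to+\infty}A(t)=+\infty$, and $\int_{-\infty}^t A<\infty$. On $[0,\infty)$ everything is immediate. $A$ equals $1$ on $[0,1]$ and $e^{(t-1)/\wtop}$ afterwards, which is continuous at $t=1$, increasing (since $\wtop>0$) and tends to $+\infty$. So all the work is on $t<0$.

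\textbf{Step 1: translate the recurrence into the delay equation.} Differentiating $p_{k+1}(s)=p_k(0)-x\int_s^1 p_k(u)\diff u$ gives $p_{k+1}'(s)=x\,p_k(s)$. Evaluating at $s=1$ gives $p_{k+1}(1)=p_k(0)$. Rewriting these in the variable $t$ (with $k=\lceil -t\rceil$, $s$ the fractional part as in Definition~\ref{def:bidding_function_main}), I expect to obtain:
\begin{itemize}
\item $A$ is continuous on $(-\infty,0)$, because consecutive pieces glue at the negative integers;
\item $R A'(t)=A(t+1)$ on $S=(-\infty,0)\setminus\ZZ_-$, which is exactly the hypothesis of Proposition~\ref{prop:delay}.
\end{itemize}
The orientation of $s$ versus $t$ has to be tracked carefully here, and the sign conventions fixed so that this identity comes out with a plus sign. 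At $t=0$ the left limit is $x\mu=(R-\wtop)/R<1=A(0)$, so the jump at $0$ goes upward and is compatible with monotonicity.

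\textbf{Step 2: positivity and monotonicity.} Suppose $A>0$ on $(-\infty,0)$. Then the delay equation gives $A'(t)=A(t+1)/R>0$ on $S$. Together with continuity on $(-\infty,0)$ and the upward jump at $0$, this makes $A$ nondecreasing on all of $\RR$. So everything reduces to proving $A>0$ for $t<0$, and I expect this to be the main obstacle.

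On the first negative interval positivity is $p_1\geq 0$, i.e.\ $\mu\geq 1$. This should be precisely where $R\geq 2/\ln 2$ enters. For the later intervals, the crude argument is too weak. Bounding $A(t+1)\leq A(-k+1)$ inside the integral yields $a_{k+1}\geq a_k-x a_{k-1}$ for $a_k=A(-k)$, which only guarantees positivity when $x\leq 1/4$, i.e.\ $R\geq 4$.

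What I would try first is a comparison argument for the delay equation. The exponentials $e^{t/w}$ solve $RB'(t)=B(t+1)$ exactly when $R/w=e^{1/w}$, i.e.\ $w\in\{\wbot,\wtop\}$. I would show by induction on $k$ that the ratio $A(t)/e^{t/\wbot}$ stays bounded below by a positive constant on $[-k,-k+1]$. Equivalently, I would show that $A(t)/A(t+1)\geq e^{-1/\wbot}$ for all $t<0$, using the integral form $A(t)=A(t_0)-\frac1R\int_t^{t_0}A(y+1)\diff y$ and the inductive bound on the previous interval. If a direct induction fails, I would fall back on explicit inspection of $p_1,p_2$ to start the induction and use the $e^{t/\wbot}$ sub-solution only from $k\geq 2$ on.

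\textbf{Step 3: limit and integrability.} Once $A$ is positive and nondecreasing, the limit $L=\lim_{t\to-\infty}A(t)\geq 0$ exists. If $L>0$, then $A'(t)=A(t+1)/R\geq L/R$ on $S$. Integrating this over $[t,-1]$ gives $A(-1)-A(t)\geq (L/R)(-1-t)\to\infty$ as $t\to-\infty$, contradicting boundedness of $A$ on $(-\infty,-1]$. Hence $L=0$. For integrability, integrate the delay equation: for $t<0$, $\int_{-\infty}^{t}A(y+1)\diff y=R\,(A(t)-L)=R\,A(t)<\infty$. Thus $\int_{-\infty}^{t+1}A<\infty$ for every $t<0$. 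On $[0,\infty)$, $A$ is locally bounded, so $\int_{-\infty}^{t}A$ is finite for every $t$, and $A$ is a valid bidding function.
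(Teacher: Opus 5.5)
Your Steps 1 and 3 are fine. With the consistent reading $A(s-k)=p_k(s)$ for $s\in[0,1)$, the recurrence gives continuity at the negative integers and $RA'(t)=A(t+1)$. Integrating the delay equation then gives $\int_{-\infty}^{t+1}A=RA(t)$, which is the same telescoping as the paper's identity $\int_0^1p_k=R(q_k-q_{k+1})$.

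The gap is Step 2, and it is the entire content of the lemma. You need $q_k:=p_k(0)>0$ for all $k$, and your comparison argument cannot deliver it, because the invariant $A(t)/A(t+1)\ge e^{-1/\wbot}$ is false.
\begin{itemize}
\item At the jump: since $e^{-1/\wbot}=\wbot/R$, at $t=-1$ the invariant reads $\mu-1\ge\wbot$, i.e.\ $R-\wtop-1\ge\wbot$. At $R=2/\ln 2$ one has $\wtop=1/\ln 2$ and $\wbot=1/(2\ln 2)$, so this says $0.44\ge 0.72$.
\item Away from the jump: at $R=2/\ln 2$, where $e^{-1/\wbot}=1/4$, one finds $q_2/q_1\approx 0.262$ but $p_2(0.4)/p_1(0.4)\approx 0.243$.
\item Further out: again $q_3/q_2\approx 0.252$ but $p_3(0.4)/p_2(0.4)\approx 0.248$. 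So inspecting $p_1,p_2$ and starting the induction later does not help either.
\end{itemize}
The reason is that $A$ has no slack. It is built so that the coefficient of the slow mode $e^{t/\wtop}$ of $RB'(t)=B(t+1)$ vanishes exactly. Hence $A(t)\sim c\,e^{t/\wbot}$, the extremal admissible decay, with oscillating corrections from the complex roots of $e^{\lambda}=R\lambda$. Positivity rests on an exact cancellation that no one-sided comparison with $e^{t/\wbot}$ can detect.

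The missing idea is an exact computation of the $q_k$, which is what the paper does. Set $G(z,s)=\sum_{k\ge 0}p_k(s)z^k$. The relation $p_k'=xp_{k-1}$ gives $\partial_s G=xzG$, hence $G(z,1)=e^{xz}G(z,0)$. The gluing conditions $p_1(1)=x\mu$ and $p_{k+1}(1)=p_k(0)$ give $G(z,1)=1+x\mu z+z(G(z,0)-1)$. Solving and rescaling $z\mapsto e^{\alpha}z$ with $\alpha=1/\wtop$ yields
\[
H(z):=\sum_{k\ge 0}e^{\alpha k}q_k z^k=\frac{1-z}{e^{\alpha z}-z e^{\alpha}},\qquad \frac{1}{H(z)}=1-\sum_{n\ge 1}\tau_n z^n,\quad \tau_n=\sum_{j>n}\frac{\alpha^j}{j!}>0 .
\]
So $h_k:=e^{\alpha k}q_k$ satisfies the renewal recursion $h_0=1$, $h_k=\sum_{n=1}^{k}\tau_n h_{k-n}$, and $q_k>0$ follows at once. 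The factor $1-z$ is exactly the cancelled pole of the slow mode. Your induction via $p_k'=xp_{k-1}$ then gives positivity and monotonicity on every piece, and your Step 3 finishes the proof.

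Your guess about where $R\ge 2/\ln 2$ enters is also off. Since $\mu=R-\wtop=(e^{\alpha}-1)/\alpha>1$ for every $R\ge e$, you always have $p_1>0$. The hypothesis plays no role in validity; it is used only for $\CR{A}(1)=2(R-\wtop)\le R$ in Theorem~\ref{thm:upper_main}.
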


The next theorem establishes our main upper bound. 

\begin{theorem} \label{thm:upper_main}
For $R \geq 2 / \ln 2$, it holds that $\cons(A) = R - \wtop$ and $\rob(A) = R$.
\end{theorem}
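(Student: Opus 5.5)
The plan is to compute $\CR{A}(t)$ explicitly on three ranges, $t<0$, $t\in[0,1]$ and $t\geq 1$, and then read off the infimum and supremum in~\eqref{eq:cons-rob-CRb}. By \cref{lemma:valid_main}, $A$ is a valid bidding function, so all the integrals below are finite and $A>0$. The whole computation hinges on the total mass $M:=\int_{-\infty}^1 A(x)\diff x$, and I would determine $M$ first.

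\emph{Range $t<0$.} I would invoke \cref{prop:delay}, which requires checking that $A$ is continuous on $(-\infty,0)$ and satisfies $R A'(t)=A(t+1)$ off the negative integers.
\begin{itemize}
\item Writing $t=-(k-1)-s$ with $s$ the fractional part, $A(t)$ is given by $p_k$ and $A(t+1)$ by $p_{k-1}$ at the same $s$. Differentiating the recursion $p_{k+1}(s)=p_k(0)-x\int_s^1 p_k$ gives $p_{k+1}'=x\,p_k$, which is exactly the delay equation with $x=1/R$ once the chain rule is applied to the change of variable $s\leftrightarrow t$. I expect fixing the sign and indexing conventions here to be the only fiddly point.
\item For the base case, $p_1$ is linear with slope $x$ and $A(t+1)=1=p_0$ on $(-1,0)$.
\item Continuity at the negative integers follows from $p_{k+1}(1)=p_k(0)$, which is immediate from the recursion.
\end{itemize}
Hence $\CR{A}(t)=R$ for all $t<0$.

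\emph{Computing $M$.} Integrating the delay equation over $(-\infty,0)$ gives $R\bigl(A(0^-)-A(-\infty)\bigr)=\int_{-\infty}^{1}A = M$. Here $A(-\infty)=0$ by validity and $A(0^-)=p_1(1)=x\mu=(R-\wtop)/R$, so $M=R-\wtop$. The same value follows from letting $t\to0^-$ in $\CR{A}(t)=R$, which is a useful consistency check. Since $A(0)=1$, this gives $\CR{A}(0)=M=R-\wtop$.

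\emph{Ranges $t\ge 0$.} I would use the identity $\wtop e^{1/\wtop}=R$, which follows from $W_0(-1/R)=-1/\wtop$.
\begin{itemize}
\item For $t\in[0,1]$: $\CR{A}(t)=M+\int_1^{t+1}e^{(x-1)/\wtop}\diff x=M+\wtop(e^{t/\wtop}-1)$. This is increasing, running from $M$ at $t=0$ to $M+R-\wtop$ at $t=1$.
\item For $t\ge1$: $\CR{A}(t)=\bigl(M+\wtop(e^{t/\wtop}-1)\bigr)e^{-(t-1)/\wtop}=R+(M-\wtop)e^{-(t-1)/\wtop}$.
\end{itemize}
The key inequality is $M\le\wtop$, i.e.\ $R\le 2\wtop$. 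Since $\wtop(R)$ is the larger root of $w e^{1/w}=R$, the inequality $\wtop\ge R/2$ holds exactly when $(R/2)e^{2/R}\le R$, i.e.\ $e^{2/R}\le 2$, i.e.\ $R\ge 2/\ln 2$. This is precisely where the hypothesis enters. Given $M\le\wtop$, $\CR{A}$ increases on $[1,\infty)$ towards $R$ and stays below $R$. It is also continuous at $t=1$, and its values there are $\geq M$ because $R\ge\wtop$.

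\emph{Conclusion.} Combining the three ranges, $\sup_t\CR{A}(t)=R$, attained on $t<0$, and $\inf_t\CR{A}(t)=M=R-\wtop$, attained at $t=0$. Since $R-\wtop\le R$, the value on $t<0$ does not lower the infimum. The main obstacle is the bookkeeping in verifying the delay equation for the polynomial pieces, together with establishing $M\le\wtop$ and its equivalence to $R\ge 2/\ln 2$. The remaining steps are direct integrations.
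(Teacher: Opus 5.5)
Your proposal is correct and follows the paper's argument. You use Proposition~\ref{prop:delay} on $t<0$, then $\CR{A}(0)=F(1)=RA(0^-)=R-\wtop$, then monotonicity on $[0,1]$, and finally the inequality $R-\wtop\le\wtop\iff R\ge 2/\ln 2$. On $[1,\infty)$ you use the explicit formula $\CR{A}(t)=R+(M-\wtop)e^{-(t-1)/\wtop}$ where the paper cites Lemma~\ref{lemma:right_extension}; this has the advantage of spelling out the hypothesis $\int_{-\infty}^1 A\le \wtop A(1)$ and the lower bound $\CR{A}\ge R-\wtop$ there, both of which the paper leaves unchecked.

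One slip needs fixing. With $t=-(k-1)-s$, which is the literal reading of $s=[-t]$ in Definition~\ref{def:bidding_function_main}, you have $\diff s/\diff t=-1$. The chain rule then gives $RA'(t)=-A(t+1)$, each piece of $A$ is decreasing, and $A(0^-)=p_1(0)$. You should instead write $t=-k+s$ with $s=t-\lfloor t\rfloor$. That is the convention your identities $A(0^-)=p_1(1)=x\mu$ and $p_{k+1}(1)=p_k(0)$ already assume, and under it the delay equation holds with the correct sign.
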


The case $R \leq 2 / \ln 2$ is technically more challenging. Leveraging the power of bidding functions, we solved a discretized LP so as to obtain intuition about the optimal bidding function. The solution reveals a similar structure to the case $R \geq 2 / \ln(2)$ with one significant difference: the function does not induce a deterministic bid anymore. This complicates the task of finding a tight upper bound. Nevertheless, we prove the following upper bound, whose proof is along the lines of the proof of \cref{thm:upper_main}, using a similar, but technically more involved definition of the bidding function $A(t)$.

\begin{theorem} \label{thm:upper_second}
Let $e \leq R < 2/\ln 2$, let $ \wtop$ be the largest solution of $w e^{1/w}=R$, and set $ \alpha:=1/\wtop$. Let $y\in(0,1]$ be the unique solution of $\alpha=y+\ln(2-y)$. Then \[
    \cons(A) = \frac{R}{2-y} \text{ and } \rob(A) = R.
\]
\end{theorem}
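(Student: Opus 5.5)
The plan is to follow the architecture of the proof of \cref{thm:upper_main}. I will define a modified bidding function $A$ by gluing three regimes, check validity as in \cref{lemma:valid_main}, and then compute $\CR{A}$ regime by regime. The difference from the case $R\ge 2/\ln 2$ is that the discretized LP suggests the plateau $A\equiv 1$ on $[0,1]$, i.e.\ the deterministic bid, is no longer optimal. I will therefore replace it by a two-piece profile on $[0,1]$ with a free breakpoint, and expect the equation $\alpha=y+\ln(2-y)$ to be exactly the condition that fixes this breakpoint.

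\textbf{Construction.} The three regimes are as follows.
\begin{itemize}
\item For $t\ge 1$, set $A(t)=e^{\alpha(t-1)}$ with $\alpha=1/\wtop$. A direct computation gives $\CR{A}(t)=\frac{1}{A(t)}\int_{-\infty}^{t+1}A$. On the pure exponential tail this equals $e^{\alpha}/\alpha=\wtop e^{1/\wtop}=R$, up to a correction term coming from the mass below $1$. That correction must be controlled; this is one of the constraints on the middle piece.
\item On $[0,1]$, take $A$ constant on an initial segment whose length is governed by $y$. On the remaining part, impose $\CR{A}(t)=R$. Differentiating $R\,A(t)=\int_{-\infty}^{t+1}A$ gives the delay equation $R A'(t)=A(t+1)=e^{\alpha t}$. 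Using $\wtop/R=e^{-\alpha}$, this integrates to $A(t)=e^{\alpha(t-1)}+c$ for a constant $c$, which is now allowed to be nonzero.
\item For $t<0$, define $A$ by the same polynomial recursion as in Definition~\ref{def:bidding_function_main}, solving $R A'(t)=A(t+1)$ backwards with the new initial data $A|_{[0,1]}$. By Proposition~\ref{prop:delay}, this gives $\CR{A}(t)=R$ for all $t<0$, while allowing a jump at $0^-$.
\end{itemize}

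The free parameters are the plateau length, the constant $c$, and the height of the jump at $0$. I will fix them by three requirements: continuity or monotonicity at the breakpoint, the equality $\CR{A}(t)=R$ just to the left of $0$ (which pins down $\int_{-\infty}^0 A$), and minimality of $\CR{A}$ at $t=0$. I expect eliminating these to give precisely $\alpha=y+\ln(2-y)$. The logarithm should arise from integrating the $e^{\alpha(t-1)}+c$ piece against the plateau. Then
\[
\CR{A}(0)=\int_{-\infty}^{1}A \,\Big/\, A(0)=\frac{R}{2-y}.
\]

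\textbf{Verification.} Validity goes as in \cref{lemma:valid_main}: positivity and monotonicity of the backward polynomials for $t<0$, the limits at $\pm\infty$, and integrability. Integrability uses geometric decay of $p_k(0)$, which follows from $R\ge e$ via the characteristic equation $R\lambda=e^{\lambda}$ of the delay equation. For robustness I must show $\CR{A}(t)\le R$ on the plateau and on $[1,\infty)$. On the plateau, $\CR{A}$ is increasing because the numerator grows while the denominator is constant. It therefore suffices to check the right endpoint, where it matches the ODE piece with value $R$. For $t\ge1$, the correction term is a ratio of a fixed mass to a growing exponential, and its sign is settled by the same inequality that gave $R\ge e$. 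For consistency, I show $\inf_t\CR{A}=\CR{A}(0)$. Since $\CR{A}=R$ on $(-\infty,0)$ and on the ODE part, the infimum is attained at the start of the plateau, which gives $\cons(A)=R/(2-y)$ and $\rob(A)=R$.

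\textbf{Main obstacle.} The hardest step is the positivity and monotonicity of the backward delay-equation solution for $t<0$. In \cref{thm:upper_main} this relied on the specific initial data $p_1(s)=x(\mu-1+s)$ together with $R\ge 2/\ln 2$. Here the initial data is the non-constant profile on $[0,1]$, and $R$ may be close to $e$, where the dominant root of $R\lambda=e^{\lambda}$ degenerates. I would first try to show inductively that each backward piece is a positive, increasing combination of the form $a_k+b_k e^{\alpha s}$. Positivity of the coefficients would then be reduced to a single inequality in $y$ that follows from $\alpha=y+\ln(2-y)$ and $y\in(0,1]$.
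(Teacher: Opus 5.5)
There is a genuine gap. The construction is never pinned down, and the requirements you list cannot produce the equation $\alpha=y+\ln(2-y)$.

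For $c=0$ your architecture coincides, up to scaling, with the paper's function: $A\equiv1$ on $[0,L]$ with $L=1-y/\alpha$, $A(t)=e^{\alpha(t-L)}$ for $t\ge L$, and on $(-\infty,0)$ the backward solution of $RA'(t)=A(t+1)$ with $A(0^-)=\mu/R$. A nonzero $c$ creates a jump at $t=1$ and is not what the theorem's $A$ uses.

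Two of your requirements are vacuous:
\begin{itemize}
\item By \cref{prop:delay}, $\CR{A}=R$ on $(-\infty,0)$ holds for every admissible backward solution. It is just the identity $F(1)=RA(0^-)$, where $F(t)=\int_{-\infty}^tA$, so it constrains nothing.
\item $\CR{A}$ is increasing on the plateau and equals $R$ wherever the delay equation is imposed, so its minimum is always at $t=0$.
\end{itemize}
What remains is a one-parameter family indexed by $y=\alpha(1-L)$. Tightness $\CR{A}(L)=R$ forces $\mu=F(1)/A(0)=e^y/\alpha$. Minimizing this over the family sends $y\to0$ and yields consistency $\wtop$. But for $R<2/\ln 2$ we have $e^\alpha>2$, hence $\wtop<R-\wtop$, which contradicts \cref{thm:lower_bound}. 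So the members with small $y$ are not valid bidding functions.

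The binding constraint, which you never state, is positivity of the backward solution, and this is where the logarithm comes from. With $q_k=p_k(0)$, the paper shows $\sum_{k\ge1}e^{\alpha k}q_kz^k=J(z)/D(z)$, where $D(z)=e^{\alpha z}-ze^{\alpha}$ and $J(z)=\bigl(e^{yz}-(1-z)e^{\alpha z}-z^2e^y\bigr)/(1-z)$. For $R>e$, $D$ has a simple zero at $z=1$ and no other zero in the closed unit disk, while $J(1)=(2-y)e^y-e^\alpha$. Hence $e^{\alpha k}q_k\to J(1)/\bigl(e^\alpha(1-\alpha)\bigr)$, and positivity forces $(2-y)e^y\ge e^\alpha$. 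Both this quantity and $\mu$ increase in $y$, so the best choice is equality, i.e.\ $\alpha=y+\ln(2-y)$. Then $R=e^\alpha/\alpha=(2-y)\mu$, which gives the value $R/(2-y)$.

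Even once the construction is fixed, positivity is the technical core of the proof, and your proposed route to it fails. The backward pieces are not of the form $a_k+b_ke^{\alpha s}$: on $[0,L]$, $p_k$ is a polynomial of degree $k$, and already $p_1(s)=x(1/\alpha-L+s)$ there.

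The paper argues on coefficients instead.
\begin{itemize}
\item Since $(1-z)/D(z)=1/\bigl(1-\sum_{n\ge1}\tau_nz^n\bigr)$ with $\tau_n=\sum_{j>n}\alpha^j/j!>0$, the coefficients $i_n$ of $1/D$ are positive and nondecreasing.
\item Write $J(z)=\theta z-\sum_{n\ge2}\sigma_nz^n$ with $\theta=1+y-\alpha>0$ and all $\sigma_n>0$. The defining equation of $y$ is exactly $\sum_{n\ge2}\sigma_n=\theta$, i.e.\ $J(1)=0$.
\item This yields $e^{\alpha k}q_k=\sum_{n=2}^k\sigma_n(i_{k-1}-i_{k-n})+i_{k-1}\sum_{n>k}\sigma_n>0$.
\end{itemize}
Positivity and monotonicity of every $p_k$ then follow inductively from $p_{k+1}'=xp_k$. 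Integrability follows from the telescoping identity $\int_0^1p_k=R(q_k-q_{k+1})$, not from a decay rate.

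Finally, robustness for $t\ge 1$ is not governed by $R\ge e$. In your normalization it is the condition $F(2)\le RA(1)$, i.e.\ $c\le0$. In the paper's construction, $\CR{A}\equiv R$ on $[L,\infty)$ holds precisely because $\mu=e^y/\alpha$.
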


We also obtain the following corollary to \cref{thm:upper_second} that bounds the consistency of $A$ when $R$ is relatively close to the optimal optimized competitive ratio $e$. 

\begin{corollary} \label{cor:A_asymptotic}
    For the bidding function $A(t)$ used in the proof of \cref{thm:upper_second}, if $R = e + \varepsilon$ and $\varepsilon \to 0$, then \[
        \cons(A) \leq e - \Omega(\varepsilon^{1/4}).
    \]
\end{corollary}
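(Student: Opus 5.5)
The plan is to take the closed form $\cons(A) = R/(2-y)$ from \cref{thm:upper_second} as given and carry out an asymptotic expansion in $\varepsilon$. This goes through two nested square-root singularities: first $R\mapsto \alpha$, then $\alpha\mapsto y$. Each contributes a square root, which is exactly where the exponent $1/4$ comes from.

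\textbf{Step 1: from $R$ to $\alpha$.} Let $f(w) = w e^{1/w}$. We have $f'(w) = e^{1/w}(1-1/w)$, so $f'(1)=0$, and a short computation gives $f''(1) = e$. Hence $f(1) = e$ is the minimum of $f$ on $(0,\infty)$. Write $\wtop = 1+\delta$ with $\delta>0$, using that $\wtop$ is the largest solution. Taylor's theorem gives $e+\varepsilon = f(1+\delta) = e + \tfrac{e}{2}\delta^2 + O(\delta^3)$, so
\[
\delta = \sqrt{2\varepsilon/e}\,(1+o(1)).
\]
Consequently $\alpha = 1/\wtop = 1 - \delta + O(\delta^2)$, that is, $1-\alpha = \Theta(\varepsilon^{1/2})$. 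Only the lower bound $1-\alpha \geq c\,\varepsilon^{1/2}$ is really needed. It can be made explicit by bounding $f''\le C$ on $[1,2]$, which gives $\varepsilon \le \tfrac{C}{2}\delta^2$, together with $1/(1+\delta)\le 1-\delta/2$ for small $\delta$.

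\textbf{Step 2: from $\alpha$ to $y$.} Let $g(y) = y + \ln(2-y)$ on $(0,1]$. Then $g'(y) = 1 - 1/(2-y) \geq 0$, with equality only at $y=1$. So $g$ is increasing, $g(1)=1$, and for $\alpha<1$ the unique solution satisfies $y<1$. Put $z = 1-y$, so that $g(y) = 1 - z + \ln(1+z)$. The goal is a lower bound on $z$, which needs an upper bound on $1-g$. Using $\ln(1+z) \geq z - z^2/2$ gives
\[
1-\alpha = z - \ln(1+z) \le z^2/2 .
\]
Hence $z \geq \sqrt{2(1-\alpha)} \geq c'\varepsilon^{1/4}$. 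The matching upper bound $z = O(\varepsilon^{1/4})$ follows from $\ln(1+z)\le z - z^2/2 + z^3/3$, and it shows $z\to 0$, although it is not needed for the corollary.

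\textbf{Step 3: conclude.} We have $\cons(A) = (e+\varepsilon)/(1+z)$. For small $z$, $1/(1+z) \le 1 - z/2$, so
\[
\cons(A) \le (e+\varepsilon)(1 - z/2) \le e + \varepsilon - \tfrac{e}{2}c'\varepsilon^{1/4}.
\]
Since $\varepsilon = o(\varepsilon^{1/4})$, this is $e - \Omega(\varepsilon^{1/4})$.

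\textbf{Main obstacle.} The only delicate point is keeping the inequalities in the correct direction through both inversions. For $\delta$ we need a lower bound in terms of $\varepsilon$, which uses an upper bound on $f''$ near $1$. For $z$ we need a lower bound, which uses a lower bound on $\ln(1+z)$. Choosing the elementary inequalities stated above, rather than $O(\cdot)$ expansions, handles both cleanly. One should also confirm that $\varepsilon$ small forces $\delta$ small, so that these local bounds apply. This holds because $f$ is strictly increasing on $[1,\infty)$ and $f(w)\to\infty$.
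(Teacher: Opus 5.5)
Your proposal is correct and follows the same route as the paper: invert $\wtop e^{1/\wtop}=R$ near $w=1$ to get $\wtop-1=\Theta(\varepsilon^{1/2})$, then invert $1/\wtop=y+\ln(2-y)$ near $y=1$ to get $1-y=\Theta(\varepsilon^{1/4})$, and substitute into $R/(2-y)$. The only difference is that you replace the paper's two-sided asymptotic expansions with explicit one-sided inequalities (the Taylor remainder with a bound on $f''$, $\ln(1+z)\ge z-z^2/2$, and $1/(1+z)\le 1-z/2$). This suffices for the $\Omega(\varepsilon^{1/4})$ claim, but unlike the paper it does not identify the exact leading term $-2^{3/4}e^{3/4}(R-e)^{1/4}$.
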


We note that Corollary~\ref{cor:A_asymptotic} significantly improves upon the upper bound of \cite{angelopoulos-simon-bidding-2025} which showed that $C = R - \Omega(\varepsilon)$.

\section{Lower Bound} \label{sec:lower}

In this section, we derive a near-tight lower bound using a linear programming (LP) approach. We define a family of LPs parameterized by $R \geq e$ and integers $a, N, M \in \ZZ_+$. The variables, denoted by $x_k$ for $k \in [-N, M]$, represent the values of the bidding function at discrete reference points. The objective is to minimize the normalized mass at the reference point $t = 0$, which corresponds to the consistency. At the same time, the function is required to have normalized mass at most $R$ at every reference point that is an integer multiple of $1/a$ in the interval $[-N/a, M/a]$; these constraints encode the $R$-robustness requirement.

While LP-based approaches have previously been used to derive lower bounds for randomized online bidding~\cite{chrobak2008incremental,shin2025improved}, our formulation based on bidding functions is fundamentally different. Our LPs directly capture a discretized bidding function using a mesh of granularity $1/a$, leading to a simple and transparent representation of both consistency and robustness constraints. In contrast, the LPs of~\cite{chrobak2008incremental,shin2025improved} are based on discretized bidding sequences and use variables of the form $x_{i,j}$, representing the probability that bid $b_i$ precedes bid $b_j$ in the generated sequence. This sequence-based formulation leads to significantly more involved cost expressions, which limits the strength and tractability of the resulting bounds.

%The advantage of this formulation is that it describes directly the object that controls the cost ratios. The LP of Chrobak et al. is written at the level of the randomized bidding sequence itself: its variables $x_{i,k}$ encode the probability that two bids $b_i$ and $b_k$ are consecutive. This is a natural way to represent arbitrary randomized strategies, but the resulting variables do not directly correspond to the robustness and consistency quantities. They encode the internal structure of the sequence, and the cost constraints are then expressed indirectly through these transition probabilities. By contrast, once bidding functions have been introduced, the relevant information is simply the mass accumulated before a point, normalized by the value of the function at that point.

Formally, we define the linear program
$P_{a,N,M}^R$ with constraints labeled $\lambda,\gamma_k,\beta_k,\theta$:
\begin{align}
\min \quad &C \notag\\
\text{s.t.}\quad
&x_0\ge 1,\tag{$\lambda$}\\
&x_k\le x_{k+1}
& (-N\le k\le M-1) \tag{$\gamma_k$}\\
&\frac1a\sum_{j=-N}^{\min(M,k+a-1)}x_j\le R x_k
&(-N\le k\le M) \tag{$\beta_k$} \label{eq:beta-k}\\
&\frac1a\sum_{j=-N}^{a-1}x_j\le C \tag{$\theta$},\\
&x_k\ge0& (-N\le k\le M) \notag
\end{align}

The following lemma shows that $P_{a,N,M}^R$ is a lower-bound family of LPs.

\begin{lemma} \label{lemma:lower_bound_primal}
For every $ a,N,M \geq 1$ and $R \geq e$, the optimal objective value  of $P_{a,N,M}^R$ is a lower bound on the consistency of every $ R$-robust monotone bidding sequence. %In particular, if we denote this objective by $O_{a,N,M}^R$, we have:
%\[C^*(R) \ge O_{a,N,M}^R, \] where $C^*$ is the optimal consistency of every $R$ robust bidding function.
\end{lemma}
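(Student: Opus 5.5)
Take an arbitrary $R$-robust randomized bidding sequence $X$ with consistency $C'$, and show that the LP $P^R_{a,N,M}$ has a feasible point with objective at most $C'+\varepsilon$ for every $\varepsilon>0$; this gives the claimed lower bound. By \cref{thm:equivalence-seq-fct} there is a bidding function $B$ with $\rob(B)\le R$ and $\cons(B)\le C'$. By \eqref{eq:cons-rob-CRb}, for any $\varepsilon>0$ there is a reference point $t_0$ with $\CR{B}(t_0)\le C'+\varepsilon$.

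\textbf{Step 1: normalize.} Shift the reference points so that $t_0=0$, i.e.\ replace $B$ by $t\mapsto B(t+t_0)$. Shifting preserves $\CR{B}$ up to translation of its argument, so it does not change the consistency or robustness. Then rescale $B$ by a positive constant so that $B(0)=1$; $\CR{B}$ is invariant under this scaling.

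\textbf{Step 2: the discrete candidate.} For $-N\le k\le M$ set
\[
x_k := \max\bigl(B(k/a),\,B((k-1)/a)\bigr)
\]
or, more simply, $x_k:=B(k/a)$, and check which choice makes the sums valid. These values are nonnegative, nondecreasing (constraints $\gamma_k$), and the choice $x_k=B(k/a)$ gives $x_0=1$ (constraint $\lambda$).

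\textbf{Step 3: Riemann sums.} The key point is that, for a nondecreasing $B$, the left Riemann sum $\frac1a\sum_{j\le m} B(j/a)$ is at most $\int_{-\infty}^{(m+1)/a}B$, because $B(j/a)\le B(x)$ for $x\in[j/a,(j+1)/a]$. Truncating the sum at $-N$ only decreases it further, since all terms are nonnegative.
\begin{itemize}
\item Constraint $\beta_k$: take $m=\min(M,k+a-1)$. Then $(m+1)/a\le k/a+1$, so
\[
\frac1a\sum_{j=-N}^{m}x_j\le\int_{-\infty}^{k/a+1}B = \CR{B}(k/a)\,B(k/a)\le R\,x_k .
\]
\item Constraint $\theta$: the same argument with $m=a-1$ and $k=0$ gives
\[
\frac1a\sum_{j=-N}^{a-1}x_j\le \CR{B}(0)\le C'+\varepsilon ,
\]
so setting $C=C'+\varepsilon$ makes the point feasible.
\end{itemize}
Hence the LP optimum is at most $C'+\varepsilon$. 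Letting $\varepsilon\to0$ shows that the LP optimum is at most $C'$.

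\textbf{Main obstacle.} The delicate part is the direction of the Riemann inequality together with the infimum in the consistency. The infimum may not be attained, which the $\varepsilon$-argument handles. We also need $B(0)>0$, so that the rescaling is possible: if $B(t_0)=0$, then $\CR{B}(t_0)$ is undefined or infinite, so such points can be excluded, or we can perturb $t_0$ slightly. The robustness constraint is only imposed at finitely many mesh points, which is a relaxation, so feasibility is automatic. No fine properties of $B$ are needed beyond monotonicity and finiteness of $\int_{-\infty}^t B$.
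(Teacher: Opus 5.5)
Your proof is correct and follows the paper's argument. You shift and rescale the bidding function from Theorem~\ref{thm:equivalence-seq-fct} so that a near-infimal normalized mass sits at $0$, discretize on the mesh $\frac1a\ZZ$, and verify every constraint using only monotonicity of $B$ and $\int_{-\infty}^{k/a+1}B\le R\,B(k/a)$. The only difference is cosmetic: the paper takes cell averages $x_k=a\int_{k/a}^{(k+1)/a}B$, so the sums are exact integrals and the slack comes from $x_k\ge B(k/a)$ (normalizing $x_0=1$, hence $B(0)\le 1$), whereas you take point values $x_k=B(k/a)$ and put the slack into the left Riemann sum (your $\max$ alternative is the same choice, by monotonicity).
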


Let us denote by $D_{a,N,M}^R$ the dual of $P_{a,N,M}^R$, which is shown below. Note that the dual LP does not have a variable that corresponds to the primal constraint labeled  $(\theta)$, since it can be assumed to be $1$ by scaling. From weak duality, the objective of any feasible solution of the dual LP  is a lower bound on the optimal objective of $P_{a,N,M}^R$, hence also on the optimal consistency of $R$-robust bidding functions. 
\begin{align}
\max\quad &\lambda \notag\\
\text{s.t.} &\text{ for all} -N\leq k \leq M\notag\\
&
\frac{1}{a} \mathbf 1_{\{k\le a-1\}} + \frac1a\sum_{j = \max(k+1-a,-N)}^M \beta_j + \gamma_k-\gamma_{k-1} \ge \lambda \mathbf 1_{\{k=0\}}+R\beta_k, \tag{$x_k$} \\
& \lambda , \beta_k, \gamma_k \ge 0,\notag\\
& \gamma_M = \gamma_{-N-1} = 0 \notag.
\end{align}

\begin{proposition} \label{prop:weak_duality}
Let $R \geq e$ and $a,N,M \geq 1$. The objective value of every feasible solution to the dual LP  $D_{a,N,M}^R$ is a lower bound on the consistency of an $R$-robust bidding sequences.
\end{proposition}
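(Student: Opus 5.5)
The plan is to chain weak LP duality with \cref{lemma:lower_bound_primal}. Fix $R\ge e$ and $a,N,M\ge1$, and let $(\lambda,\beta,\gamma)$ be any feasible solution of $D^R_{a,N,M}$.

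First I will check that $D^R_{a,N,M}$ really is the dual of $P^R_{a,N,M}$. I would write the primal in standard minimization form with variables $x_{-N},\dots,x_M$ and $C$, all nonnegative; $C$ is free in principle, but constraint $(\theta)$ together with $x\ge0$ forces $C\ge 0$ anyway. The constraints become:
\[
x_0\ge1 \quad (\lambda), \qquad x_{k+1}-x_k\ge0 \quad (\gamma_k), \qquad R x_k-\tfrac1a\textstyle\sum_{j\le \min(M,k+a-1)}x_j\ge0 \quad (\beta_k), \qquad C-\tfrac1a\textstyle\sum_{j\le a-1}x_j\ge0 \quad (\theta).
\]
The dual has nonnegative multipliers $\lambda,\gamma_k,\beta_k,\theta$. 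The column of $C$ gives $\theta\le 1$. For the column of $x_k$, I collect coefficients:
\begin{itemize}
\item $\lambda\mathbf 1_{\{k=0\}}$ from $(\lambda)$;
\item $\gamma_{k-1}-\gamma_k$ from the monotonicity constraints, with the boundary conventions $\gamma_M=\gamma_{-N-1}=0$;
\item $R\beta_k-\frac1a\sum_{j:\,k\le j+a-1}\beta_j$ from the robustness constraints, i.e.\ a sum over $j\ge k+1-a$, $j\ge -N$;
\item $-\frac{\theta}{a}\mathbf 1_{\{k\le a-1\}}$ from $(\theta)$.
\end{itemize}
Requiring this total to be $\le 0$ (the objective coefficient of $x_k$) gives exactly the $(x_k)$ constraint with $\theta=1$. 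The dual objective is $\lambda$.

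Next I justify setting $\theta=1$. Since the objective coefficient of $C$ is $1>0$, any dual solution with $\theta<1$ can have $\theta$ raised to $1$. This only increases the left side of the $(x_k)$ inequalities, so feasibility is preserved and the objective $\lambda$ is unchanged. Conversely, a feasible solution of $D^R_{a,N,M}$ as displayed is a feasible dual solution with $\theta=1$.

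Weak duality then holds by the elementary computation. For any primal-feasible $(x,C)$, I multiply each primal constraint by its nonnegative multiplier and sum. By the column inequalities this gives $\lambda\le\lambda x_0\le \dots\le C$. Hence $\lambda\le \mathrm{OPT}(P^R_{a,N,M})$. If the primal is infeasible the claim is vacuous; if it is unbounded below it would contradict weak duality, and in any case it cannot happen since $C\ge0$.

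Finally, I invoke \cref{lemma:lower_bound_primal}, which says $\mathrm{OPT}(P^R_{a,N,M})$ is at most the consistency of every $R$-robust monotone bidding sequence. Through \cref{thm:equivalence-seq-fct} this extends to every $R$-robust randomized sequence, since each such sequence is dominated by a bidding function whose discretization is primal feasible. Combining the two inequalities gives the statement.

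The only delicate step is the bookkeeping of the index ranges in the $\beta$ sum and the $\gamma$ boundary terms. I would double-check these by writing the transpose explicitly for the edge cases $k=-N$, $k=M$ and $k=a-1$.
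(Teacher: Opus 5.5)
Your proposal is correct and follows the paper's argument: weak LP duality between $D^R_{a,N,M}$ and $P^R_{a,N,M}$, with the $(\theta)$ multiplier normalized to $1$, combined with \cref{lemma:lower_bound_primal}, and passed from bidding functions to sequences via \cref{thm:equivalence-seq-fct}. Your explicit transposition of the constraint matrix and your justification of $\theta=1$ only spell out details that the paper leaves implicit in its one-line appeal to weak duality.
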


We can now state the main result of this section.

\begin{theorem} \label{thm:lower_bound}
Let $R > e$, and let $ \wtop>1$ be the largest solution of $\wtop e^{1/\wtop}=R$. Then for every $\epsilon > 0$, every $R$-robust bidding sequence has consistency at least \[
    R-\wtop-\varepsilon.
\]
\end{theorem}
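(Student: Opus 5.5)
The plan is to combine Proposition~\ref{prop:weak_duality} with an explicit family of dual-feasible solutions. For fixed $R>e$ and $\varepsilon>0$, I would exhibit parameters $a,N,M$ and a feasible point $(\lambda,\beta,\gamma)$ of $D^R_{a,N,M}$ with $\lambda\ge R-\wtop-\varepsilon$. Weak duality (Proposition~\ref{prop:weak_duality}) then gives the bound for every $R$-robust sequence, and \cref{thm:equivalence-seq-fct} lets us speak of sequences or bidding functions interchangeably. Rather than guess the discrete multipliers directly, I would first solve a \emph{continuous} version of the dual and then discretize it with mesh $1/a$.

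\textbf{Continuous dual.} Write $\beta_k\approx \frac1a\beta(k/a)$ for a density $\beta$ on $[-N/a,M/a]$, and $\gamma_k\approx g(k/a)$. The constraint $(x_k)$ becomes, at reference point $t$,
\[
\mathbf 1_{\{t\le 1\}}+\int_{t-1}^{\infty}\beta(s)\diff s+ g'(t)\;\ge\;\lambda\,\delta_0(t)+R\beta(t).
\]
The term $g$ must vanish at both ends, so it can only move slack from some reference points to others. The point mass $\lambda\delta_0$ must therefore be paid for by a jump of $g$ at $0$, funded by slack accumulated elsewhere.

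For $t>1$ I would take $\beta(t)=c\,e^{-t/\wtop}$. Then $\int_{t-1}^\infty\beta=\wtop e^{1/\wtop}\beta(t)=R\beta(t)$, so these constraints hold with equality. This is complementary slackness against the exponential tail $e^{(t-1)/\wtop}$ of the function $A$, and it explains why the largest root $\wtop$ appears. On $t<0$ the primal optimum satisfies the delay equation of Proposition~\ref{prop:delay}. By complementary slackness, $\beta$ there should solve the adjoint delay equation
\[
R\beta'(t)=-\beta(t+1)\quad\text{or similar,}
\]
so that the constraints are tight. This mirrors the polynomial recursion $p_k$ in reverse, ideally with a decay that lets us truncate at $-N/a$ at negligible cost. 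On $[0,1]$ the constraints are slack, and that slack, transported by $g$, funds $\lambda$.

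A first attempt with $\beta$ supported only on $t\ge0$ gives
\[
\lambda=1+c\,\wtop\,(1+\wtop-R),
\]
which is at best $1$. So the negative-side multipliers are essential: they must contribute the mass that lifts $\lambda$ up to $R-\wtop$. I would fix the free constants (the scale $c$ and the initial data of the adjoint delay equation on $[-1,0]$) by requiring the total slack transfer to equal exactly $R-\wtop$. I would then check the answer against the primal value $\cons(A)=R-\wtop$ from \cref{thm:upper_main}; matching the two values certifies that the multipliers are the right ones.

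\textbf{Discretization and truncation.} Next I would define $\beta_k,\gamma_k$ by sampling (or cell-averaging) the continuous solution, with $\gamma_{-N-1}=\gamma_M=0$. I would then verify $(x_k)$ up to an $O(1/a)$ error, absorbing that error by scaling $\beta$ down by a factor $1-O(1/a)$. The boundary effects at $k=M$ (the sum truncated at $M$) and at $k=-N$ cost $O(e^{-M/(a\wtop)})$ and a similar tail term. These vanish as $M,N\to\infty$ with $a$ fixed, after which we send $a\to\infty$, so $\lambda\to R-\wtop$.

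\textbf{Main obstacle.} I expect the hardest step to be solving the adjoint delay system on $t<0$ explicitly, and proving its solution stays nonnegative together with the monotonicity constraints ($\gamma\ge0$). This is where the threshold $R\ge 2/\ln2$ might surface. The statement, however, claims the bound for all $R>e$, which suggests the dual side is easier than the primal and that a sign condition holds throughout. If an explicit solution is unwieldy, I would try a simpler truncated ansatz: $\beta$ supported on $[-1,\infty)$ and piecewise exponential, optimizing its parameters. I would accept a looser $\varepsilon$ only if exactness fails.
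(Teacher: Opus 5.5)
Your framework matches the paper's. It uses weak duality for $D^R_{a,N,M}$ with an explicit family of dual solutions, slack on $[0,1]$ carried by $\gamma$ to pay for $\lambda$ at $k=0$, and negative-side multipliers as what lifts $\lambda$ above $1$. But the proposal stops exactly where the proof lives, and your expectations for that step are wrong.

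\textbf{The correct continuous multipliers grow, so your discretization step fails.} The positive-side ansatz is unnecessary: the paper takes $M=a-1$ and $\beta_k=0$ for all $k\ge0$, i.e.\ $c=0$. On $t<0$ it takes $g\equiv0$ and makes your own continuous constraint tight, $R\beta(t)=1+\int_{t-1}^{0}\beta(s)\diff s$. This means $R\beta'(t)=-\beta(t-1)$, with shift $t-1$, not $t+1$. The relevant solutions are $\beta(t)=\theta e^{-\theta t}$ with $e^{\theta}=R\theta$, giving $\lambda=\int_0^1\bigl(1+\int_{t-1}^{0}\beta(s)\diff s\bigr)\diff t=(e^{\theta}-1)/\theta$. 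These multipliers \emph{grow} like $e^{\theta|t|}$ as $t\to-\infty$, so there is no decaying tail to cut off cheaply. Sampling or cell-averaging $\theta e^{-\theta t}$ on the mesh gives ratio $e^{\theta/a}$ between consecutive multipliers. The exact discrete rate is instead a root $\rho_a<e^{\theta/a}$. The violation of the constraint at $k=-n$ is therefore of order $e^{\theta n/a}/a^{2}$. Scaling $\beta$ by $1-\delta$ frees only about $\delta/a$ per constraint, so no $N$-independent factor $1-O(1/a)$ restores feasibility.

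\textbf{Truncation is where $\wtop$ is selected.} The equation $e^{\theta}=R\theta$ has two roots, $1/\wtop$ and $1/\wbot$. Take $\theta=1/\wbot$, the same $\beta$, and $g(t)=\int_t^1e^{\theta(1-s)}\diff s$ on $(0,1]$ ($g=0$ elsewhere). This satisfies every constraint of your infinite-horizon continuous dual, with value $R-\wbot$. For $R\ge 2/\ln 2$ this exceeds the consistency $R-\wtop$ achieved by $A$ in Theorem~\ref{thm:upper_main} (roughly $3.54$ versus $1.20$ at $R=4$). So a solution of the continuous dual is not a certificate. Fixing constants so the value matches $R-\wtop$ certifies nothing, and that match with $\cons(A)$ exists only for $R\ge 2/\ln 2$ anyway.

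\textbf{What the paper does instead.} It builds an exact finite-$N$ dual, setting $\beta_{-n}=b^{(N)}_n$. Here $b^{(N)}$ is the positive solution of $aR\,b_n=1+\sum_{m=1}^{\min(N,n+a-1)}b_m$, obtained by monotone iteration from $0$. This solution is dominated by $(\rho_a-1)\rho_a^{n-1}$, where $\rho_a\in(1,1+1/a)$ is the smallest root of $X^a-aR(X-1)$. Simply cutting off that geometric sequence at $n=N$ would violate the last $a-1$ constraints. A uniform rescaling to repair this would need a factor at most $1/\bigl(1+\rho_a^{N}(\rho_a^{a-1}-1)\bigr)\to0$. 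The paper then proves $b^{(N)}_n\to(\rho_a-1)\rho_a^{n-1}$, using that $\rho_a$ is the unique root of minimal modulus (Lemmas~\ref{lemma:polynomial_root} and~\ref{lemma:bounded_linear_seq}). Since $a\ln\rho_a<1$, the limit is $\theta=1/\wtop$ rather than $1/\wbot$. Hence $\lambda_{a,N}\to\frac1a\sum_{n=0}^{a-1}\rho_a^{n}\to R-\wtop$. Positivity is automatic in this construction, and no $2/\ln 2$ threshold appears.
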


We conclude this section with a note that a computational solution to the lower-bound LP provides an improved lower bound in the robustness region $[e,2/\ln(2)]$; refer to  Fig~\ref{fig:pareto}.

%%%%%%%%%%%%%

%%%%%%%%%%%%%%%%

\section{Numerical Evaluation}
\label{sec:numerical}

% \todo{SA:we need to make the axis and the legends readable. Flip the figs.}
We present a numerical evaluation of the performance of our algorithms.
%\noindent{\bf Experimental setup.}
We consider the following setup. The oracle outputs a prediction $\hat{u}$, and the true threshold $u$ is generated at random according to $u=\hat{u}\cdot \exp(\eta)$, where $\eta \sim \mathcal{N}(0,\sigma ^2)$ represents a Gaussian noise. The variance $\sigma^2$ thus captures the prediction error of the oracle. Note that the median of $u$ is equal to $\hat{u}$. 

%\noindent{\bf Compared algorithms.}
We compare the performance of the following algorithms: the best algorithms from the class ${\cal D}$ and ${\cal I}$, defined in \cref{sec:functions}, as well as the algorithm $A$, which we analyzed theoretically in Theorems~\ref{thm:upper_main} and~\ref{thm:upper_second}. 
%In addition, we consider the $e$-competitive algorithm of~\cite{chrobak2008incremental} which has the same normalized cost at all reference points. 
We denote these three algorithms by {\sc D}, {\sc I}, and {\sc A}, respectively. Recall that algorithm {\sc D} is the state-of-the-art given in~\cite{angelopoulos-simon-bidding-2025}. 

%\noindent{\bf Performance plots.} 
\cref{fig:normalized_cost_fct_of_sigma2} depicts the expected normalized cost of the algorithms, as a function of the variance $\sigma^2$, for robustness $R=4$. Additional plots for other robustness values can be found in Appendix~\ref{app:numerical}. The crossed points depict the consistency attained by the algorithms.  

%\noindent{\bf Discussion.} 
%The plots show that both algorithms I and A achieve much better expected  normalized cost than algorithm D, as long as the prediction error is reasonably small. In particular, algorithm A has the smallest normalized cost as long as
% $\sigma^2\leq 1.57$, for $R=3$.$\sigma^2 \leq 1.19$, namely for a large range of prediction error.In addition, both algorithms A and I exhibit better {\em smoothness}, namely performance degradation as a function of the prediction error, than algorithm D, which approaches its worst-case normalized cost even for small values of the prediction error. 

The plots show that both algorithms {\sc I} and {\sc A} achieve significantly better expected normalized cost than algorithm {\sc D}, as long as the prediction error remains reasonably small. In particular, algorithm {\sc A} attains the smallest normalized cost whenever $\sigma^2 \leq 1.19$, covering a broad range of prediction errors. Moreover, both {\sc A} and {\sc I} exhibit substantially better \emph{smoothness}, that is, a more gradual performance degradation as the prediction error increases, compared to {\sc D}, whose normalized cost approaches its worst-case value even for relatively small prediction errors.

\begin{figure}[t!]
\centering
% \begin{tabular}{cc}
% \includegraphics[width=0.55\textwidth]{images/smoothness_R3.pdf} &
% \includegraphics[width=0.55\textwidth]{images/smoothness_R4.pdf} \\
% (a) R=4 & (b) R=3
% \end{tabular}
\includegraphics[width=.7\textwidth]{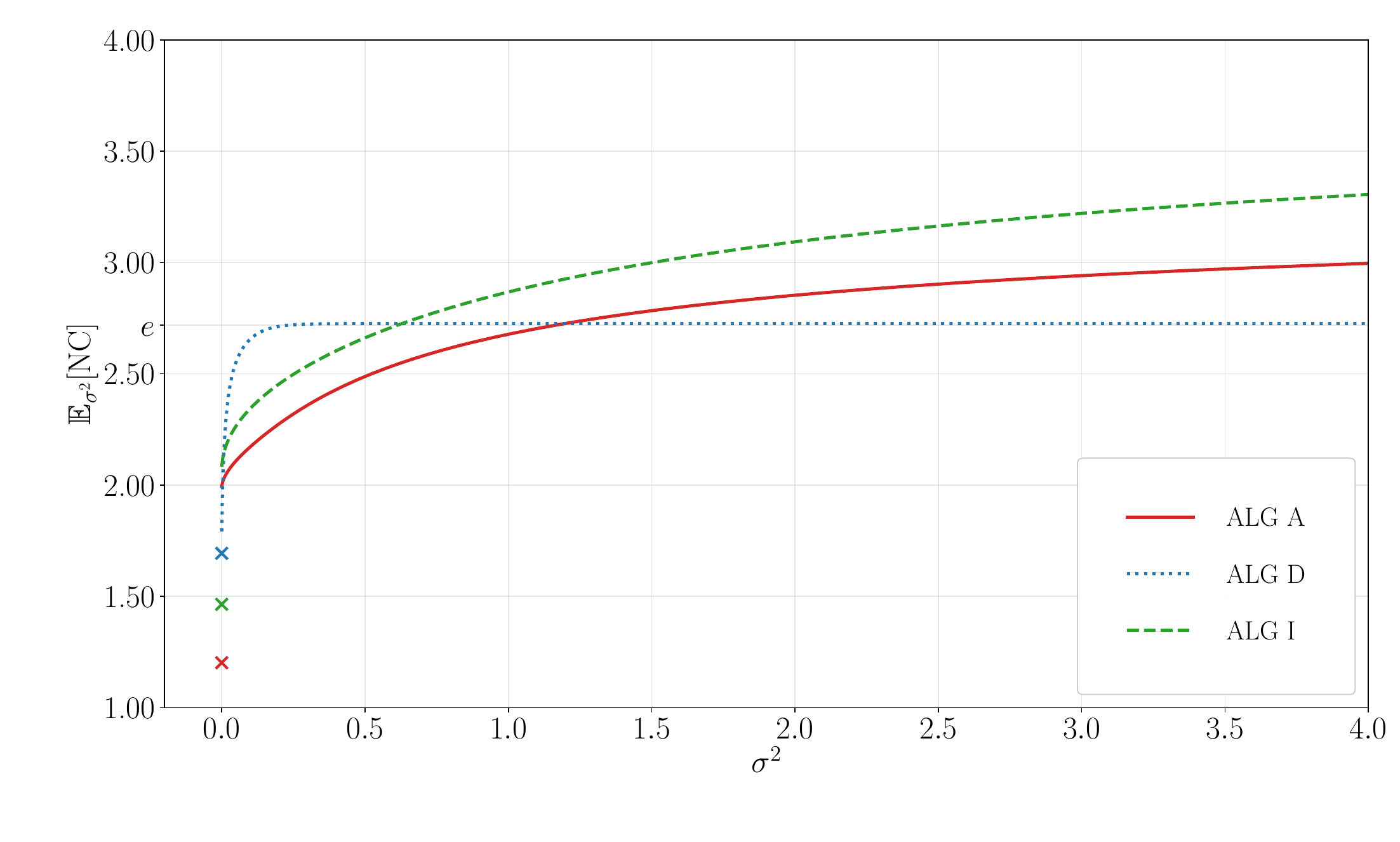}
\caption{Expected normalized cost as function of $\sigma^2 \in [0,4]$.}
\label{fig:normalized_cost_fct_of_sigma2}
\end{figure}

 %For $R=3$ and $\sigma^2 = 1.57$, our algorithm achieves the best expected nc where  \( u^\star \in  \left[0.127\,\hat u,\,7.854 \hat u\right]\) w.p. $0.9$. 

\input{rewriting}

\section{Conclusion}
\label{sec:conclusions}

We presented algorithms that provably achieve a near-optimal consistency--robustness tradeoff for randomized learning-augmented bidding, effectively closing the gap in previous work. A very small gap for $R \leq 2/\ln 2$ remains, which we believe can be further narrowed by allowing more resources to the 
computational lower bound. Central to our analysis is a novel function-based framework that maps bidding sequences to bidding functions and enables both the positive and impossibility results.
Beyond online bidding, this methodology extends naturally to closely related problems such as searching for a hidden target on the infinite line~\cite{beck:yet.more}, as shown in~\cite{angelopoulos-simon-bidding-2025}. We also aim to extend our approach to more challenging sequencing problems such as {\em contract scheduling}~\cite{RZ.1991.composing}, where the optimal randomized competitive ratio remains unknown even in the classical setting without predictions, making it a natural candidate for function-based analysis techniques.

% xtof: il ne faut pas décommenter la section ack.
% elle est cachée toute seule quand on compile pas avec l'option preprint
\begin{ack}
We used several generative AIs for this work, namely \emph{Emmy (Mistral.ai)} and \emph{Gemini 3 Flash} to assist us with programming and helping with the syntax of typst/cetz, as well as \emph{ChatGPT Plus} to help us with mathematical reasoning. However the answers produced by these AIs were used as starting points of truly personal work.

This work was partially funded by the project PREDICTIONS, grant ANR-23-CE48-0010 from the French National Research Agency (ANR).
\end{ack}

\bibliographystyle{plain} %parfait, merci Mathis. 
\bibliography{refs}
% ###################################################################################################################################################33

\clearpage
\input{appendix}

%% file: rewriting.tex
\section{Application to the Incremental Medians Problem}
\label{sec:incremental}

In this section, we conduct an experimental study of the application of randomized bidding to the incremental median problem, a classical online clustering problem. Previous work on this problem has been purely theoretical; we provide the first experimental results, in the context of a learning-augmented setting.

\noindent
{\bf Problem statement} \ 
Incremental medians is closely related to the well-studied $k$-median problem. Here, we are given a metric space $V$ with distance function $d$. In its simplest variant, the goal is to choose a set $F \subseteq V$ of $k$ \emph{facilities}. Each point $v \in V$ is assigned to its closest facility, which incurs cost defined as 
\[
\cost(F) \mathrel{:=} \sum_{v \in V} \min_{f \in F} d(v , f).
\]
The quality of a solution is measured by the {\em approximation ratio}, defined as $\cost(F) / \cost(F_k)$, where $F_k \subseteq V$ denotes an optimal set of $k$ facilities minimizing the cost. For a randomized algorithm producing a random set $F$, we consider the expected value of $\cost(F)$ divided by the $\cost(F_k^*)$.

The \emph{incremental median problem} introduces an additional constraint: the algorithm does not know in advance how many facilities it will be allowed to choose, and must therefore construct the solution sequentially, without the possibility of replacing previously selected facilities. Formally, a solution to the incremental median problem is a total order on $V$. We denote by $F_k$ the first $k$ facilities in that order, and by $\cost(F_k) / \cost(F_k^*)$ the approximation ratio of the algorithm for a specific value of $k$.

\noindent
{\bf Algorithms} \ 
Incremental medians was introduced under the name \emph{online median problem} in~\cite{mettu-plaxton-2003}, and its connection to the online bidding problem was studied in~\cite{chrobak2008incremental}. It was shown therein that, given a $\beta$-approximation algorithm for the $k$-median problem and a $\gamma$-competitive algorithm for the online bidding problem, one can obtain a $2\beta\gamma$-approximation algorithm for the incremental median problem. In Appendix~\ref{app:medians} we extend this connection to the learning-augmented setting, with a prediction $\hat{k}$ on the allowable number of facilities to choose. We show that any $R$-robust and $C$-consistent bidding algorithm yields an algorithm for incremental median that is $2\beta C$-consistent, and $2\beta R$-robust. We also address technical details related to the implementation.

\noindent
{\bf Datasets} \ 
For our experimental setup, we use the road network of the Isle of Man from the \textit{OpenStreetMap} project~\cite{openstreetmap}. Restricting to its largest connected component yields a graph on a vertex set $V$ of size 5340, where vertices correspond to geographic locations on the island. In this setting, facilities may represent, for example, mailboxes or electric charging stations, while the cost reflects the quality of service provided across the island. Edges correspond to road segments, and their weights represent traversal times. For the purpose of this study, we make all edges bidirectional by assigning to each undirected edge the minimum arc weight among both orientations. This connected edge-weighted graph induces a metric on $V$, where distances are defined by shortest-path lengths.

\noindent
{\bf Experimental evaluation} \ 
We choose $\hat{k} = 2500$ as the predicted number of facilities, and use bidding sequences optimized for low consistency at this value. Appendix~\ref{app:medians} provides additional plots for other values of $\hat{k}$.
Figure~\ref{fig:medians} depicts the average expected approximation ratio of various algorithms, based on different randomized bidding algorithms, as a function of the actual number of facilities $k$. We set the robustness requirement to $R=4$. The figure depicts the empirical ratios based on three different bidding algorithms: The best bidding function from class $\cal D$ introduced in~\cite{angelopoulos-simon-bidding-2025}, the best function from class 
${\cal I}$ in Section~\ref{sec:functions}, and the 
Pareto-optimal function $A$ analyzed in Theorem~\ref{thm:upper_main}. Following the same notation as in Section~\ref{sec:numerical}, we denote by {\sc D}, {\sc I} and {\sc A} the three algorithms.

We draw two main conclusions from Figure~\ref{fig:medians}. First, the theoretical consistency guarantees of these bidding functions are reflected in the same ordering of the consistency for incremental median: {\sc A} achieves the best consistency, followed by {\sc I} and then {\sc D}. More importantly, algorithm {\sc A} exhibits a pronounced improvement on the approximation ratio even when the prediction is significantly inaccurate. This indicates that the benefits of {\sc A} extend well beyond improved consistency--robustness tradeoffs, yielding practical gains even under large prediction errors.

The overall shape of the plots can be interpreted in light of the  definitions and the structural properties of the bidding functions. For {\sc D}, the plot is saw-like, similar to a deterministic plot, because the algorithm uses a relatively small amount of randomness. In contrast, the bidding functions $A$ and $I$ are similar for $k \in [1,\hat{k}]$, due to the presence of the same slope $1/\wtop$ in their exponent and the same deterministic bid at $k=\hat{k}$. For $k \in [\hat{k}, 5000]$, {\sc A} is markedly better than {\sc I}, because in this interval the latter uses few bids, due to the slope $1/\wbot$ in its exponent. 

\begin{figure}[t!]
    \centering
    \includegraphics[width=\linewidth]{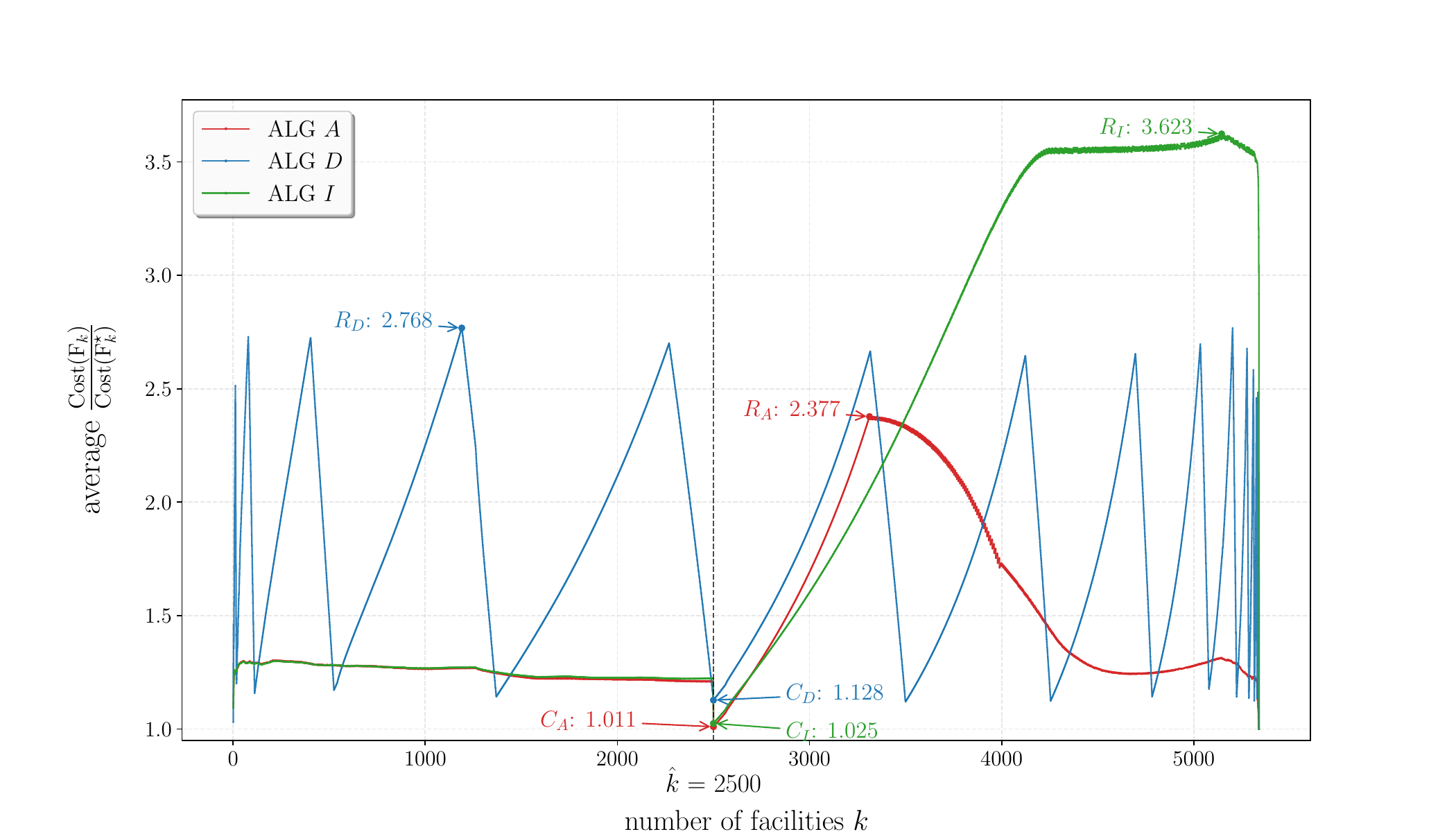}
    \caption{Average empirical approximation ratios for the incremental median problem applied to the Isle of Man road network, as a function of the actual number of facilities.
    $C_X$ and $R_X$ denote the empirical consistency and robustness of algorithm $X\in \{ {\sc A}, {\sc D}, {\sc I} \}$.
    }
    %We compare the Pareto optimal algorithm $A$ with the optimal bidding functions from classes $\cal D$ and $\cal I$ respectively. For an algorithm $X$, the graph shows $C_X$, the approximation ratio at the prediction, and $R_X$, the maximum approximation ratio.}
    \label{fig:medians}
\end{figure}

%% file: appendix.tex
\appendix 

{\Large \textbf{Appendix}}

\section{Omitted material from Section \ref{sec:functions}}

\begin{proof}[\textbf{Proof of Theorem \ref{thm:bidding-fct}}]
For $u \in \RR_{+}$ we have \begin{align}
\cost(X_B, u) & = \int_0^1 \sum_{i \in \ZZ} B(i + \lambda) \cdot \mathbbm{1}\left[B(i + \lambda - 1) < u\right] \diff \lambda \notag \\
 & = \sum_{i \in \ZZ} \int_0^1 B(i + \lambda) \cdot \mathbbm{1}\left[B(i + \lambda - 1) < u\right] \diff \lambda \notag \\
 & = \sum_{i \leq \lfloor B^-(u) \rfloor} \int_0^1 B(i + \lambda) \diff \lambda + \int_0^1 B(\left\lfloor B^-(u)\right\rfloor + 1 + \lambda) \mathbbm{1}\left[B(\left\lfloor B^-(u)\right\rfloor + \lambda) < u\right] \diff \lambda \notag \\
 & = \int_{-\infty}^{\left\lfloor B^-(u)\right\rfloor + 1} B(x) \diff x + \int_0^{\left[B^-(u)\right]} B(\left\lfloor B^-(u)\right\rfloor + 1 + \lambda) \diff \lambda \notag \\
 & = \int_{-\infty}^{\left\lfloor B^-(u)\right\rfloor + 1} B(x) \diff x + \int_{\left\lfloor B^-(u)\right\rfloor + 1}^{\left\lfloor B^-(u)\right\rfloor + 1 + \left[B^-(u)\right]} B(x) \diff x \notag \\
 & = \int_{-\infty}^{B^-(u) + 1} B(x) \diff x. \notag
\end{align} 

Hence we obtain that \begin{equation}
\NC(X_B, u) = \int_{-\infty}^{B^-(u) + 1} \frac{B(x)}{u} \diff x. \label{eq:cost_to_mass}
\end{equation}

Let $u \in \RR_{+}$, it holds that \begin{equation}
    \lim_{\substack{t \rightarrow B^-(u) \\ t > B^-(u)}} \CR{B}(t) \leq \NC(X_B, u) \leq \lim_{\substack{t \rightarrow B^-(u) \\ t < B^-(u)}} \CR{B}(t). \label{eq:bounded_nc_u}
\end{equation}

Indeed, both inequalities come from \eqref{eq:cost_to_mass} and the continuity of $\int_{-\infty}^{t + 1} B(x) \diff x$ in $t$. For the left inequality, we also use that for all $t > B^-(u)$ we have $B(t) \geq u$. Similarly, the right inequality holds because for all $t < B^-(u)$, $B(t) \leq u$. 

Let $t \in \RR$, we have \begin{equation}
 \NC(X_B, B(t)) \leq \CR{B}(t) \leq \lim_{\substack{u \rightarrow B(t) \\ u > B(t)}} \NC(X_B, u). \label{eq:bounded_cr_t}
\end{equation}

Similarly, this follows using \eqref{eq:cost_to_mass}. For the left inequality, we also use that if $u = B(t)$, then $B^-(u) \leq t$, and for the right inequality that $\lim_{\substack{u \rightarrow B(t) \\ u > B(t)}} B^-(u) \geq t$.

Combining the inequalities \eqref{eq:bounded_nc_u} and \eqref{eq:bounded_cr_t} completes the proof.
\end{proof}

\begin{proof}[\textbf{Proof of Theorem \ref{thm:equivalence-seq-fct}}]
Let $X$ be a randomized bidding sequence. For $v > 0$ let the random variable $N(v)$ be  the number of bids in $X$ between $1$ and $v$. If $v$ is smaller than $1$, it counts negatively the number of bids between $v$ and $1$. Formally  \[
N(v) \mathrel{:=} \sum_{k \in \ZZ} (\mathbbm{1}\left[X_k < v\right] - \mathbbm{1}\left[X_k < 1\right]).
\] 
% It could be possible to define $X$ such that $N$ is not defined, but in that case its expected normalized cost would not be defined, hence we can suppose that the definition of $N$ holds.

We denote the expected number of bids in $[1 , v)$, and its generalized inverse by $F(v)$ and $B(t)$ respectively: 
\begin{align*}
F(v) := & \: \EE [N(v)], & v > 0 , \\
B(t) := & \: \inf \{v > 0 | F(v) > t \}, & t \in \RR \\
% B^+(t) := & \: sup \{v < 0 | F(v) \leq t \}, & t \in \RR.
\end{align*} 
% $F$ must be defined for the same reason as $N$. 
% Otherwise we could not define the expected cost of $X$. 
% Even though $F$ might not be invertible, $B$ is well defined from the monotonicity of $F$. 
The cost of $X$ on threshold $u$ is
\[
\cost(X, u) = \sum_{X_k < u} X_k + Y_u,
\] where $Y_u$ is the first bid greater than or equal to $u$.

For all $u \geq 1$, we have: \begin{align}
\int_1^u F(x) \diff x & = \int_1^u \EE\left[N(x)\right] \diff x \notag \\
 & = \EE\left[\int_1^u N(x) \diff x\right] \notag \\
 & = \EE\left[\int_1^u \sum_{k \in \ZZ} (\mathbbm{1}\left[X_k < x\right] - \mathbbm{1}\left[X_k < 1\right]) \diff x\right] \notag \\
 & = \EE\left[\sum_{k \in \ZZ} \int_1^u (\mathbbm{1}\left[X_k < x\right] - \mathbbm{1}\left[X_k < 1\right]) \diff x\right] \notag \\
 & = \EE\left[\sum_{k \in \ZZ} \mathbbm{1}\left[X_k \in \left[1 , u\right)\right] \cdot (u - X_k)\right] \notag \\
 & = \EE\left[\sum_{u \leq X_k < 1} u - X_k\right] \notag \\
 & = u \EE\left[N(u)\right] - \EE\left[\sum_{1 \leq X_k < u} X_k\right]. \label{eq:F_integral_first}
\end{align}

Similarly, for all $u \leq 1$, we have: \begin{align}
\int_1^u F(x) \diff x & = - \int_u^1 \EE\left[N(x)\right] \diff x \notag\\
 & = u \EE\left[N(u)\right] + \EE\left[\sum_{u \leq X_k < 1} X_k\right]. \label{eq:F_integral_reversed}
\end{align}

We obtain from \eqref{eq:F_integral_first} and \eqref{eq:F_integral_reversed} that for all $u \geq v$: \[
\int_u^v F(x) \diff x = v \EE\left[N(v)\right] - u \EE\left[N(u)\right] - \EE\left[\sum_{u \leq X_k < v} X_k\right]. 
\]

By the area-inverse formula: \begin{align}
\int_{F(u)}^{F(v)} B(x) \diff x & = v F(v) - u F(u) - \int_{u}^v F(x) \diff x \notag\\
 & = \EE\left[\sum_{u \leq X_k < v} X_k\right]. \label{eq:B_bounded_integral}
\end{align}

Noticing that $\lim_{u \rightarrow 0} F(u) = -\infty$, it directly follows from \eqref{eq:B_bounded_integral} that \begin{equation}
    \int_{-\infty}^{F(v)} B(x) \diff x = \EE\left[\sum_{0 \le X_k < v} X_k\right]. \label{eq:B_integral}
\end{equation}

Next, we bound the expectation of $Y_u$. For convenience, we denote $s \mathrel{:=} F(u)$, and  introduce \[
G_u (y) := \mathbb{P}(Y_u < y).
\] Note that $\mathbbm{1}\left[Y_u < y\right] \leq N(y) - N(u)$. Taking expectations, we obtain \[
G_u (y) \leq F(y) - F(u) = F(y) - s.
\] 
We define the quantile of $Y_u$, i.e. for $0\leq t\leq 1$ let  \[
Q_u (t) \mathrel{:=} \inf \left\{y \geq u | G_u (y) > t\right\}, 
\] 
be the generalized inverse of $G_u$. If $G_u (y) > t$, then $F(y) > s + t$, hence $B(s + t) \leq y$. Taking the infimum on both sides, we obtain that $B(s + t) \leq Q_u (t)$. By integrating this inequality, \begin{equation}
    \int_s^{s + 1} B(x) \diff x \leq \int_0^1 Q_u (x) \diff x = \EE\left[Y_u\right]. \label{eq:B_step_integral}
\end{equation}

We can now complete the proof of the theorem. For an arbitrary $u \in \RR_+$, let $t = F(u)$, we have from \eqref{eq:B_integral} and \eqref{eq:B_step_integral} that:
\begin{align}
\CR{B}(t) 
&= \int_{-\infty}^{t+1} B(x) \diff x / B(t) \notag \\
&= \frac{1}{B(t)} (\int_{-\infty}^{F(u)} B(x) \diff x + \int_{F(u)}^{F(u) + 1} B(x) \diff x) \notag \\
&\leq \frac1u (\EE\left[\sum_{0 \le X_k < u} X_k\right] + \EE\left[Y_u\right]) \notag \\
&= \frac{\EE\left(\cost(X, u)\right)}{u} \label{eq:cr_t_smaller_nc_u}.
\end{align} 

For $t \in \RR$, let $u := B(t)$. For every $v > u$, we have $F(v) > t$, hence \[
\int_{-\infty}^{t+1} \frac{B(x)}{v} \diff x \leq \int_{-\infty}^{F(v)+1} \frac{B(x)}{v} \diff x \leq \frac{\EE \left[ \cost(X,v) \right]}{v}. 
\]

Letting $v$ tend to $u$, it follows that \begin{equation}
        \CR{B}(t) \leq \lim_{v \to u} \frac{\EE \left[ \cost(X,v) \right]}{v}. \label{eq:nc_u_bigger_cr_t}
\end{equation}

Combining both inequalities \eqref{eq:cr_t_smaller_nc_u} and \eqref{eq:nc_u_bigger_cr_t}, we obtain the wanted result, i.e. \[
    \inf_{t \in \RR} \CR{B}(t) \leq \inf_{u \in \RR_+} \NC(X_B, u), \quad \sup_{t \in \RR} \CR{B}(t) \leq \sup_{u \in \RR_+} \NC(X_B, u).
\]
\end{proof}

The following Lemma will be used in several proofs. 

\begin{lemma} \label{lemma:right_extension}
    Let $T \in \RR$ and $B$ a bidding function on $(-\infty, T)$. If the work of $B$ at $T$ is at most $w$, i.e. \[
        \int_{-\infty}^T B(x) \diff x \leq w B(T), 
    \]
    then one can extend the bidding function $B$ on $\RR$ by $B(t) = B(T) \exp((t - T) / w)$ so the normalized mass of $B$ is bounded on $(T, +\infty)$: \[
        \CR{B}(t) \leq w e^{\frac1w}.
    \]
\end{lemma}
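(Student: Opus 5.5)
The proof is a direct computation. We write the work of the extended function explicitly and then reduce the normalized mass to a one-variable estimate. For $t \geq T$ set $s := t - T \geq 0$. Splitting the integral at $T$ and using the hypothesis gives
\[
\int_{-\infty}^{t} B(x)\diff x \leq w B(T) + \int_T^t B(T) e^{(x-T)/w}\diff x = w B(T) + w B(T)\bigl(e^{s/w}-1\bigr) = w B(T) e^{s/w} = w B(t).
\]
In words, the work function satisfies $w_B(t) \leq w$ for every $t \geq T$. The exponential extension is chosen precisely so that the slack bound $w$ on the work propagates forward exactly.

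Next we bound $\CR{B}(t) = w_B(t) + \int_t^{t+1} B(x)\diff x / B(t)$. For $t \geq T$ the whole interval $[t,t+1]$ lies in the exponential region, so
\[
\int_t^{t+1} \frac{B(x)}{B(t)}\diff x = \int_0^1 e^{y/w}\diff y = w\bigl(e^{1/w}-1\bigr).
\]
Adding the two pieces gives $\CR{B}(t) \leq w + w(e^{1/w}-1) = w e^{1/w}$, as required.

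Two routine checks remain. First, the extended function must be a valid bidding function. It is nondecreasing, since $B$ is nondecreasing on $(-\infty,T)$ and the extension starts at the value $B(T)$ and increases. It tends to $+\infty$, and its partial integrals are finite. Second, the value $B(T)$ itself needs care. The hypothesis implicitly takes $B(T)$ to be the value used for the extension. Monotonicity at $T$ then requires $B(T) \geq \lim_{x\to T^-}B(x)$, which we assume, or else define $B(T)$ to be that value.

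No real obstacle is expected; the only subtle point is that the exponential rate $1/w$ is exactly what keeps the work bound invariant.
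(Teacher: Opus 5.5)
Your proof is correct and is essentially the paper's argument. The paper splits $\int_{-\infty}^{t+1}B$ at $T$ and integrates the exponential over $[T,t+1]$ in one step; you also split at $t$, first showing $w_B(t)\le w$, which is the same computation done in two stages (and your remark that $B(T)$ must be at least the left limit at $T$ covers a detail the paper skips when it asserts the extension is increasing).
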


\begin{proof}
    This extension is positive and increasing, hence the obtained $B$ is a valid bidding function on $\RR$. Let $t \geq T$, we have \begin{align*}
        \CR{B}(t) 
        &= \frac{\int_{-\infty}^{t+1} B(x) \diff x}{B(t)} \\
        &\leq  \frac{1}{B(T)} \exp(\frac{T-t}{w}) (\int_{-\infty}^{T} B(x) \diff x + \int_{T}^{t+1} B(x) \diff x) \\
        &\leq \exp(\frac{T-t}{w}) (w + w (\exp(\frac{t+1-T}{w}) - 1)) \\ 
        &= w \exp(1 / w).
    \end{align*}
\end{proof}

\subsection{Analysis of Bidding Function Classes $\cal E$, $\cal D$, $\cal I$} \label{app:classes}

%In order to study the robustness consistency tradeoff, it seems unavoidable to impose some structure on the considered randomized bidding sequences. Such restrictions will allow us to use some algebraic tools. This section is dedicated to different classes of bidding functions.  
 
\subsubsection{Class $\cal E$}
\label{sec:continuous}

An exponential bidding function $B$ is defined by $B(t) = e^{t / w}$, for some parameter $w > 0$ and for all $t \in \RR$. The normalized cost of a linear function is invariant with respect to the reference point. In particular, for any $t \in \RR$, we have \[
\CR{B}(t) = \CR{B}(0) = \frac{\int_{-\infty}^{1} e^{x/w} \diff x}{e^{0}} = w e^{1/w}.
\]

It is interesting to know for which values of $w$ the bidding function is $R$-robust. For this purpose, we observe that the ratio $w e^{1/w}$ is convex in $w$, which holds since its second derivative $e^{1/w} w^{- 3}$ is positive for $w > 0$. In addition its limit at $0$ from above is $\infty$, as is its limit at $\infty$. As a result, we have $w e^{1/w} \leq R$, for all $w \in \left[\wbot , \wtop\right]$ where $\wbot , \wtop$ are the two solutions of the equation $w e^{1/w} = R$. 
%See \cref{fig:wbot-wtop} for an illustration. 
Note that for the extreme case $R = e$, we have $\wbot = \wtop = 1$. 

%\begin{figure}[htbp] \centering \includegraphics[width=13cm]{images/xepx.pdf} \caption{Illustration of $\wbot , \wtop$ as a function of $R$. For $R = e$ we have $\wbot = \wtop = 1$ and for $R = R_0$ we have $\wtop = \wbot + 1$.} \label{fig:wbot-wtop} \end{figure}

\subsubsection{Class $\cal D$}

Bidding functions in class $\cal D$ are defined by two values $\ell , h \geq 0$ with $\ell + h > 0$, and have the form \[
B_{\ell , h} (t) = \exp \left( \lfloor t\rfloor \cdot (\ell + h) + \left[t\right] \cdot \ell \right) ,
\] 

where $\left[t\right] \mathrel{:=} t - \lfloor t\rfloor$ denotes the fractional part of $t$. When it is clear from context, we will omit the subscripts $\ell,h$ in the notation. We note that this class induces sequences equivalent to those studied in~\cite{angelopoulos-simon-bidding-2025}. In this section we give analytical results about the robustness--consistency tradeoff of these functions.

Note that the parameter $\ell$ allows to tune the amount of randomness in the sequence. In particular, of $\ell=0$, then the resulting sequence is deterministic.

\begin{lemma}
Let $\ell , h \geq 0$ such that $\ell + h > 0$. We have \begin{align*}
    \rob(B_{\ell,h}) &= \lim_{t \to 1^-} \CR{B}_{\ell,h}(t) = \frac{(e^{\ell} - 1)}{\ell} \frac{e^{\ell + h}}{(e^{\ell + h} - 1)} e^h \\
    \cons(B_{\ell,h}) &= \CR{B}_{\ell,h}(0) = \frac{(e^{\ell} - 1)}{\ell} \frac{e^{\ell + h}}{(e^{\ell + h} - 1)}.
\end{align*} \label{prop:linear-uniform-weakly}
\end{lemma}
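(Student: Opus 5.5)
The plan is to exploit the quasi-periodicity of $B=B_{\ell,h}$. Directly from the definition, $B(t+1)=e^{\ell+h}B(t)$ for all $t$. Both the numerator $\int_{-\infty}^{t+1}B$ and the denominator $B(t)$ of $\CR{B}(t)$ are multiplied by the same factor $e^{\ell+h}$ under $t\mapsto t+1$; for the numerator this follows from the substitution $x\mapsto x+1$. Hence $\CR{B}$ is $1$-periodic, and it suffices to compute $\CR{B}(t)$ for $t\in[0,1)$ and take its infimum and supremum there. Throughout, write $E:=e^{\ell+h}$, and interpret $(e^\ell-1)/\ell$ as $1$ when $\ell=0$, by continuity.

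For $t\in[0,1)$, I split $\int_{-\infty}^{t+1}B=\int_{-\infty}^{1}B+\int_1^{1+t}B$. On each interval $[k,k+1)$ we have $B(x)=E^k e^{\ell(x-k)}$, so the mass of that interval is $E^k(e^\ell-1)/\ell$. Summing the geometric series over $k\le 0$ gives
\[
\int_{-\infty}^1 B=\frac{e^\ell-1}{\ell}\cdot\frac{E}{E-1}=:K.
\]
The second piece is $\int_1^{1+t}B=E(e^{\ell t}-1)/\ell$. Since $B(t)=e^{\ell t}$, this yields
\[
\CR{B}(t)=e^{-\ell t}\Bigl(K-\frac{E}{\ell}\Bigr)+\frac{E}{\ell},
\qquad
K-\frac{E}{\ell}=\frac{E}{\ell}\cdot\frac{e^\ell-E}{E-1}\le 0,
\]
where the sign uses $h\ge 0$, i.e. $e^\ell\le E$. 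Consequently $\CR{B}$ is nondecreasing on $[0,1)$. Its infimum is therefore attained at $t=0$ and equals $K$, which is the stated consistency. Its supremum is the left limit as $t\to1^-$, namely $e^{-\ell}K+e^h(e^\ell-1)/\ell$, using $E(1-e^{-\ell})/\ell=e^h(e^\ell-1)/\ell$.

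It remains to verify that this left limit equals $Ke^h$. From the definition of $K$,
\[
Ke^h-\frac{e^h(e^\ell-1)}{\ell}=\frac{e^h(e^\ell-1)}{\ell}\cdot\frac{1}{E-1}=e^{-\ell}K,
\]
so the two expressions coincide, giving the robustness formula. By periodicity, the value at $t=1$ itself drops back to $K$ (the jump of $B$ at integers), which confirms that the supremum is a limit rather than a maximum.

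The degenerate case $\ell=0$ needs a separate check, since the displayed formula divides by $\ell$. There I would redo the computation directly: $\int_1^{1+t}B=Et$, so $\CR{B}(t)=K+Et$ with $K=E/(E-1)$. The supremum is then $K+E=E^2/(E-1)=Ke^h$, because $E=e^h$ in this case. I expect no real obstacle; the only delicate point is the sign of $K-E/\ell$, which is exactly where the hypothesis $h\ge0$ enters.
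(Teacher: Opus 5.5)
Your proof is correct and follows essentially the same route as the paper: $1$-periodicity of $\CR{B}$, the explicit formula $\CR{B}(t)=e^{-\ell t}(K-E/\ell)+E/\ell$ on $[0,1)$ (the paper writes this as $e^{\ell+h}\bigl(\frac1\ell+e^{-t\ell}(w-\frac1\ell)\bigr)$ with $K=we^{\ell+h}$), and monotonicity on $[0,1)$ from the sign of $K-E/\ell$. The differences are minor: you get the tail mass $\int_{-\infty}^1 B$ by summing a geometric series where the paper uses the fixed-point relation $w_B(0)=w_B(1)$, and you treat the case $\ell=0$ and the equality case $h=0$ explicitly.
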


\begin{proof}
%Let $\ell , h$ be the parameters from the proposition. 
Since bidding functions of class $\cal D$ satisfy that $B(t+1)/B(t)$ is constant, the normalized mass is $1$-periodic. For this reason, we can restrict the analysis to the interval $[0 , 1)$. We first compute $w_B (0)$.

By periodicity of the normalized mass, $w_B (0) = w_B (1)$, furthermore, if $\ell > 0$, then $w_B (1) = \frac{w_B (0) B(0) + \frac{e^{\ell} - 1}{\ell}}{B(1)}$. Thus we have $w_B (0) = \frac{e^{\ell} - 1}{\ell (e^{\ell + h} - 1)}$, which for convenience we denote $w$.

Next we compute \begin{align*}
\CR{B}(t) & = \frac{w B(1) + \int_1^{t + 1} B(x) \diff x}{B(t)} \\
 & = \frac{w e^{\ell + h} + e^{\ell + h}(\frac{e^{t \ell} - 1}{\ell})}{e^{t \ell}} \\
 & = e^{\ell + h} \left(\frac{1}{\ell} + e^{- t \ell} \left(w - \frac{1}{\ell}\right)\right).
\end{align*}

Since $\ell + h > \ell$, we have $w < 1 / \ell$ so this function is increasing in $[0 , 1)$. We finally obtain the equalities \[
    \rob(B) = \lim_{t \rightarrow 1^{-}} \CR{B}(t) = e^h (w + \frac{e^{\ell} - 1}{\ell}) = w e^{\ell + 2 h}, \quad \cons(B) = w e^{\ell + h}
\].

We note that the limits for $\ell \rightarrow 0$ of these expressions give $\rob(B) = \lim_{t \rightarrow 1^{-}} \CR{B}(t) = \frac{e^{2 h}}{e^h - 1}$ and $C = \frac{e^h}{e^h - 1}$, as in the deterministic case. 
\end{proof}

The obtained formulae for the robustness and consistency of bidding functions from class $\cal D$ make it possible to analyse the tradeoff they induce.

\begin{theorem}
Let $1 < C \leq e$. Let $R \mathrel{=} \min_{\ell , h} \left\{\rob(B_{\ell , h}) | \cons(B_{\ell , h}) \leq C\right\}$ be the optimal robustness among $C$-consistent bidding functions from class $D$. 

\begin{itemize}
 \item If $1 < C \leq \frac{3}{2}$, then $R = \frac{C^2}{C - 1}$, which is achieved by a deterministic bidding sequence. \vspace{1em}
 \item If $\frac{3}{2} \leq C \leq e$, then $R =\frac{\ell C^2 e^{-\ell}}{\ell C + 1 - e^{\ell}}$ for $\ell$ being the unique non negative root of \[
   1 - C \ell^2 - \ell - e^{\ell} (1 - 2 \ell).
 \]
\end{itemize}
\end{theorem}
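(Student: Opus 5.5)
The plan is to reduce the optimization to a one-variable problem in $\ell$ using the closed forms of Lemma~\ref{prop:linear-uniform-weakly}, and then study the sign of the derivative. Write $\varphi(\ell) := (e^\ell-1)/\ell$, with $\varphi(0)=1$, and $E := e^{\ell+h}$. The lemma gives
\[
\cons(B_{\ell,h}) = \varphi(\ell)\,\frac{E}{E-1}, \qquad \rob(B_{\ell,h}) = e^h\cons(B_{\ell,h}).
\]

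\textbf{Step 1: the constraint binds.} For fixed $\ell$, $\cons$ is strictly decreasing in $h$, while $\rob = e^h\cons$ is increasing in $h$. To see the latter, write $e^h\cdot E/(E-1) = e^{-\ell}E^2/(E-1)$, which is increasing for $E\ge 2$. This covers our case, since $C\le e$ forces $E/(E-1)\le e$, i.e.\ $E$ bounded away from $1$. I would verify this monotonicity carefully, since it is what justifies the reduction. Consequently the optimal $h$ is the smallest one with $\cons = C$. Solving $\varphi(\ell)E/(E-1)=C$ gives
\[
E = \frac{C}{C-\varphi(\ell)}, \qquad e^h = \frac{Ce^{-\ell}}{C-\varphi(\ell)}.
\]
This is feasible exactly when $\varphi(\ell)<C$ and $h\ge 0$. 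Substituting,
\[
R(\ell) = \frac{C^2e^{-\ell}}{C-\varphi(\ell)} = \frac{\ell C^2 e^{-\ell}}{\ell C+1-e^\ell},
\]
which is the claimed formula. At $\ell=0$ it gives $C^2/(C-1)$, the deterministic value.

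\textbf{Step 2: sign of the derivative.} Set $g(\ell)=\ell C+1-e^\ell$, which is positive on the feasible range. Then
\[
\frac{d}{d\ell}\ln R = \frac1\ell - 1 - \frac{C-e^\ell}{g} = \frac{f(\ell)}{\ell\, g(\ell)},
\]
where
\[
f(\ell) = (1-\ell)g - \ell(C-e^\ell) = 1 - C\ell^2 - \ell - e^\ell(1-2\ell).
\]
This is exactly the polynomial–exponential expression in the statement. Its derivatives are
\[
f(0)=0, \qquad f'(\ell) = e^\ell(1+2\ell) - 1 - 2C\ell, \qquad f'(0)=0, \qquad f''(\ell) = e^\ell(3+2\ell) - 2C.
\]
Since $f''$ is strictly increasing with $f''(0)=3-2C$:
\begin{itemize}
\item If $C\le 3/2$, then $f''>0$ on $(0,\infty)$, so $f>0$. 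Hence $R$ is increasing and the minimum is at $\ell=0$, giving $R=C^2/(C-1)$ with a deterministic sequence.
\item If $C>3/2$, then $f''$ changes sign exactly once. So $f'$ first decreases below $0$ and then increases to $+\infty$, and $f$ is first negative and then eventually positive. Thus $f$ has a unique positive root $\ell^*$, where $R$ attains its minimum.
\end{itemize}

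\textbf{Main obstacle: feasibility of $\ell^*$.} I expect the hardest part to be checking that $\ell^*$ lies in the feasible region, i.e.\ $\varphi(\ell^*)<C$ (equivalently $g(\ell^*)>0$) and $h\ge 0$, i.e.\ $C/(C-\varphi(\ell^*))\ge e^{\ell^*}$. The constraint $h\ge 0$ could otherwise cut off the minimizer, and the optimum would then sit on the boundary $h=0$.

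For $g>0$, I would argue that as $g\downarrow 0$ we have $R\to\infty$, so $R$ must turn around before the boundary. Hence the sign change of $f$ occurs strictly inside the region $\{g>0\}$.

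For $h\ge0$, I would show that $h=0$ corresponds to the class-$\mathcal E$ regime, where $\cons=\rob$. Along that boundary $R=C$, while the equation $\varphi(\ell)e^\ell/(e^\ell-1)=C$ pins down $\ell$. I would then use $C\le e$ together with the behaviour of $f$ at that point to confirm that $\ell^*$ lies before the $h=0$ boundary. If this fails in some corner case, I would fall back on evaluating $R$ at both endpoints and comparing.
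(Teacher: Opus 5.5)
Your overall route is the paper's. You make the constraint tight, write $R(\ell)=\ell C^2e^{-\ell}/(\ell C+1-e^\ell)$, and read the sign of $\partial_\ell R$ off $f(\ell)=1-C\ell^2-\ell-e^\ell(1-2\ell)$. Your $f''$ argument replaces the paper's expansion $f(\ell)/\ell^2=\sum_{i\ge0}\frac{3+2i}{(i+2)!}\ell^i-C$, and your $R\to\infty$ argument correctly places $\ell^*$ inside $\{g>0\}$.

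Step 1, however, is wrong as written. The map $E\mapsto E^2/(E-1)$ is increasing only for $E\ge2$. Feasibility, using $\varphi\ge1$, gives only $E/(E-1)\le C$, i.e.\ $E\ge C/(C-1)$, and this is below $2$ as soon as $C>2$. So your argument covers $C\le2$ but not $C\in(2,e]$, and there the per-$\ell$ claim is actually false. Take $C=2.5$ and $\ell=0.1$:
\begin{itemize}
\item the tight choice of $h$ gives $\rob\approx3.90$;
\item $h=\ln2-0.1\ge0$ (so $E=2$) gives $\cons=2\varphi(0.1)\approx2.10<C$ and $\rob=4\varphi(0.1)e^{-0.1}\approx3.81$.
\end{itemize}
Hence your $R(\ell)$ is not $\min_h\rob$ at fixed $\ell$, and minimizing $R(\ell)$ does not by itself give the optimum. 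The paper's proof simply identifies $\cons(B_{\ell,h})$ with $C$, so it has the same omission; you were right to want a justification, but this one fails.

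To fix it, fix $E=e^{\ell+h}$ instead of $\ell$. Then $\rob=\frac{1-e^{-\ell}}{\ell}\cdot\frac{E^2}{E-1}$ is strictly decreasing in $\ell$. Meanwhile $\cons=\varphi(\ell)\frac{E}{E-1}$ is increasing in $\ell$ and reaches $E/\ln E\ge e\ge C$ at $\ell=\ln E$, i.e.\ at $h=0$. So from any feasible point with $\cons<C$ you can raise $\ell$, lowering $h$ but keeping it $\ge0$, until $\cons=C$, and this strictly lowers $\rob$. Every feasible point is therefore dominated by a tight one, so the optimum equals $\min_\ell R(\ell)$ over $\{g>0\}$, and your sign analysis finishes the proof.

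Your ``main obstacle'' also disappears. On the tight curve, $e^h\ge1$ is equivalent to $C\le e^\ell/\ell$, which always holds because $e^\ell/\ell\ge e\ge C$. So the constraint $h\ge0$ never cuts off $\ell^*$, and no endpoint comparison is needed.
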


\begin{proof}
From the proof of \cref{prop:linear-uniform-weakly} we have $R = w e^{\ell + 2 h}$ and $C = w e^{\ell + h}$ for $w =\frac{e^{\ell} - 1}{\ell (e^{\ell + h} - 1)}$. First we express $R$ in terms of $C$ and $\ell$ only. We have \begin{align*}
C & = \frac{e^{\ell} - 1}{\ell} \frac{e^{\ell + h}}{e^{\ell + h} - 1} & \Leftrightarrow \\
 C e^{\ell + h} & = \frac{e^{\ell} - 1}{\ell} e^{\ell + h} + C & \Leftrightarrow \\
 e^h \left(e^{\ell} C - \frac{e^{\ell} - 1}{\ell} e^{\ell}\right) & = C & \Leftrightarrow \\
 e^h (\ell C - (e^{\ell} - 1)) & = \ell C e^{-\ell} & \Leftrightarrow \\
 e^h & = \frac{\ell C e^{-\ell}}{\ell C + 1 - e^{\ell}} .
\end{align*} Hence $R = C e^h = \frac{\ell C^2 e^{-\ell}}{\ell C + 1 - e^{\ell}}$. For fixed $C$, we want to find the value of $\ell$ which minimizes $R$. We have \[
\partial_{\ell} R = \frac{C^2 e^{-\ell} \ell^2}{(C \ell - e^{\ell} + 1)^2} \frac{1 - C \ell^2 - \ell - e^{\ell} (1 - 2 \ell)}{\ell^2} .
\] For this product of two fractions, observe that the first fraction is always positive. Hence the sign of $\partial_{\ell} R$ equals the sign of the second fraction, which we denote by $F(C ,\ell)$. Using a series expansion, we obtain \[
F(C ,\ell) = \sum_{i = 0}^{+ \infty} \frac{3 + 2 i}{(i + 2)!}\ell^i - C .
\] We immediately obtain the following : 

\begin{itemize}
 \item $F$ is strictly increasing along $\ell$.
 \item $F(C , 0) \geq 0$ if and only if $C \leq \frac{3}{2}$.
 \item If $C > 3 / 2$, then $F$ is negative on $[0 ,\ell^{*})$ and positive on $(\ell^{*} , +\infty)$, where $\ell^{*}$ is the unique root of $F$.
\end{itemize}
Hence we have that, if $C > 3 / 2 $, then the optimal robustness is reached for the unique $\ell$ satisfying $F(C ,\ell) = 0$. Else, it is reached by the deterministic case $\ell = 0$.
\end{proof}

\subsubsection{Class $\cal I$} \label{sec:classI}

A bidding function $B$ in class $\cal I$ is defined by its value at integer reference points $(B(t))_{t \in \ZZ}$ for fractional reference points $t$, it is defined as the exponential interpolation between $B(\lfloor t \rfloor)$ and B($\lceil t \rceil)$.
In this section we use an equivalent representation based on a bi-infinite sequence of exponential \emph{slopes} $(\ell_i)_{i \in \ZZ}$, where $e^{\ell_i} = B(i + 1)/B(i)$. Without loss of generality we assume $B(0)= 1$, which allows us to express $B$ as \[
B(t) = \exp\left([t] \cdot\ell_{\lfloor t\rfloor} + \begin{cases}
 \displaystyle \sum_{i = 0}^{\lfloor t\rfloor - 1} \ell_i & \text{if } t \geq 0 \\
 \displaystyle -\sum_{i = \lfloor t\rfloor}^{- 1} \ell_i & \text{if } t < 0
\end{cases}\right).
\]

The following theorem is the consequence of the lower bound in \cref{lemma:pareto-class-I-lower} and the matching upper bound in \cref{lemma:pareto-class-I-upper}.
\begin{theorem} \label{thm:pareto-class-I}
Let $B$ be an $R$-robust bidding function from class $\cal I$ which minimizes its consistency. Let 
$(\ell_i)_{i \in \ZZ}$ be its associated sequence of slopes.

\begin{itemize}
 \item If $R \geq R_0$, then $\cons(B) = \wbot + 1$.
 \item If $R \leq R_0$, let $\ell^{*}$ such that $\wbot + e^{\ell^* - 1} = \wtop e^{\ell^*}$, then $\cons(B) = \wbot + \frac{e^{\ell^{*}} - 1}{\ell^{*}}$.
\end{itemize}
\label{theorem:lower_bound_continuous_linear}
\end{theorem}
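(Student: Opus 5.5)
The plan is to prove the matching lower and upper bounds separately (\cref{lemma:pareto-class-I-lower} and \cref{lemma:pareto-class-I-upper}). Both rest on two invariants of the work function $w_t$ of an $R$-robust bidding function: $w_t \ge \wbot$ everywhere, and $w_t \le \wtop$ everywhere. After an integer shift (which preserves class $\cal I$), we may assume consistency is attained at reference point $0$; the fractional offset is handled by the same computation on $[0,1)$. Since $B$ is exponential with slope $\ell_0$ on $[0,1]$,
\[
\CR{B}(0) = w_0 + \frac{e^{\ell_0}-1}{\ell_0}, \qquad w_1 = \Bigl(w_0 + \frac{e^{\ell_0}-1}{\ell_0}\Bigr)e^{-\ell_0}.
\]
The identity $\CR{B}(t) = w_{t+1}\,B(t+1)/B(t)$ is what links the work function to robustness.

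\textbf{The two invariants.} Both follow from the same differential identity. Where $B$ is differentiable, $w_t' = 1 - w_t\,(\ln B)'(t)$, so $(\ln B)' = (1-w')/w$. Hence
\[
\CR{B}(t) = w_{t+1}\exp\Bigl(\int_t^{t+1}\frac{1-w'_s}{w_s}\,\diff s\Bigr).
\]
\begin{itemize}
\item For the upper invariant: suppose $w_{t_0} > \wtop$. I would show that $R$-robustness forces $w$ to keep increasing, because $we^{1/w} > R$ for $w > \wtop$. This drives $w$ to infinity, or forces $B$ to be nonincreasing, and either way contradicts the definition of a bidding function. This is the converse of \cref{lemma:right_extension}.
\item For the lower invariant: suppose $w_{t_0} < \wbot$. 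Symmetrically, I would propagate backwards, using that $w\mapsto we^{1/w}$ exceeds $R$ below $\wbot$. This should force $w$ to decrease to $0$ as $t\to-\infty$ at a rate incompatible with $\lim_{t\to-\infty}B = 0$ together with $R$-robustness.
\end{itemize}
I expect making these two propagation arguments rigorous, with $w$ only piecewise differentiable (though continuous for class $\cal I$), to be the main obstacle. I would first try a discrete comparison: apply the identity on unit windows and compare with the constant-slope exponential of class $\cal E$, for which $w$ is constant and $\CR{B} = we^{1/w}$.

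\textbf{Lower bound.} Given the invariants, $w_0 \ge \wbot$ and $w_1 \le \wtop$. So $\cons(B) \ge \wbot + (e^{\ell}-1)/\ell$, where $\ell=\ell_0$ must satisfy
\[
\Bigl(\wbot + \frac{e^{\ell}-1}{\ell}\Bigr)e^{-\ell} \le \wtop.
\]
The map $\ell\mapsto (e^\ell-1)/\ell$ is increasing with limit $1$ at $0$, so the objective is minimized by taking $\ell$ as small as the constraint allows.
\begin{itemize}
\item If $\wbot+1 \le \wtop$, which is the definition of $R \ge R_0$, the constraint holds as $\ell\to 0$, and the bound is $\wbot+1$.
\item Otherwise the constraint is tight at the optimum. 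The smallest feasible slope is the root $\ell^*$ of the equation in the statement, read as $\wbot+(e^{\ell}-1)/\ell = \wtop e^{\ell}$. This gives $\cons(B) \ge \wbot + (e^{\ell^*}-1)/\ell^*$.
\end{itemize}

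\textbf{Upper bound.} I would construct the function explicitly:
\begin{itemize}
\item for $t\le 0$, $B(t)=e^{t/\wbot}$, so $w_t\equiv\wbot$ and $\CR{B}(t)=R$;
\item on $[0,1]$, slope $\ell^*$ (or $\ell\to 0$ when $R\ge R_0$, where the consistency is attained only in the limit);
\item for $t\ge 1$, slope $1/\wtop$, which by \cref{lemma:right_extension} (since $w_1\le\wtop$) keeps $\CR{B}\le R$ there.
\end{itemize}
It remains to check $\CR{B}(t)\le R$ for $t\in(-1,1)$, where windows straddle slope changes. Using the explicit formula $\CR{B}(t) = (\text{mass up to }t+1)/B(t)$, this is a one-variable monotonicity check. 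Finally, I would confirm that the infimum of $\CR{B}$ is attained at $t=0$, so that $\cons(B)$ equals the claimed value.
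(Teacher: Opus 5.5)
Your upper-bound construction is the paper's. Slope exactly $0$ on $[0,1]$ is allowed in class $\cal I$, so for $R\ge R_0$ no limit is needed. Your integer-point computation is also exactly how the paper bounds $\CR{B}(1)$ and $\CR{B}(2)$.

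\textbf{Main gap: non-integer reference points.} The problem is the sentence ``the fractional offset is handled by the same computation.'' For $t\in(i,i+1)$ the window $[t,t+1]$ carries two slopes, $\ell_i$ and then $\ell_{i+1}$. Write $m:=\CR{B}(t)-w_t$ and $\Lambda:=\ln(B(t+1)/B(t))$. Your invariants then give only $\CR{B}(t)\ge\wbot+m$ and $(\wbot+m)e^{-\Lambda}\le\wtop$. To conclude you would need $m\ge(e^{\Lambda}-1)/\Lambda$, and this fails when $\ell_{i+1}>\ell_i$, because a log-convex window lies below its chord. For instance, $\Lambda=\ell^*$ with $\ell_i<\ell^*<\ell_{i+1}$ satisfies both constraints while $\wbot+m<\wbot+(e^{\ell^*}-1)/\ell^*$.

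You also cannot assume the infimum sits at an integer. The right derivative of $\CR{B}$ at $i$ is $1-\ell_i w_i$, which is negative whenever $w_i>1/\ell_i$. For $R\ge R_0$ this is harmless, since $\CR{B}(t)\ge w_t+1$; the real difficulty is $R<R_0$.

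The paper supplies the missing reduction as follows.
\begin{itemize}
\item Place the minimizing point in $[1,2]$ and replace all slopes on $(-\infty,1)$ by $1/\wbot$.
\item This keeps $R$-robustness; on $[0,1]$ this follows from a derivative computation using $\ell_1\le 1/\wbot$.
\item Because the original function has $w_1\ge\wbot$, the replacement can only lower $\CR{}$ on $[1,\infty)$.
\item Now $w_1=\wbot$, so on $[1,2]$ the normalized mass is a sum of two exponentials whose derivative at $t=1$ is $1-\ell_1\wbot\ge 0$. It therefore has no interior minimum.
\item The integer-point bound at $t=1$ and $t=2$ then finishes the proof.
\end{itemize}
You need this, or some other, reduction step.

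\textbf{Second gap: the lower invariant.} Your route to $w_t\ge\wbot$ uses the differential identity plus robustness, propagated backwards. This cannot work at that generality, because the inequality is false for general $R$-robust bidding functions. The paper's own function $A$ of Definition~\ref{def:bidding_function_main} has $w_0=R-\wtop-1$. At $R=4$ this is about $0.20$, while $\wbot\approx 0.46$. Smoothing the jump at $0$ of the $A$ built for a slightly smaller robustness gives a continuous counterexample.

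So you must use the class-$\cal I$ structure, as the paper does at integer points. There $\CR{B}(i)=w_i+(e^{\ell_i}-1)/\ell_i\le R$ and $w_{i+1}=(w_i+(e^{\ell_i}-1)/\ell_i)e^{-\ell_i}$. These imply that $w_i<\wbot$ forces $w_{i-1}<w_i$. The monotone backward sequence would then converge to a fixed point $w^*<\wbot$ with $w^*e^{1/w^*}\le R$, which is impossible. The contradiction comes from this fixed point, not from $w\to 0$.

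Extending the lower invariant to non-integer $t$ then uses $\ell_i\le 1/\wbot$. Your upper invariant, by contrast, does hold for general bidding functions along the lines you sketch.
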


For the analysis it will be important to bound the work $w_B(i)$ at every integer reference point $i \in \mathbb Z$.
% The sequence of slopes
% This function definition is not directly convenient to use. The goal of this section is to derive a characterization of $R$-robust linear continuous sequences. Since it is hard to obtain this result directly, we start by characterizing weakly $R$-robust linear continuous bidding functions.

% It is natural to narrow the study to integer reference points, since these functions are described by their sequence of slopes, each starting and ending on these points. More precisely, the normalized cost at these points is easily described by the sequence $(\ell_i)_{i \in \ZZ}$ and the induced sequence of works $(w_b (i))_{i \in \ZZ}$ which satisfies for every $i \in \ZZ$

% \[
% e^{\ell_i} \cdot w_b (i + 1) = w_b (i) + \frac{e^{\ell_i} - 1}{\ell_i} . 
% \]

\begin{lemma}
If a linear continuous bidding function $B$ is $R$-robust, then for every $i \in \ZZ , w_B (i) \in \left[\wbot , \wtop\right]$. Moreover, $\ell_i \leq 1/\wbot$ for every $i \in \ZZ$. \label{lemma:LC_bounded_work}
\end{lemma}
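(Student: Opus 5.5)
The plan is to reduce everything to a one-step recurrence for the work at integer reference points and then run a monotonicity/divergence argument in each direction. Write $w_i := w_B(i)$ and $\ell=\ell_i$. Since $B$ is exponential with slope $\ell_i$ on $[i,i+1]$ and $B(i+1)=e^{\ell_i}B(i)$, we have $\int_i^{i+1}B(x)\diff x=B(i)\frac{e^{\ell_i}-1}{\ell_i}$. This gives the identities
\[
\CR{B}(i)=\frac{\int_{-\infty}^{i+1}B(x)\diff x}{B(i)}=e^{\ell_i}w_{i+1}=w_i+\frac{e^{\ell_i}-1}{\ell_i},
\]
with the convention $\frac{e^{0}-1}{0}=1$. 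Robustness will only be used through the integer constraints $\CR{B}(i)\le R$. We will use repeatedly that $\phi(w)=we^{1/w}$ exceeds $R$ exactly outside $[\wbot,\wtop]$, as shown for class $\cal E$.

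\emph{Upper bound $w_i\le\wtop$.} Suppose $w_i>\wtop$ for some $i$. The map $\ell\mapsto w_i+\frac{e^\ell-1}{\ell}$ is increasing, and at $\ell=1/w_i$ it equals $w_ie^{1/w_i}>R$. So the constraint $\CR{B}(i)\le R$ forces $\ell_i<\ell_{\max}(w_i)<1/w_i$, where $\ell_{\max}(w)$ solves $w+\frac{e^\ell-1}{\ell}=R$. The recurrence then gives
\[
w_{i+1}-w_i=e^{-\ell_i}\Bigl(w_i+\tfrac{e^{\ell_i}-1}{\ell_i}\Bigr)-w_i=(1-e^{-\ell_i})\Bigl(\tfrac1{\ell_i}-w_i\Bigr)>0.
\]
Hence $w_{i+1}>w_i>\wtop$, and by induction the sequence $(w_j)_{j\ge i}$ is increasing and stays above $\wtop$. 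Call the increment $g_w(\ell):=(1-e^{-\ell})(1/\ell-w)$. It is continuous and positive on $(0,1/w)$ and tends to $1$ as $\ell\to0$. On $[0,\ell_{\max}(w)]$ it is therefore bounded below by some $c(w)>0$, and uniformly so for $w$ in $[w_i,R-1]$. Here $R-1$ is an a priori upper bound on all works, coming from $\CR{B}(j)\ge w_j+1$. The works would then increase by at least a fixed $c>0$ at each step while staying bounded by $R-1$, which is a contradiction.

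\emph{Lower bound $w_i\ge\wbot$.} Here we go backwards. Suppose $w_i<\wbot$ and set $\ell=\ell_{i-1}$. The constraint $\CR{B}(i-1)=e^{\ell}w_i\le R$ gives $\ell\le\ln(R/w_i)$. Since $w_ie^{1/w_i}>R$, we have $\ln(R/w_i)<1/w_i$, so $\ell$ is bounded away from $1/w_i$. The recurrence, read backwards, gives
\[
w_{i-1}=w_i-(e^{\ell}-1)\Bigl(\tfrac1\ell-w_i\Bigr)<w_i.
\]
The decrement is continuous and positive on $(0,\ln(R/w_i)]$ and tends to $1$ as $\ell\to0$, so it is at least some $c>0$. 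The gap to $\wbot$ only grows as we move left, so this bound stays uniform. Thus $w_{i-k}\le w_i-kc$, which is eventually negative. This contradicts $w_j\ge0$.

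\emph{Slope bound.} Using the first part applied at $i+1$, we get $R\ge\CR{B}(i)=e^{\ell_i}w_{i+1}\ge e^{\ell_i}\wbot$. Hence $\ell_i\le\ln(R/\wbot)=1/\wbot$, because $\wbot e^{1/\wbot}=R$.

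The main obstacle is making the increments uniformly bounded below, since a priori $\ell_i\to0$ could make progress vanish. The key observation is that the increment tends to $1$, not $0$, as $\ell\to0$. Combined with $\ell$ being bounded away from the stationary slope $1/w$, and with the monotone drift keeping $w$ in a compact range, this yields a uniform positive constant.
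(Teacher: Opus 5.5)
Your argument is correct. It shares the paper's skeleton: the recurrence $w_{i+1}=e^{-\ell_i}\bigl(w_i+\frac{e^{\ell_i}-1}{\ell_i}\bigr)$, strict monotonicity of the works outside $[\wbot,\wtop]$, and a contradiction in each direction. The way you close the argument is genuinely different, though.

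\textbf{How the paper closes.} It shows that a work sequence leaving $[\wbot,\wtop]$ is monotone and bounded (by $R$ to the right, by $0$ to the left), hence converges. It then argues that the limit $w^*$ must be a fixed point of the recurrence. This forces the slope $1/w^*$ and therefore $w^*e^{1/w^*}\le R$, which is a contradiction. For the leftward direction it also needs the invariance claim ``$w_k\ge\wbot\Rightarrow w_{k+1}\ge\wbot$'' to keep the whole left tail below $\wbot$.

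\textbf{How you close.} You prove a uniform positive step size and run out of room against the a priori bounds $0<w_j\le R-1$. For the leftward step you use the constraint at $i-1$ in the form $e^{\ell_{i-1}}w_i\le R$, which bounds $\ell_{i-1}$ by $w_i$ alone and makes the invariance claim unnecessary.

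\textbf{What each buys.} Your version is quantitative: it bounds the number of steps before a violation. It also avoids the limit passage, which in the paper tacitly requires a convergent subsequence of the slopes (available because they are bounded by robustness). The price is that the uniformity of the step must be justified. Your derivation of the slope bound from $e^{\ell_i}w_{i+1}\le R$ and $w_{i+1}\ge\wbot$ is also slightly more direct than the paper's route through the monotonicity of $f$.

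\textbf{One step to tighten.} In the lower-bound case your compactness reasoning does not apply as stated. There $w$ ranges over $(0,w_i]$, and the admissible slopes $[0,\ln(R/w)]$ become unbounded as $w\to0$. So ``the gap to $\wbot$ only grows'' is a heuristic, not a proof. The fix is short:
\begin{itemize}
\item Write the decrement as $(e^\ell-1)\bigl(\frac1\ell-w\bigr)=\frac{e^\ell-1}{\ell}(1-w\ell)$.
\item Since $\frac{e^\ell-1}{\ell}\ge1$ and $\ell\le\ln(R/w)$, it is at least $1-w\ln(R/w)$.
\item The map $w\mapsto w\ln(R/w)$ is increasing on $(0,R/e)\supseteq(0,\wbot)$, because $\wbot\le1\le R/e$.
\item Hence every step to the left from any $w\le w_i$ lowers the work by at least $1-w_i\ln(R/w_i)$. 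This is positive exactly because $w_ie^{1/w_i}>R$.
\end{itemize}

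A minor correction: ``$\ell_i<\ell_{\max}(w_i)$'' should read $\ell_i\le\ell_{\max}(w_i)$.
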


\begin{proof}
Let $(\ell_i)_{i \in \ZZ}$ be a sequence of slopes, $B$ the associated bidding function and $(w_B (i))_{i \in \ZZ}$ be the induced sequence of works. We define $w_i \mathrel{:=} w_B (i)$ for the ease of notation. Let us suppose that $B$ is $R$-robust.

For convenience, we define a function $f$, which computes the increase of the cost in one step, and $g$ normalizes $f$ to compute the change in the work:
\[
f(w , \ell) = w + \frac{e^{\ell} - 1}{\ell} , \hspace{1cm} g(w ,\ell) = f(w ,\ell)e^{-\ell} .
\]

Note that $f$ is increasing in its two arguments, and $g$ is increasing in $w$ but decreasing in $\ell$, both function are continuous on $[0 , +\infty)$, by extending $f$ with $f(w , 0) = w + 1.$

The normalized mass at integer points $\CR{B} (i)$ is equal to $w_i + \frac{e^{\ell_i} - 1}{\ell_i}$, so $R$-robustness translates for every $i \in \ZZ$ into 
\[
w_i + \frac{e^{\ell_i} - 1}{\ell_i} \leq R .
\]

We make the following claims: 

\begin{enumerate}
 \item If the sequence $(w_i)_{i \in \ZZ}$ converges (either when $i \rightarrow -\infty$, or $i \rightarrow + \infty$) towards $w^{*}$, then  $w^{*} \in \left[\wbot , \wtop\right]$
 \item If for some $k \in \ZZ$, $w_k > \wtop$ or $w_k < \wbot$, then $w_{k + 1} > w_k$.
 \item If for some $k \in \ZZ$, $w_k \geq \wbot$, then $w_{k + 1} \geq \wbot$.
\end{enumerate}

Next, we prove these claims: 

\begin{enumerate}
 \item By continuity of $g$, $w^{*}$ must satisfy that there exists $x$ such that $g(w^{*} , x) = w^{*}$, which gives $x = 1/w^{*}$. Thus the normalized cost at integer points must converge to $f(w^{*} , 1/w^{*}) = w^{*} e^{1/w^{*}} \leq R$. Which gives $w^{*} \in \left[\wbot , \wtop\right]$ by definition of these two values.
 \item Let $w_k$ be such a value. Note that $f(w_k , 1/w_k) > R$. We have $f(w_k , \ell_k) \leq R$, so $\ell_k  < 1/w_k$ by monotonicity of $f$. Hence $g(w_k , \ell_k) = w_{k + 1} > g(w_k , 1/w_k) = w_k$, by monotonicity of $g$.
 \item Let $w_k$ be such a value. Note that $f(\wbot , 1/\wbot) = R \geq f(w_k , \ell_k) \geq f(\wbot , \ell_k)$, hence $\ell_k \leq 1/\wbot$ by monotonicity of $f$. So $g(w_k , \ell_k) \geq g(\wbot , 1/\wbot) = \wbot$, by monotonicity of $g$.
\end{enumerate}
Next, we prove that for all $i \in \ZZ$, $w_i \leq \wtop$ : Suppose, by way of contradiction, that for some $k \in \ZZ$, $w_k > \wtop$. Then by (2), we have that for every $j \geq k$, $w_j < w_{j + 1}$, hence the sequence $(w_i)_{i \in \ZZ}$ is increasing. It is also upper bounded because $w_j \leq f(w_j , x) \leq R $, and therefore it must converges to some $w^{*} > w_k > \wtop$, which is not possible by (1).

We apply the same process to prove that for every $i \in \ZZ$, $w_i \geq \wbot$ : Let us suppose towards contradiction that some $w_k < \wbot$. By (2) and (3), it follows that the sequence $(w_i)_{i \leq k}$ is increasing, we also have it is lower bounded by $0$, so it must converge when $i$ goes to $- \infty$. Its limit $w^{*}$ must satisfy $w^{*} < \wbot$, which is not possible by (1). 
\end{proof}

The above Lemma formalizes an intuitive idea: if the work is too high at some point, then the robustness requirement cannot be met in the immediate future. Conversely, if the work is too low, it means the sequence is not dense in the past, hence the robustness requirement must have been broken at some point.

From now on, it will be important to study the normalized mass of a given bidding function in class $\cal I$, hence the next proposition.

\begin{proposition} \label{prop:class_I_normalized_mass}
    Let $B$ be a bidding function from class $\cal I$, let $(\ell_i)_{i \in \ZZ}$ be its sequence of slopes and $(w_i)_{i \in \ZZ}$ be its sequence of work at reference point $i \in \ZZ$. We have \[
    \CR{B}(t) = e^{(1 - \left[t\right]) \ell_{\lfloor t\rfloor}} \left(w_{\lceil t\rceil} + \frac{e^{\left[t\right] \ell_{\lceil t\rceil}} - 1}{\ell_{\lceil t\rceil}}\right).
    \]
\end{proposition}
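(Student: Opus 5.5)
The plan is a direct computation: split the integral defining $\CR{B}(t)$ at the integer reference point just above $t$, and use the exponential interpolation on each unit interval. Throughout, write $i:=\lfloor t\rfloor$ and $s:=[t]\in[0,1)$. The index $\lceil t\rceil$ in the statement has to be read as $i+1$, including when $t$ is an integer; the reason is explained at the end. With this convention, $t\in[i,i+1)$ and $t+1\in[i+1,i+2)$.

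First I record the two facts about class $\cal I$ that the computation needs. On $[j,j+1]$ we have $B(j+x)=B(j)e^{x\ell_j}$ for $x\in[0,1]$, so $B(j+1)=B(j)e^{\ell_j}$. By definition of the work, $\int_{-\infty}^{j}B(x)\diff x=w_j B(j)$. Combining these gives the recursion
\[
e^{\ell_j}w_{j+1}=w_j+\frac{e^{\ell_j}-1}{\ell_j},
\]
with $(e^{\ell}-1)/\ell$ read as $1$ when $\ell=0$.

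The main step is to split the numerator of $\CR{B}(t)$ at $i+1$:
\[
\int_{-\infty}^{t+1}B(x)\diff x=w_{i+1}B(i+1)+\int_{i+1}^{i+1+s}B(i+1)e^{(x-i-1)\ell_{i+1}}\diff x
=B(i+1)\Bigl(w_{i+1}+\frac{e^{s\ell_{i+1}}-1}{\ell_{i+1}}\Bigr).
\]
For the denominator, $B(t)=B(i)e^{s\ell_i}=B(i+1)e^{-(1-s)\ell_i}$. Dividing, the factor $B(i+1)$ cancels and leaves exactly $e^{(1-s)\ell_i}\bigl(w_{i+1}+\frac{e^{s\ell_{i+1}}-1}{\ell_{i+1}}\bigr)$, which is the claimed expression.

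The only delicate point, and the main thing to check, is integer $t$. If $\lceil t\rceil$ were taken literally as $t$, the formula would give $e^{\ell_t}w_t$. The true value is $w_t+\frac{e^{\ell_t}-1}{\ell_t}$, which by the recursion above equals $e^{\ell_t}w_{t+1}$. So the formula is correct at integers precisely under the reading $\lceil t\rceil=\lfloor t\rfloor+1$, and agrees with the literal reading at all non-integer $t$. Under that reading it is the right-continuous value, consistent with the expression $\CR{B}(i)=w_i+\frac{e^{\ell_i}-1}{\ell_i}$ used in the proof of \cref{lemma:LC_bounded_work}. I would state this convention explicitly. Finally, the degenerate case $\ell_{i+1}=0$ is handled by the continuous extension of $(e^{s\ell}-1)/\ell$ to $s$.
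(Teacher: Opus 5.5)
Your proof is correct and is the same computation as the paper's: split $\int_{-\infty}^{t+1}B$ at $\lceil t\rceil$, write the left part as $w_{\lceil t\rceil}B(\lceil t\rceil)$, integrate the exponential piece on the next unit interval, and divide by $B(t)=B(\lceil t\rceil)e^{-(1-[t])\ell_{\lfloor t\rfloor}}$. Your remark on integer $t$ is also right and fills a small gap in the paper. The paper's steps $t+1-\lceil t\rceil=[t]$ and $B(\lceil t\rceil)/B(t)=e^{(1-[t])\ell_{\lfloor t\rfloor}}$ fail at integers under the literal reading of $\lceil\cdot\rceil$. The stated formula is therefore valid for all $t$ exactly under your convention $\lceil t\rceil:=\lfloor t\rfloor+1$, or equivalently by continuity of $\CR{B}$ for continuous $B$.
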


\begin{proof}
For any $t \in \RR$, we have \begin{align*}
\CR{B} (t) & = \int_{- \infty}^{t + 1} B(x)/B(t) \diff x \\
 & = \int_{- \infty}^{\lceil t\rceil} B(x)/B(t) \diff x + \int_{\lceil t\rceil}^{t + 1} B(x)/B(t) \diff x \\
 & = \frac{1}{B(t)} \left(w_{\lceil t\rceil} B(\lceil t\rceil) + \int_{\lceil t\rceil}^{t + 1} B(\lceil t\rceil) e^{[x] \ell_{\lceil t\rceil}} \diff x\right) \\
 & = \frac{B(\lceil t\rceil)}{B(t)} \left(w_{\lceil t\rceil} + \int_0^{\left[t\right]} e^{x \ell_{\lceil t\rceil}} \diff x\right) \\
 & = e^{(1 - \left[t\right]) \ell_{\lfloor t\rfloor}} \left(w_{\lceil t\rceil} + \frac{e^{\left[t\right] \ell_{\lceil t\rceil}} - 1}{\ell_{\lceil t\rceil}}\right). \\
\end{align*}
\end{proof}

\begin{lemma}
Let $B$ be an $R$-robust bidding function from class $\cal I$, $(\ell_i)_{i \in \ZZ}$ its associated sequence of slopes, and $(w_B (i))_{i \in \ZZ}$ its sequence of works. Then $\cons(B)$ satisfies the following : 
\begin{itemize}
 \item If $R \geq R_0$, then $\cons(B) \geq \wbot + 1$.
 \item If $R \leq R_0$, let $\ell^{*}$ such that $g(\wbot , \ell^{*}) = \wtop$, then $\cons(B) \geq \wbot + \frac{e^{\ell^{*}} - 1}{\ell^{*}}$.
\end{itemize}
\label{lemma:pareto-class-I-lower}
\end{lemma}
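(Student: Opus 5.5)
The plan is to bound the normalized mass $\CR{B}(t)$ from below at every reference point $t$, using the work bounds of \cref{lemma:LC_bounded_work} and the one-step maps $f,g$ from its proof. Recall that $w_{i+1}=g(w_i,\ell_i)$, that $\CR{B}(i)=f(w_i,\ell_i)=w_i+\frac{e^{\ell_i}-1}{\ell_i}$, and that $w_i\in[\wbot,\wtop]$ for all $i\in\ZZ$ by \cref{lemma:LC_bounded_work}. I read $R_0$ as the robustness value at which $\wtop=\wbot+1$.

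\textbf{Integer reference points.} Since $g$ is increasing in $w$ and $w_i\ge\wbot$, we get
\[
g(\wbot,\ell_i)\le g(w_i,\ell_i)=w_{i+1}\le\wtop .
\]
The map $\ell\mapsto g(\wbot,\ell)$ is continuous and decreasing, with $g(\wbot,0)=\wbot+1$.
\begin{itemize}
\item If $R\le R_0$, then $\wbot+1\ge\wtop$. Hence $\ell^*$ with $g(\wbot,\ell^*)=\wtop$ exists, and monotonicity forces $\ell_i\ge\ell^*$.
\item If $R\ge R_0$, the only constraint is the trivial one $\ell_i\ge0$. In this case set $\ell^*:=0$, using the convention $\frac{e^0-1}{0}=1$.
\end{itemize}
In both cases, since $\ell\mapsto\frac{e^\ell-1}{\ell}$ is increasing,
\[
\CR{B}(i)=w_i+\frac{e^{\ell_i}-1}{\ell_i}\ge \wbot+\frac{e^{\ell^*}-1}{\ell^*}.
\]
This is exactly the claimed bound at integer points. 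Finally, I would check that $g(\wbot,\ell^*)=\wtop$ is the same as the equation $\wbot+\frac{e^{\ell^*}-1}{\ell^*}=\wtop e^{\ell^*}$. This links the statement to \cref{theorem:lower_bound_continuous_linear}; it is only a rewriting of $g$.

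\textbf{Fractional reference points.} Let $t=i+s$ with $s\in(0,1)$. Substituting $w_{i+1}=e^{-\ell_i}\bigl(w_i+\frac{e^{\ell_i}-1}{\ell_i}\bigr)$ into \cref{prop:class_I_normalized_mass} gives
\[
\CR{B}(t)=e^{-s\ell_i}\,\CR{B}(i)+e^{(1-s)\ell_i}\,\frac{e^{s\ell_{i+1}}-1}{\ell_{i+1}} .
\]
On $(i,i+1)$, $B$ is exponential with slope $\ell_i$, which gives the ODE
\[
\frac{d}{dt}\CR{B}(t)=\frac{B(t+1)}{B(t)}-\ell_i\,\CR{B}(t).
\]
So an interior minimum can only occur where $\CR{B}(t)=\frac{B(t+1)}{\ell_iB(t)}$.

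My first attempt is to show that $\CR{B}$ on $[i,i+1]$ is at least $\min\bigl(\CR{B}(i),\CR{B}(i+1)\bigr)$. The idea is that $B(t+1)/B(t)=e^{(1-s)\ell_i+s\ell_{i+1}}$ is log-linear in $s$, so $\CR{B}$ solves a linear ODE with log-linear forcing. If a derivative-sign argument on this ODE does not directly give the two-endpoint comparison, I would fall back to a direct estimate. I would lower bound the second term by $e^{(1-s)\ell_i}\,s\cdot\frac{e^{\ell^*}-1}{\ell^*}$-type expressions, using $\ell_{i+1}\ge\ell^*$ and convexity of $\ell\mapsto\frac{e^{s\ell}-1}{\ell}$. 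I would combine this with $\CR{B}(i)\ge\wbot+\frac{e^{\ell^*}-1}{\ell^*}$ and check that the resulting function of $s$ and $\ell_i$ stays above the bound.

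\textbf{Main obstacle.} The fractional case is where I expect the difficulty. The factor $e^{-s\ell_i}$ shrinks the first term, so the argument needs the interplay between $\ell_i$ and $\ell_{i+1}$, through $w_{i+1}\ge\wbot$ and $w_{i+1}\le\wtop$, to compensate. If the endpoint comparison fails in general, I would restrict to what the consistency definition actually requires. By scaling, it suffices to lower bound $\CR{B}$ at an arbitrary point. Shifting the integer grid by the fractional part of that point turns it into an integer reference point of a function that is exponential between consecutive shifted points up to reparametrization, and I would try to reduce to the integer case that way. This reduction is the step I am least sure of.
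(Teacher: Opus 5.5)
\textbf{There is a genuine gap in the fractional case.} Your integer-point argument is correct, and a little more direct than the paper's, since $g(\wbot,\ell_i)\le g(w_i,\ell_i)=w_{i+1}\le\wtop$ gives $\ell_i\ge\ell^*$ for every $i$. The fractional case, however, carries the real content of the lemma, and none of your three routes closes it.

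\emph{The endpoint comparison is false in general.} The claim $\CR{B}\ge\min(\CR{B}(i),\CR{B}(i+1))$ on $[i,i+1]$ can fail. By your own ODE, the derivative of $\CR{B}$ at $i^+$ equals $1-\ell_i w_i$, and at $(i+1)^-$ it equals $e^{\ell_{i+1}}(1-\ell_i w_{i+2})$. A robust function can make the first negative and the second positive. For example, take $R=4$ and the following slopes: $1/w$ with $w\approx 2.274$ to the left of $i$ (so $w_i=w$), then $\ell_i=0.6$, $\ell_{i+1}=1.2$, then $1/\wtop$ from $i+2$ on. One checks that this function is $4$-robust. Yet $\CR{B}(i)\approx 3.64$ and $\CR{B}(i+1)\approx 3.93$, while $\CR{B}$ dips to about $3.59$ inside $(i,i+1)$.

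\emph{The direct estimate fails.} Replacing $\CR{B}(i)$ by the constant $c^*:=\wbot+\frac{e^{\ell^*}-1}{\ell^*}$ loses too much. At $s=1$ it requires $e^{-\ell_i}c^*\ge\wbot$. For $\ell_i=1/\wbot$ this would need $c^*\ge R$, which is false.

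\emph{The grid-shifting reduction is unavailable.} A non-integer translate of a class-${\cal I}$ function has kinks inside the unit intervals. So it is not in class ${\cal I}$, and the $f$/$g$ recursion no longer describes it.

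\textbf{Missing idea: lower the work at the left endpoint to $\wbot$ before comparing with the endpoints.} This is what the paper does by replacing all slopes left of the interval by $1/\wbot$. For fixed slopes,
$\CR{B}(i+s)=e^{-s\ell_i}\bigl(w_i+\frac{e^{\ell_i}-1}{\ell_i}\bigr)+e^{(1-s)\ell_i}\frac{e^{s\ell_{i+1}}-1}{\ell_{i+1}}$
is increasing in $w_i$. Since $w_i\ge\wbot$,
\[
\CR{B}(i+s)\;\ge\; h(s):=e^{-s\ell_i}f(\wbot,\ell_i)+e^{(1-s)\ell_i}\frac{e^{s\ell_{i+1}}-1}{\ell_{i+1}} .
\]
Now $h'(0)=1-\ell_i\wbot\ge 0$, because $\ell_i\le 1/\wbot$ by \cref{lemma:LC_bounded_work}. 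Also $h$ is a combination of two exponentials, so $e^{s\ell_i}h'(s)$ is monotone in $s$. Given $h'(0)\ge 0$, $h$ is therefore either nondecreasing on $[0,1]$ or increases and then decreases. In both cases $\min_{[0,1]}h=\min(h(0),h(1))$.

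The endpoint values are $h(0)=f(\wbot,\ell_i)$ and $h(1)=f(g(\wbot,\ell_i),\ell_{i+1})$, where $g(\wbot,\ell_i)\ge\wbot$ because $\ell_i\le 1/\wbot$. Both are at least $f(\wbot,\ell^*)$ by exactly the monotonicity you already used at integer points, since $\ell_i,\ell_{i+1}\ge\ell^*$. This completes the fractional case.
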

\begin{proof}
We start by proving that we can modify the prefix  of a slope sequence to ensure its work is $\wbot$ on the modified prefix, without violating robustness, nor making the consistency worse.

Let $(\ell_i)_{i \in \ZZ}$ be a sequence of slopes that, without loss of generality, reaches its minimal normalized mass at $t \in \left[1 , 2\right]$. Let $(d_i)_{i \in \ZZ}$ be the same sequence of slopes except for $i \leq 0$ where $d_i = 1/\wbot$. We call $\mu$ and $\pi$  their associated bidding functions, respectively. First, note that the normalized mass of $\pi$ on $\left(-\infty , 0\right]$ is equal to $R$, and that for $t \in [1 , +\infty)$, we have $\CR{\mu}(t) \geq \CR{\pi} (t)$.

On $[0 , 1]$ we have using Proposition \ref{prop:class_I_normalized_mass} \begin{align*}
\CR{\pi} (t) & = e^{(1-t)/\wbot} \left(\wbot + \frac{e^{t \ell_1} - 1}{\ell_1}\right) \\
 & = \left(R - \frac{e^{1/\wbot}}{\ell_1}\right) e^{- t/\wbot} + e^{t (\ell_1 - 1/\wbot)} \frac{e^{1/\wbot}}{\ell_1}.
\end{align*}   

We recall that by Lemma \ref{lemma:LC_bounded_work}, $\wbot \leq 1/\ell_1$. Deriving this expression in $t$, we obtain 

\begin{align*}
(\CR{\pi})' (t) & = -\frac{1}{\wbot} e^{- t/\wbot} \left(\wbot e^{1/\wbot} + e^{1/\wbot} \frac{e^{t \ell_1} - 1}{\ell_1}\right) + e^{t(\ell_1 - 1/\wbot)} e^{1/\wbot} \\
 & = e^{t(\ell_1 - 1/\wbot)} \frac{e^{1/\wbot}}{\wbot} \left(\wbot - \frac{1}{\ell_1}\right) + \frac{e^{- t/\wbot}}{\wbot} e^{1/\wbot} \left(\frac{1}{\ell_1} - \wbot\right) \\
 & = \frac{e^{1/\wbot}}{\wbot} e^{- t/\wbot} \left(\wbot - \frac{1}{\ell_1}\right) (e^{t \ell_1} - 1) \leq 0.
\end{align*}

Since $\CR{\pi} (0) = R$, we have that $\CR{\pi} (t) \leq R$ for every $t \in \left[0 , 1\right]$, thus the bidding function $\pi$ is $R$-robust.

The consistency is reached for $t \in \left[1 , 2\right]$, using Proposition \ref{prop:class_I_normalized_mass} with $w_1 = g(\wbot, \ell_1) = e^{-\ell_1} (\wbot + \frac{e^{\ell_1} - 1}{\ell_1})$:

\[
\CR{\pi} (t) = \left((\wbot + \frac{e^{\ell_1} - 1}{\ell_1})e^{-\ell_1} + \frac{e^{(t - 1)\ell_2} - 1}{\ell_2}\right) \cdot e^{(2-t) \ell_1}.
\] 

Equivalently, for $t \in \left[0 , 1\right]$, \begin{align*}
\CR{\pi} (t + 1) & = \left(\wbot + \frac{e^{\ell_1} - 1}{\ell_1} + e^{\ell_1} \frac{e^{t \ell_2} - 1}{\ell_2}\right) \cdot e^{- t \ell_1} \\
 & = e^{\ell_1 (1 - t)} \left(w_2 + \frac{e^{t \ell_2} - 1}{\ell_2}\right).
\end{align*}

Deriving this expression in $t$, we obtain

\begin{align*}
(\CR{\pi})' (t + 1) & = \frac{e^{\ell_1 (1 - t)}}{\ell_2} \left(\ell_2 e^{\ell_2 t} - \ell_1(e^{\ell_2 t} + \ell_2 w_2 - 1)\right) \\
 (\CR{\pi})' (1) & = \frac{e^{\ell_1}}{\ell_2} (\ell_2 - \ell_1\ell_2 w_2) = e^{\ell_1} (1 - \ell_1 w_2) = 1 - \ell_1 \wbot \geq 0
\end{align*}

Hence the function $\CR{\pi}$ is increasing at $t = 1$. Noticing that the function $\CR{\pi}$ is bi-tonic on $[1,2]$ as being the sum of two exponential functions, we obtain that its minimum is reached on the boundary. Hence $\cons(\pi) = \min(\CR{\pi} (1), \CR{\pi} (2))$.

We have $\CR{\pi} (1) = f(\wbot , \ell_1)$ and $\CR{\pi} (2) = f(w_2 , \ell_2)$. Thus: 

\begin{itemize}
 \item $\cons(\pi) = \min(\CR{\pi} (1), \CR{\pi} (2)) \geq f(\wbot , 0) = \wbot + 1$, by monotonicity of $f$ in its two arguments. 
 \item If $\wtop - \wbot \leq 1$, then let $\ell_1^{*}$ be such that $g(\wbot , \ell_1^{*}) = \wtop$. We have $g(\wbot , \ell_1) = w_2 \leq \wtop = g(\wbot, \ell_1^*$. Therefore $\ell_1 \geq \ell_1^{*}$, so we have $\CR{\pi} (1) = f(\wbot , \ell_1) \geq f(\wbot , \ell_1^{*})$. Moreover, introducing $\ell^*_2$ with a similar definition, $g(w_2 , \ell_2^{*}) = \wtop = g(\wbot , \ell_1^{*})$, and since $w_2 \geq \wbot$, we must have $\ell_2^{*} \geq \ell_1^{*}$, so $\ell_2 \geq \ell_2^* \geq \ell_1^{*}$. We have $\CR{\pi} (2) = f(w_2 , \ell_2) \geq f(\wbot , \ell_1^{*})$.
\end{itemize}
\end{proof}

It remains to show the upper bound on the consistency of class $\cal I$ bidding functions. The following lemma will be useful, as it shows that a sequence of slopes can always be extended by the slopes $1/\wtop$.

\begin{lemma}           \label{lemma:pareto-class-I-upper}
Let $R \geq e$. 

\begin{itemize}
 \item If $R \geq R_0$, there exists a $R$-robust class $\cal I$ bidding function that achieves consistency $\wbot + 1$.
 \item Otherwise, there exists an $R$-robust class $\cal I$ bidding function that achieves consistency $\wbot + \frac{e^{\ell^{*}} - 1}{\ell^{*}}$, where $\ell^{*}$ is defined as in \cref{thm:pareto-class-I}.
\end{itemize}
\end{lemma}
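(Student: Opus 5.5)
The plan is to reuse the extremal structure found in the proof of \cref{lemma:pareto-class-I-lower}: all slopes up to the origin equal to $1/\wbot$, one free slope $\ell_1$ on $[1,2]$, and constant slopes after reference point $2$. Define the slope sequence by $\ell_i = 1/\wbot$ for $i \leq 0$, a chosen $\ell_1 \in [0, 1/\wbot]$, and $\ell_i = 1/w_2$ for $i \geq 2$. Here $w_2 = g(\wbot, \ell_1)$ is the work at reference point $2$.

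The choice of $\ell_1$ depends on the regime.
\begin{itemize}
\item If $R \geq R_0$, i.e.\ $\wtop - \wbot \geq 1$, take $\ell_1 = 0$ (a deterministic bid). Then $w_2 = g(\wbot,0) = \wbot + 1 \in [\wbot, \wtop]$.
\item If $R < R_0$, take $\ell_1 = \ell^*$, defined by $g(\wbot, \ell^*) = \wtop$, so that $w_2 = \wtop$.
\end{itemize}
In both regimes $w_2 \in [\wbot,\wtop]$.

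Robustness is checked piece by piece.
\begin{itemize}
\item On $(-\infty, 0]$: the function is the class $\cal E$ exponential $e^{t/\wbot}$ there, so $w_i = \wbot$ for $i \leq 1$ and $\CR{B}(t) = \wbot e^{1/\wbot} = R$.
\item On $[0,1]$: the derivative computation from the proof of \cref{lemma:pareto-class-I-lower} applies verbatim. It shows $\CR{B}$ is nonincreasing from $\CR{B}(0) = R$, using only $\ell_1 \leq 1/\wbot$, which holds in both regimes.
\item On $[2,\infty)$: apply \cref{lemma:right_extension} with $T=2$ and $w = w_2$. The extension $B(2)e^{(t-2)/w_2}$ is exactly the constant-slope class $\cal I$ tail, so $\CR{B}(t) \leq w_2 e^{1/w_2} \leq R$ because $w_2 \in [\wbot,\wtop]$.
\item On $[1,2]$: by \cref{prop:class_I_normalized_mass}, $\CR{B}(1+s)$ is a combination of two exponentials in $s$. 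I would show it has no interior maximum, so its maximum is at an endpoint. The endpoints satisfy $\CR{B}(1) = f(\wbot,\ell_1) \leq f(\wbot, 1/\wbot) = R$, and $\CR{B}(2) \leq R$ by the previous item.
\end{itemize}

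Consistency is evaluated at $t=1$, giving $\CR{B}(1) = f(\wbot,\ell_1)$. This equals $\wbot+1$ in the first regime and $\wbot + (e^{\ell^*}-1)/\ell^*$ in the second, matching \cref{thm:pareto-class-I}. To confirm it is the infimum:
\begin{itemize}
\item the mass is $R$ on $(-\infty,0]$ and decreasing on $[0,1]$, so it is at least $\CR{B}(1)$ there;
\item on $[2,\infty)$ it is constant, equal to $w_2 e^{1/w_2}$. Since $w_2 = g(\wbot,\ell_1) = f(\wbot,\ell_1) e^{-\ell_1}$, we get $w_2 e^{1/w_2} \geq w_2 e^{\ell_1} = f(\wbot,\ell_1)$ provided $\ell_1 \leq 1/w_2$. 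This holds trivially for $\ell_1=0$. For $\ell^*$ it amounts to $\ell^* \leq 1/\wtop$, which I expect to verify from the defining equation and monotonicity of $g$.
\item on $[1,2]$, the earlier lower-bound proof already showed $\CR{B}$ is increasing at $1$; together with the shape argument below, this keeps the minimum at the endpoint $1$.
\end{itemize}
In any case, taking the infimum over all $t$ only weakly lowers the consistency, so the stated value is at most achieved.

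The main obstacle is the behaviour on $[1,2]$. A sum $a e^{\alpha s} + b e^{\beta s}$ has at most one critical point, but that point could be an interior maximum rather than a minimum. I would first compute the sign pattern of the derivative directly from \cref{prop:class_I_normalized_mass}. It has the form
\[ e^{\ell_1(1-s)}\bigl(\ell_2 e^{\ell_2 s} - \ell_1(e^{\ell_2 s} + \ell_2 w_2 - 1)\bigr)/\ell_2, \]
whose bracket is monotone in $s$ because $\ell_2 = 1/w_2 \geq \ell_1$. The bracket is nonnegative at $s=0$, so it stays nonnegative and $\CR{B}$ is increasing on $[1,2]$. That gives both the robustness bound (maximum at $t=2$) and the consistency claim (minimum at $t=1$). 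The only remaining inequality, $\ell_1 \leq 1/w_2$, is the same one needed on $[2,\infty)$.
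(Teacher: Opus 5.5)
Your construction is the paper's own, shifted by one unit: slopes $1/\wbot$, then a single special slope ($0$, or $\ell^*$ with $g(\wbot,\ell^*)=\wtop$), then an exponential tail, checked piece by piece with the right-extension lemma. It is correct, and the only real deviation is that for $R\ge R_0$ you use tail slope $1/(\wbot+1)$ instead of the paper's $1/\wtop$, which lets one tail slope $1/w_2$ handle both regimes. The inequality $\ell^*\le 1/\wtop$ that you defer closes as in the paper, since $g(\wtop,\ell^*)\ge g(\wbot,\ell^*)=\wtop=g(\wtop,1/\wtop)$ and $g$ is decreasing in $\ell$.
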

\begin{proof}
For the case $R \geq R_0$, we consider the bidding function $b$ with slopes $(\ell_i)_{i \in \ZZ}$ defined by: \[
\ell_i = \begin{cases}
 1/\wbot &\text{if } i < 0 \\
 0 &\text{if } i = 0 \\
 1/\wtop &\text{if } i > 0.
\end{cases} 
\]

We want to show that $\CR{B}(t) \leq R$ for all $t \in \RR$. We consider the four following cases: 

For all $t \in (-\infty , - 1]$, we have $\CR{B} (t) = R$, due to the constant slope $1/\wbot$.

For all $t \in \left(- 1 , 0\right]$, we have $\CR{B} (t) = (\wbot + t + 1) e^{- t/\wbot}$, so $\CR{B} (- 1) = R$, $\CR{B} (0) = \wbot + 1$, and the derivative on this interval is $(\CR{B})'(t) = - \frac{t + 1}{\wbot} e^{- t/\wbot} \leq 0$, so the maximum normalized mass on this interval is $R$.

For all $t \in \left(0 , 1\right]$, $\CR{B} (t) = (\wbot + 1 + \wtop(e^{t/\wtop} - 1))$. This is clearly increasing, so the maximum on this interval is $\CR{B} (1) = \wbot + 1 + R - \wbot \leq R$, with equality when $R = R_0$.

For all $t \in (1 , +\infty)$, we use \cref{lemma:right_extension}. For this purpose we need to check $w_B(2) < \wtop$, which holds by $w_B(2) = g(w_B(1), \ell_1) = g(\wbot + 1 , 1/\wtop) \leq g(\wtop , 1/\wtop) = \wtop.$

As a result, this bidding function has robustness $R$ and consistency $\wbot + 1$.

For the case $R < R_0$, we consider the bidding function $B$ with sequence of slopes $(\ell_i)_{i \in \ZZ}$ defined by: 
\[
\ell_i = \begin{cases}
 1/\wbot &\text{if } i < 0 \\
 \ell^{*} &\text{if } i = 0 \\
 1/\wtop &\text{if } i > 0.
\end{cases} 
% \hspace{1cm} w_i = \begin{cases}
%  \wbot &\text{if } i \leq 0 \\
%  \wtop &\text{if } i \geq 1
% \end{cases} .
\]

We recall that by definition, $\ell^{*}$ satisfies $g(\wbot , \ell^{*}) = \wtop$. The normalized mass for $t \in (-\infty , - 1) \cup (1 , +\infty)$ can be treated exactly as in the case $R \geq R_0$.

We first show that $\ell^* \leq 1/\wtop$, this comes from monotonicity of $g$ and the inequality \begin{align*}
    g(\wtop, \ell^*) \geq g(\wbot, \ell^*) = \wtop = g(\wtop, 1/\wtop)
\end{align*}

If $t \in [- 1 , 0]$, then $\CR{B}(t) = e^{- t/\wbot}(\wbot + \frac{e^{(t + 1)\ell^{*}} - 1}{\ell^{*}})$, so the derivative gives \[
(\CR{B})' (t) = e^{- t/\wbot} (1 - \frac{1}{\wbot \ell^{*}}) (e^{(t + 1)\ell^{*}} - 1) \leq 0.
\] 

Hence the maximum normalized mass on this interval is reached at $-1$, for a value of $\CR{B}(-1) = R$.

If $t \in [0 , 1]$, then $\CR{B}(t) = e^{- t \ell^{*}} (\wtop + \wtop (e^{t/\wtop} - 1)) = \wtop e^{t (1/\wtop - \ell^{*})}$. We showed earlier that $\ell^* \leq 1/\wtop$, hence  $\CR{B}$ is increasing, therefore, the maximum normalized mass on the interval $[0,1]$ is $\CR{B}(1) = R e^{-\ell^*} \leq R$.

We showed that in both cases, the given bidding function $B$ of class $\cal I$ is $R$-robust and has the stated consistency.
\end{proof}

% \begin{figure}[htbp]
% \centering
%  ???
%  \caption{Competitive ratio of a continuous sequence with robustness $R = 4$ and consistency $C = 1.464$.}
% \end{figure}

\begin{theorem}
If $R = e + \varepsilon$ and $\varepsilon \to 0$, then the optimal consistency $C$ reached by $R$-robust class $\cal I$ bidding functions satisfies \[
C = e - O(\sqrt{\varepsilon}) + o(\varepsilon).
\]
\end{theorem}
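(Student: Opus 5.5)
By \cref{thm:pareto-class-I}, the optimal consistency of $R$-robust class $\cal I$ bidding functions is known in closed form, so the plan is a perturbative expansion of that formula around $R=e$. Since $R_0>e$ (at $R=e$ we have $\wtop-\wbot=0<1$), the second case applies for all small $\varepsilon$. Hence
\[
C=f(\wbot,\ell^*)=\wbot+\frac{e^{\ell^*}-1}{\ell^*},
\]
where $\ell^*$ is defined implicitly by $g(\wbot,\ell^*)=\wtop$, with $f,g$ as in the proof of \cref{lemma:LC_bounded_work}. The task reduces to expanding $\wbot,\wtop$ and then $\ell^*$ in powers of $\sqrt{\varepsilon}$.

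First I expand $\wbot,\wtop$. Write $h(w)=we^{1/w}$. Then $h(1)=e$, $h'(w)=e^{1/w}(1-1/w)$ so $h'(1)=0$, and $h''(1)=e$. Thus $h(1\pm\delta)=e+\tfrac e2\delta^2+O(\delta^3)$. Because $h''(1)\neq 0$, the two branches of $h^{-1}$ near $e$ are smooth functions of $\sqrt{R-e}$. This gives
\[
\wtop=1+\delta+O(\delta^2),\qquad \wbot=1-\delta+O(\delta^2),\qquad \delta=\sqrt{2\varepsilon/e}.
\]

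Next I locate $\ell^*$. At $\varepsilon=0$ we have $\wbot=\wtop=1$ and $\ell^*=1$, since $g(1,1)=1$. Set $\ell^*=1+\eta$. The map $q(\ell)=(e^\ell-1)/\ell$ satisfies $q(1)=e-1$ and $q'(1)=1$. Hence
\[
f(\wbot,1+\eta)=e-\delta+\eta+O(\delta^2+\eta^2).
\]
Multiplying by $e^{-1-\eta}=e^{-1}(1-\eta+O(\eta^2))$ gives
\[
g(\wbot,1+\eta)=1-\frac{\delta}{e}-\frac{e-1}{e}\eta+O(\delta^2+\eta^2).
\]
Setting this equal to $\wtop=1+\delta+O(\delta^2)$ yields
\[
\eta=-\frac{e+1}{e-1}\,\delta+O(\delta^2).
\]
To justify this rigorously I apply the implicit function theorem to $\Phi(\delta,\eta)=g(\wbot(\delta),1+\eta)-\wtop(\delta)$, whose derivative in $\eta$ at the origin is $-(e-1)/e\neq 0$. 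This shows $\eta$ is a smooth function of $\delta$, which justifies the Taylor remainders.

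Finally I substitute into $C$:
\[
C=e-\delta+\eta+O(\delta^2)=e-\frac{2e}{e-1}\sqrt{\frac{2\varepsilon}{e}}+O(\varepsilon).
\]
This is $e-\Theta(\sqrt{\varepsilon})$, which gives the statement with an explicit constant; the remainder is $O(\varepsilon)$, which I will need to reconcile with the $o(\varepsilon)$ written in the statement. The main obstacle is this regularity and error bookkeeping. The inverse of $h$ is singular at $e$ as a function of $R$, so I work in the variable $\delta$ (equivalently $\sqrt{\varepsilon}$), where all quantities are analytic. I must also check that the branch of $\ell^*$ chosen by the implicit function theorem is the one used in \cref{thm:pareto-class-I}. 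This holds because $g(\wbot,\cdot)$ is monotone decreasing, so the solution $\ell^*$ is unique.
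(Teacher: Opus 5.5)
Your proposal is correct and follows essentially the paper's route: you expand $\wbot,\wtop$ to first order in $\sqrt{R-e}$ (via the Taylor expansion of $we^{1/w}$ at $w=1$, where the paper uses the series of the Lambert branches $W_0,W_{-1}$), then linearize $g(\wbot,\ell^*)=\wtop$ around $\ell^*=1$, and arrive at the same $\ell^*=1-\frac{e+1}{e-1}\delta+O(\delta^2)$ and leading term $C=e-\frac{2\sqrt{2e}}{e-1}\sqrt{R-e}$. The remainder needs no reconciliation, because $O(\varepsilon)\subseteq O(\sqrt{\varepsilon})$ means your expansion already implies the statement as written; your $O(\varepsilon)$ error, justified by the implicit function theorem, is in fact sharper than the $o(\sqrt{\varepsilon})$ the paper obtains.
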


\begin{proof}
We first recall that $\wbot(R),\wtop(R)$ values are $- 1/W_{- 1}(- 1/R)$ and $- 1/W_0(- 1/R)$.

Let $\varepsilon = R/e - 1$ so that $R = (1 + \varepsilon) e$. We have $- 1/R = -\frac1e (1 -\varepsilon + o(\varepsilon))$. Using the series expansion of the Lambert function from \cite{lambertw}, we have \begin{align*}
W_0     \left( -\frac1e (1 - u) \right) & \underset{u \to 0}{=} - 1 + \sqrt{2 u} + o(\sqrt{u}), \quad  
W_{- 1} \left( -\frac1e (1 - u) \right) & \underset{u \to 0}{=} - 1 - \sqrt{2 u} + o(\sqrt{u}).
\end{align*}

Hence, we obtain $\wbot = 1 - \sqrt{2 \varepsilon} + O(\varepsilon))$ and $\wtop = 1 + \sqrt{2 \varepsilon} + O(\varepsilon)$. We know that $\ell^{*}$ is solution of the equation \[
\wbot + \frac{e^{\ell} - 1}{\ell} = e^{\ell} \wtop .
\]

Let us define $\psi(\ell) = \frac{(1 - \ell)(e^{\ell} - 1)}{\ell} = \frac{e^{\ell} - 1}{\ell} - e^{\ell} + 1$, we have $\psi(\ell^{*}) = e^{\ell} (\wtop - 1) - (\wbot - 1)$. Since $\lim_{\varepsilon \rightarrow 0} \wtop = 1$ and $\lim_{\varepsilon \rightarrow 0} \wbot = 1$, we have that $\psi(\lim_{\varepsilon\rightarrow 0} \ell^{*}) = 0$, hence $\lim_{\varepsilon\rightarrow 0} \ell^{*} = 1$.

We have, on a neighborhood of $0$ for $\varepsilon$, that \begin{align*}
\psi(\ell^{*}) & = \psi(1 + (\ell^{*} - 1)) \\
 & = \psi(1) + \psi(1) (\ell^{*} - 1) + o(|\ell^{*} - 1 |) \\
 & = (1 - e)(\ell^{*} - 1) + o(|\ell^{*} - 1 |) \\
 \text{and} \quad  \psi(\ell^{*}) & = e^{\ell^{*}} (\wtop - 1) - (\wbot - 1) \\
 & = e(1 + o(1))(\sqrt{2 \varepsilon} + O(\varepsilon)) + (\sqrt{2 \varepsilon} + O(\varepsilon)) \\
 & = (e + 1)\sqrt{2 \varepsilon} + o(\sqrt{\varepsilon}).
\end{align*}

By equating both expressions, we have that $\ell^{*} = 1 - \frac{e + 1}{e - 1} \sqrt{2 \varepsilon} + o(\sqrt{\varepsilon})$.

Finally, \begin{align*}
C & = \wtop e^{\ell^{*}} \\
 & = e (1 + \sqrt{2 \varepsilon} + O(\varepsilon)) \left(1 - \frac{e + 1}{e - 1} \sqrt{2 \varepsilon} + o(\sqrt{\varepsilon})\right) \\
 & = e - \frac{2 e \sqrt{2 \varepsilon}}{e - 1} + o(\sqrt{\varepsilon})) \\
 & = e - \frac{2 \sqrt{2 e}}{e - 1} \sqrt{R - e} + o(\sqrt{R - e}).
\end{align*}
\end{proof}

\section{Omitted Material from Section \ref{sec:upper}}

\begin{proof}[\textbf{Proof of Proposition \ref{prop:delay}}]
Let $F(t) := \int_{-\infty}^t B(x) \diff x$. The function $F(t+1) - R B(t)$ is constant on $(-\infty,0)$. Indeed, this function is continuous and its derivative on $S$ is $B(t+1) - R B'(t) = 0$. Now for $t \to -\infty$, $F(t+1)$ and $B(t)$ both tend to $0$, hence the said constant is $0$ and we obtain $F(t+1) = R B(t)$, i.e. $\CR{B}(t) = R$.
\end{proof}

\begin{proof}[\textbf{Proof of Lemma \ref{lemma:valid_main} ($R \geq 2/\ln(2)$)}]
By definition of the polynomials, the function $A$ is continuous, except  at $t=0$, and is monotone increasing in $[1,+\infty)$. It remains to show that it is monotone increasing in $(-\infty, 0]$ as well as positive.

For this purpose we define $q_k := p_k (0)$ for $k \geq 1$. In order to obtain a recurrence relation for the sequence $(q_k)$. We define the following formal power series for $s \in [0,1)$: 
\[ 
    G(z,s) := \sum_{k \geq 0} p_k (s) z^k.
\]

By solving for $G$, we obtain: 
\begin{align*}
    \pdv{G}{s}(z,s) 
    &= \sum_{k \geq 0} p_k'(s) z^k \\
    &= \sum_{k \geq 1} x p_{k-1}(s) z^k \\
    &= x z \sum_{k \geq 0} p_k(s) z^k \\
    &= x z G(z,s).
\end{align*}

For convenience, we denote $\lambda := x z$; multiplying the previous equality by the power series $e^{- \lambda s}$, we obtain 
\[ 
    e^{- \lambda s} \pdv{G}{s}(z,s) - \lambda e^{- \lambda s} G(z,s) = 0,
\]
which can be rewritten as: 
\[
    \pdv{e^{- \lambda s} G}{s}(z,s) = 0. 
\]

Integrating this equality from $0$ to $1$, it follows that
\begin{equation}
        G(z,1) = e^{\lambda} G(z, 0). \label{eq:Gz1_first}
\end{equation}

We also have 
\begin{align}
    G(z,1)  
    &= 1 + \mu x z + \sum_{k \geq 2} p_k(1) z^k \notag \\
    &= 1 + \mu x z + \sum_{k \geq 2} p_{k-1}(0) z^k \notag \\
    &= 1 + \mu x z + z (G(z,0) - 1). \label{eq:Gz1_second}
\end{align}

From \eqref{eq:Gz1_first} and \eqref{eq:Gz1_second} we retrieve the desired value, using $a := 1 - \mu x = \wtop / R$ \begin{equation*}
        G(z,0) = \frac{1 - a z}{e^{x z} - z}.
\end{equation*}

Proving that $q_k > 0$ for all $k \geq 0$ is equivalent to showing that the power series $H(z) := G(z e^{\alpha},0)$ has positive coefficients. Introduce $\alpha := 1/\wtop$ so that $R = e^\alpha / \alpha$. Thus we have \[
    H(z) = \frac{1 - z}{e^{\alpha z} - z e^{\alpha}}.
\]

Next, we compute the coefficients of the power series $1/H(z)$. The power series $H$ can be inverted since its constant coefficient is $1$.
\[ 
    \frac{1}{H(z)} = \frac{e^{\alpha z} - z e^{\alpha}}{1 - z} =  1 - \sum_{n \geq 1} \tau_n z^n,
\]
with $\tau_n = e^{\alpha} - \sum_{j = 0}^n \frac{\alpha^j}{j!} > 0$. Hence 
\[ 
    H(z) = \frac{1}{1 - \sum_{n \geq 1} \tau_n z^n} = \sum_{m \geq 0} \left(\sum_{n \geq 1} \tau_n z^n \right)^m = \sum_{m \geq 0} e^{\alpha m} q_m z^m 
\]
has only positive coefficients, which gives us that $q_k = p_k(0)$ is positive for all $k \geq 0$. From here, a quick induction shows that $A$ is positive and increasing on $(-\infty, 0)$, using the fact that $A(t+1) = R A'(t)$ on this interval. It only remains to prove that $A$ has a finite integral from $-\infty$ to $0$. Let us write $F(t) := \int_{-\infty}^t B(u) \diff u$. Noticing that $\int_0^1 p_k(u) \diff u = R (q_k - q_{k+1})$ (consequence from $p_k' = p_{k-1}$) we obtain: 
\[ 
    F(0) = \sum_{k \geq 1} R (q_k - q_{k+1}) \leq R q_1.
\] 
Hence, $A$ is positive, increasing and has a finite integral. Thus it is a valid bidding function. 
\end{proof}

\begin{proof}[\textbf{Proof of Theorem \ref{thm:upper_main}}]
Recall the notation $x := 1/R$ and $\mu := R - \wtop$. We will show that $\CR{B}(t)$ is at most $R$.

Recall that $A$ is solution of the delay differential equation $A(t+1) - R A'(t) = 0$ on $(-\infty,0)$, is continuous on the same interval, and differentiable on this interval except at integer points. Hence, by \cref{prop:delay}, we have 
\[
    \forall t \in (-\infty, 0), \CR{A}(t) = \frac{F(t+1)}{A(t)} = R.
\]

For the consistency, we want it to be reached at $t = 0$. Taking $t \rightarrow 0^-$, we obtain $F(1) = R A(0^-) = \mu$, hence, using $A(0) = 1$: \[
    \CR{A}(0) = F(1) = \mu = R - \wtop.
\]
    
On the interval $[0,1]$, the worst normalized ratio is reached for $t = 1$. since $F$ is increasing, \begin{align*}
    \forall t \in [0,1], \CR{A}(t) 
    &\leq \CR{A}(1) \\
    &= F(2) / A(1) \\ 
    &= F(1) + \wtop (e^{1/\wtop} - 1) \\
    &= R - \wtop + R - \wtop \\
    &= 2 (R - \wtop) \\
    &\leq R.
\end{align*}

The last inequality holds by the assumption $R \geq 2 / \ln(2)$. For $t \geq 1$, it holds that $\CR{A}(t) \leq R$ directly from \cref{lemma:right_extension}.
\end{proof}

The following is the definition of the bidding function $A$ for $R < 2/\ln(2)$.

\begin{definition} \label{def:A_small_R}
    Given $R < 2 / \ln 2$, define $\alpha:=1/\wtop$ and consider $y \in (0,1]$ solution to the equation $\alpha = y + \ln(2-y)$. Let $L:=1 - y/\alpha$, and $\mu:=e^y/\alpha$. Define the function $A$ \[
        A(t) = \begin{cases} 
          e^{(t-L) / \wtop} & \text{if } L \leq t \\
          1 & \text{if } 0 \leq t \leq L, \\ 
          p_k(s) &  \text{if } t \leq 0 \text{ for } k=\lceil-t\rceil, s=[-t].
        \end{cases}
    \]
    
    Here the polynomial $p_k$ is defined inductively as follows:
    \begin{align*}
        p_0(s) &= \begin{cases}
        1 & \text{if } 0\leq s\leq L,\\
        e^{\alpha(s-L)} & \text{if } L\leq s\leq 1
        \end{cases}\\
        p_1(s) &= x \left( \mu - \int_s^1 p_0(u) \diff u \right)\\ 
        p_{k+1} (s) &= p_k (0) - x \int_s^1 p_k (u) \diff u.
    \end{align*}
\end{definition}

\begin{proof}[\textbf{Proof of Lemma \ref{lemma:valid_main} ($R < 2/\ln(2)$)}]

The case $R = e$ is immediate since it gives $A(t) = e^t$. We can thus assume $e < R \leq 2/\ln 2$. We set $x := 1/R$, $\alpha := 1/\wtop$, $L := 1 - y/\alpha$, and $\mu:=e^y/\alpha$. Since $ \wtop e^{1/\wtop}=R$, we have $ R=e^\alpha/\alpha$. Moreover, the equation defining $y$ gives $e^\alpha=e^y(2-y)$. Hence \[
    R = \frac{e^\alpha}{\alpha} = \frac{e^y (2-y)}{\alpha} = (2-y) \mu.
\]

We start with the positivity of the function, as in the proof of Theorem \cref{thm:upper_main}. It suffices to prove positivity at the left endpoints, hence we define $q_k := p_k(0)$ for $k \geq 1$, and we will show that $q_k > 0$ for every $k \geq 1$. we introduce the formal power series \[
    G(z,s):=\sum_{k\geq1}p_k(s)z^k.
\]

Note that the sum starts at $k = 1$ instead of $0$, because $p_0$ is not constant anymore as in the proof of Theorem \cref{thm:upper_main}. The computation is formal in $z$, and the derivative is only with respect to $s$. Using $p_1' = x p_0$ and $p_k'=xp_{k-1}$ for $k \geq 2$, it follows that \[
    \pdv{G}{s}(z,s) = x z (p_0(s) + G(z,s)).
\]

Let $\lambda:= x z$. Multiplying the above by $ e^{-\lambda s}$, we obtain \[
    \pdv{}{s} (e^{-\lambda s} G(z,s)) = \lambda e^{-\lambda s} p_0(s).
\]

Integrating from $0$ to $1$, this gives \[
    e^{-\lambda}G(z,1)-G(z,0) = \lambda \int_0^1 e^{-\lambda u} p_0(u) \diff u.
\]

We now express $G(z,1)$ in terms of $G(z,0)$. Since $p_1(1) = x \mu$ and  $p_k(1) = p_{k-1}(0)$ for $k \geq 2$, we have \[
    G(z,1) = x \mu z + z G(z,0) = \lambda \mu + z G(z,0).
\]
Substituting this in the above equation and solving for $G(z,0)$, we obtain \[
    G(z,0) = \lambda \frac{\mu e^{-\lambda} - \int_0^1 e^{-\lambda u} p_0(u) \diff u}{1 - z e^{-\lambda}}.
\]

By rescaling $z$, define $H(z):=G(e^\alpha z,0)$. Since $R = e^\alpha / \alpha$, the parameter $\lambda = x z$ becomes $\alpha z$ after this rescaling. We compute the integral in the numerator using the two pieces of $p_0$: \[
    \int_0^1e^{-\alpha z u} p_0(u) \diff u = \int_0^L e^{-\alpha z u} \diff u + \int_L^1 e^{-\alpha z u} e^{\alpha(u - L)} \diff u.
\]

The first integral is equal to $(1-e^{-\alpha z L}) / (\alpha z)$. For the second one, using $\alpha(1-L) = y$, we obtain \[
    \int_L^1e^{-\alpha z u} e^{\alpha(u-L)} \diff u = \frac{e^{y - \alpha z} - e^{-\alpha z L}}{\alpha(1-z)}.
\]
Using also $\mu = e^y / \alpha$, a direct simplification gives $H(z) = \frac{J(z)}{D(z)}$, where \[
    D(z) := e^{\alpha z} - z e^\alpha \qquad J(z):= \frac{e^{y z}-(1-z)e^{\alpha z}-z^2e^y}{1-z}.
\]
Since $H(z)=\sum_{k \geq 1} e^{\alpha k} q_k z^k$, it remains to prove that all coefficients of $H$ are positive. For the denominator $D(z)$ we have \[
    \frac{D(z)}{1-z} = 1 - \sum_{n \geq 1} \tau_n z^n,
\]
where $\tau_n = e^\alpha - \sum_{j=0}^n \frac{\alpha^j}{j!} = \sum_{j \geq n+1} \frac{\alpha^j}{j!} > 0$. Therefore, since the above series has non zero constant coefficient, we can reverse it and obtain that \[
    \frac{1-z}{D(z)} = \frac{1}{1 - \sum_{n \geq 1} \tau_n z^n} = \sum_{m \geq 0} \left(\sum_{n \geq 1} \tau_n z^n \right)^m 
\]
has non negative coefficients. Hence, if \[
    \frac1{D(z)} = \sum_{n \geq 0} i_n z^n,
\]
then the sequence $(i_n)$ is positive and non decreasing, because $1/D(z)$ is obtained from $(1-z)/D(z)$ by multiplying by $1/(1-z)$. Next we write the numerator $J(z)$ as \[
    J(z)=\theta z- \sum_{n \geq 2} \sigma_n z^n,
\]
where $\theta = 1 + y - \alpha = 1 - \ln(2-y) > 0$, and, for $n \geq 2$, \[
    \sigma_n = \sum_{m \geq n+1} \frac{y^m}{m!} + \frac{\alpha^n}{n!} > 0.
\]
We will use the identity $\sum_{n \geq 2} A_n = \theta$. Indeed: \[
    \sum_{n \geq 2} \sigma_n = \sum_{m \geq 3} (m-2) \frac{y^m}{m!} + \sum_{n \geq 2} \frac{\alpha^n}{n!}.
\]

In the above expression the first sum is equal to $(y-2)e^y + 2 + y$, and the second one is equal to $e^\alpha - 1 - \alpha$. Thus, using that $e^\alpha = e^y(2-y)$, \begin{equation}
    \sum_{n\geq2} \sigma_n = (y-2) e^y + 2 + y + e^\alpha - 1 - \alpha = 1 + y - \alpha = \theta.
    \label{eq:theta_sigma}
\end{equation}

We can now prove the positivity of the coefficients of $H$. Let $H(z) = \sum_{k \geq 1} h_k z^k$. Since $H=J/D$, we have \[
    h_k = \theta i_{k-1} - \sum_{n=2}^k \sigma_n i_{k-n} = \sum_{n=2}^k \sigma_n \left(i_{k-1} - i_{k-n} \right) + i_{k-1} \sum_{n>k} \sigma_n
\]

using \eqref{eq:theta_sigma}. The first term is non negative because $(i_n)$ is non decreasing. The second term is strictly positive because $i_{k-1} > 0$ and $\sigma_n > 0$ for every $n \geq 2$. Hence $ h_k > 0$ for every $k \geq 1$. Since $h_k = e^{\alpha k} q_k$, we have $q_k > 0$ for every $k \geq 1$. From here, positivity of $A$ on $(-\infty,0)$ follows by induction. Indeed, if $p_k$ is positive on $[0,1]$, then $p_{k+1}' = x p_k > 0$, so $p_{k+1}$ is increasing. Its minimum is reached at $0$ and is equal to $q_{k+1} > 0$. Hence, every $p_k$ is positive.

It remains to verify that $\int_{-\infty}^0 A(x) \diff x$ is finite. From the recurrence of Definition \ref{def:A_small_R}, evaluated at $s = 0$, we have \[
    q_{k+1} = q_k - x \int_0^1 p_k(u) \diff u,
\]

so \[
    \int_0^1 p_k(u) \diff u = R (q_k - q_{k+1}).
\]

Since all $q_k$ are positive, summing both parts of the above equation gives \[
    \int_{-\infty}^0 A(u) \diff u = \sum_{k \geq 1} \int_0^1 p_k(u) \diff u \leq R q_1 < +\infty.
\]

Thus $A$ is a valid bidding function. 
\end{proof}

\begin{proof}[\textbf{Proof of Theorem \ref{thm:upper_second}}]
As for \cref{thm:upper_main}, we can use \cref{prop:delay}, which gives\[
    \CR{A}(t) = R, \quad \forall t \in (-\infty,0).
\]

Taking $t \to 0^-$ in the equality $F(t+1) = R A(t)$, we have $F(1) = R A(0^-) = \mu$. Since $A(0) = 1$, the consistency is \[
    \CR{A}(0) = F(1) = \mu = \frac{R}{2-y}.
\]

It remains to verify that $\CR{A}(t) \leq R$ for $t \geq 0$. If $0 \leq t \leq L$, then $A(t) = 1$, and \[
    F(t+1) = F(1) + \int_1^{t+1} e^{\alpha(u-L)} \diff u = \mu + \frac{e^{\alpha(t+1-L)} - e^{\alpha(1-L)}}{\alpha}.
\]

Using $\mu = e^y / \alpha$ and $\alpha(1-L) = y$, this becomes $F(t+1) = \frac{e^{y+\alpha t}}{\alpha}$. Thus 
\[
    \CR{A}(t) = \frac{e^{y + \alpha t}}{\alpha}.
\]

Since $t \leq L$, we have $y + \alpha t \leq y + \alpha L= \alpha$, and therefore $\CR{A}(t) \leq e^\alpha / \alpha = R$. Finally, if $t \geq L$, then $A(t) = e^{\alpha(t-L)}$, and the same computation gives $F(t+1) = e^{\alpha(t+1-L)} / \alpha$. Hence \[
    \CR{A}(t) = \frac{e^{\alpha(t+1-L)}}{\alpha e^{\alpha(t-L)}} = \frac{e^\alpha}{\alpha} = R.
\]

We have proved that the robustness is at most $R$, while the consistency is exactly $R/(2-y)$. This proves the theorem.
\end{proof}

\begin{proof}[\textbf{Proof of Corollary \ref{cor:A_asymptotic}}]
We first recall that if $R$ is close enough to $e$, by \cref{thm:upper_second}, the consistency of the bidding function $A$ is $R/(2 - y)$ where $y$ is solution of the equation $1/\wtop = y + \ln(2-y)$. Set $\varepsilon := R/e - 1$ and $s = \wtop - 1$. Since $R \to e$, we have $\varepsilon \to 0$ and $s \to 0$. From $\wtop e^{1/\wtop} = R$, it follows that 
\[
    1 + \varepsilon = (1 + s) \exp \left( \frac{1}{1+s} - 1 \right).
\]

Expanding the right-hand side at $s=0$, we obtain 
\[
    (1 + s) \exp \left( \frac{1}{1+s} - 1 \right) = 1 + \frac{s^2}{2} + O(s^3).
\]

Therefore $\varepsilon=\frac{s^2}{2}+O(s^3)$ and hence $s = \sqrt{2 \varepsilon} + O(\varepsilon)$. We have that $y$ tends to $1$ by below, hence we can write $y = 1 - q$ with $q \to 0^+$. From the definition of $y$ we have 
\[
    1 - q + \ln(1 + q) = \frac{1}{1+s}.
\]

Expanding both sides gives \[
    1 - \frac{q^2}{2} + O(q^3) = 1 - s + O(s^2).
\]

Thus $q^2 / 2 = s + o(s)$, so $q=\sqrt{2s}+O(s)$. Using the expansion of $s$, this gives \[
    q = \sqrt{2\sqrt{2\varepsilon}} + O(\varepsilon^{1/2}) = 2^{3/4}\varepsilon^{1/4} +O(\varepsilon^{1/2}).
\]

Finally, since $2-y=1+q$, we have  
\[
    \frac{R}{2-y} = \frac{e(1+\varepsilon)}{1+q} = e(1+\varepsilon)\left(1-q+O(q^2)\right).
\]

Since $q=O(\varepsilon^{1/4})$, we have $q^2=O(\varepsilon^{1/2})$. Therefore 
\[
    \frac{R}{2-y} = e - e q + O(\varepsilon^{1/2}).
\]

Substituting the expansion of $q$, we obtain 
\[
    \frac{R}{2-y} = e - 2^{3/4} e \varepsilon^{1/4} + O(\varepsilon^{1/2}).
\]

Since $\varepsilon=(R-e)/e$, this is exactly 
\[
    \frac{R}{2-y} = e - 2^{3/4} e^{3/4}(R-e)^{1/4} + O\left((R-e)^{1/2}\right).
\]
\end{proof}

\section{Omitted Materials from Section \ref{sec:lower}}

\begin{proof}[\textbf{Proof of Lemma \ref{lemma:lower_bound_primal}}]
Let $a \geq 1$ and $R \geq e$. Let $B$ be a bidding function of robustness $R$ and consistency $C$. Since $B$ is $C$ consistent, we have that $\inf_{t \in \RR} \CR{B}(t) = C$. Then, for any $\varepsilon > 0$, we can shift the bidding function so that $\CR{B}(0) \leq C + \varepsilon$.

For all $k \in \ZZ$, define $x_k := a \int_{k / a}^{(k+1) / a} B(u) \diff u$. Note that we have $x_k \geq B(k/a) > 0$, and that we can suppose, without loss of generality, that $x_0 = 1$, up to dividing the bidding function $B$ by the constant factor $x_0$. From the robustness conditions, we obtain 
\begin{align*}
    \frac{1}{a} \sum_{j = -N}^{\min(M,k+a-1)} x_j
    &\leq \frac{1}{a} \sum_{j = -\infty}^{k+a-1} x_j \\
    &= \int_{-\infty}^{k/a + 1} B(u) \diff u \\
    &\leq R B(k/a) \\
    &\leq R x_k.
\end{align*}
The same calculations, combined with the fact that $x_0 = 1$, give us that 
\[
    \frac{1}{a} \sum_{j = -N}^{\min(M,a-1)} x_j \leq \frac{1}{a} \sum_{j = -\infty}^{a-1} x_j \leq C + \epsilon.
\]

Thus the objective value of $P_{a,N,M}^R$ is at most $C + \varepsilon$.  
\end{proof}

We first prove two intermediate lemmas that will be important in the proof of the Theorem.

\begin{lemma} \label{lemma:polynomial_root}
    Let $a \geq 1$ and $R \geq e$. The smallest root in modulus of $X^a - a R (X - 1)$ is its unique real root $\rho \in (1, 1 + 1/a)$. Moreover, $\rho$ is a root of multiplicity $1$.
\end{lemma}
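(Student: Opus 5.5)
The plan has three parts: show that $P(X):=X^a-aR(X-1)$ has a real root in $(1,1+1/a)$, show it is the only root there, and then factor it out and bound the moduli of the remaining roots by a direct triangle-inequality estimate on the cofactor. That last estimate also gives simplicity.

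\emph{Existence and uniqueness in the interval.} We have $P(1)=1>0$. Also $P(1+1/a)=(1+1/a)^a-R<e-R\le 0$, using the strict inequality $(1+1/a)^a<e$; this strictness is what handles the case $R=e$. The intermediate value theorem then gives a root $\rho\in(1,1+1/a)$. For $a\ge 2$, $P$ is strictly convex on $(0,\infty)$. Since $P(1)>0$ and $P(1+1/a)<0$, the interval $(1,1+1/a)$ contains exactly one root, namely the point where $P$ first crosses zero. This is the sense of ``unique real root'' in the statement. The case $a=1$ is degenerate: then $P$ is linear, with single root $\rho=R/(R-1)$, and I will check it separately. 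I also record the identity $\rho^a=aR(\rho-1)$.

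\emph{Factoring out $\rho$.} Synthetic division of $P(X)=X^a-aRX+aR$ by $X-\rho$ gives $P(X)=(X-\rho)S(X)$ with
\[
S(X)=\sum_{k=1}^{a-1}\rho^{a-1-k}X^k+\Bigl(\rho^{a-1}-aR\Bigr).
\]
Using the identity for $\rho^a$, the constant term equals $-aR/\rho$; I will double-check this through the vanishing remainder $\rho s_0+aR=0$, where $s_0$ is the constant term. Hence every other root $z$ of $P$ satisfies
\[
\sum_{k=1}^{a-1}\rho^{a-1-k}z^k=\frac{aR}{\rho}.
\]

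\emph{Modulus bound, which I expect to be the crux.} Suppose $|z|\le\rho$. The triangle inequality gives $\bigl|\sum_{k=1}^{a-1}\rho^{a-1-k}z^k\bigr|\le (a-1)\rho^{a-1}$. On the other hand,
\[
\frac{aR}{\rho}>(a-1)\rho^{a-1}\iff aR>(a-1)\rho^a=(a-1)aR(\rho-1)\iff \rho-1<\frac1{a-1},
\]
and the last inequality holds because $\rho<1+1/a$. So the equation above cannot hold, and every root of $S$ has modulus strictly greater than $\rho$.

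This gives both conclusions at once. First, $\rho$ is the root of smallest modulus, and it is the only root of that modulus. Second, taking $z=\rho$ shows $S(\rho)\neq 0$, so $\rho$ is a simple root of $P$.

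The step I would verify most carefully is the exact form of the constant term of $S$. Once it is right, everything reduces to the single inequality $(a-1)(\rho-1)<1$.
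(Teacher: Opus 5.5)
Your proof is correct and is essentially the paper's argument: the paper substitutes $\eta = v\rho$ and divides by $v-1$ to obtain $\sum_{i=0}^{a-1} v^i = \rho/(\rho-1)$, which is exactly your equation $S(z)=0$ multiplied through by $\rho^{a-1}$. It then applies the same triangle-inequality comparison, driven by $\rho < 1+1/a$. The only cosmetic difference is that the paper gets uniqueness in the interval and simplicity from $f'(x)=a(x^{a-1}-R)<0$ on $[1,1+1/a]$, whereas you use convexity and $S(\rho)\neq 0$; since $S(\rho)=P'(\rho)$, the simplicity arguments coincide.
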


\begin{proof}
    We first prove that $\rho$ exists. Let $f(x) = x^a - a R (x - 1)$, we have $f(1) = 1$ and 
    \[
        f(1 + \frac1a) = (1 + \frac1a)^a - R \leq e - R < 0.
    \]
    
    Hence $\rho$ exists; we now show uniqueness of $\rho$. Indeed, $f'(x) = a (x^{a-1} - R)$ is negative on $[1,1 + 1/a]$. Since $f'$ is non-zero on this interval, $\rho$ is of multiplicity $1$.
    
    It remains to verify that there does not exist any root of smaller modulus. Suppose, by way of contradiction, that there exists a root $\eta = v \rho$ with $|v| \leq 1$ and $v \neq 1$. We have 
    \[
        v^a \rho^a = a R (v \rho - 1). 
    \]
    
    Using that $\rho^a = a R (\rho - 1)$, we obtain the following 
    
\begin{align*}
        & v^a = \frac{v \rho - 1}{\rho - 1} \\
        &\Longleftrightarrow v^a - 1 = \frac{ \rho (v-1) }{\rho - 1} \\
        &\Longleftrightarrow \frac{v^a - 1}{v-1} = \frac{ \rho }{\rho - 1}.
\end{align*}
    
    The above can be rewritten as $\sum_{i = 0}^{a-1} v^i = \frac{\rho}{\rho - 1}$. Note that left hand side has modulus at most $a$ by the triangular inequality $(|v^i| \leq 1, \forall i)$ , and the right hand side has modulus at least $1 + a$ since $\rho \in [1, 1 + 1/a]$, a contradiction. 
\end{proof}

\begin{lemma} \label{lemma:bounded_linear_seq}
Let $a \geq 1$. The equation
\begin{align} 
    a R b_n^{(N)} = 1 + \sum_{m=1}^{\min(N,n+a-1)} b_m^{(N)} \quad (1\le n\le N). \label{eq:bounded_linear_seq}
\end{align}
has a family of positive solutions indexed by $N \geq 1$ such that for all $n \geq 1$, we have \[
    b_n^{(N)} \to_{N \to +\infty} (\rho - 1) \rho^{n-1},
\]
where $\rho$ is defined in Lemma \ref{lemma:polynomial_root}.
\end{lemma}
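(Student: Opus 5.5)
We plan to avoid solving the finite system explicitly. Instead we will obtain $b^{(N)}$ as the minimal solution of a monotone fixed-point problem, squeezed between $0$ and an explicit geometric supersolution, and then identify the limit $N\to\infty$ using the recurrence behind Lemma~\ref{lemma:polynomial_root}.

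\emph{Reformulation and candidate limit.} Write \eqref{eq:bounded_linear_seq} as $b=\Phi_N(b)$, where
\[
\Phi_N(b)_n := \frac{1}{aR}\Big(1+\sum_{m=1}^{\min(N,n+a-1)} b_m\Big), \qquad 1\le n\le N.
\]
The map $\Phi_N$ is monotone for the componentwise order on $\RR_+^N$. Set $v_n := (\rho-1)\rho^{n-1}$, which is positive since $\rho>1$. The untruncated sum is
\[
1+\sum_{m=1}^{n+a-1} v_m = \rho^{n+a-1}.
\]
Using $\rho^a = aR(\rho-1)$ from Lemma~\ref{lemma:polynomial_root}, this gives $\frac{1}{aR}\rho^{n+a-1} = v_n$, so $v$ solves the infinite system exactly. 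Truncating the sum at $N$ only decreases it, hence $\Phi_N(v)\le v$ on $\{1,\dots,N\}$, and $v$ is a supersolution.

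\emph{Existence and positivity for fixed $N$.} Starting from $b^{0}=0$, the iterates $b^{j+1}=\Phi_N(b^j)$ are nondecreasing by monotonicity of $\Phi_N$, and by induction they stay below $v$. They therefore converge to a fixed point $b^{(N)}$ with $0 < \frac{1}{aR}\le b^{(N)}_n \le v_n$, which settles positivity. Moreover $\Phi_{N+1}\ge\Phi_N$ on the common coordinates, since more terms are summed. Comparing the iterations shows that $b^{(N)}_n$ is nondecreasing in $N$ for each fixed $n$. So $b^{\infty}_n:=\lim_N b^{(N)}_n$ exists and satisfies $\frac{1}{aR}\le b^\infty_n\le v_n$. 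For fixed $n$, the $n$-th equation involves only finitely many terms once $N\ge n+a-1$. Passing to the limit, $b^\infty$ therefore solves the infinite system $aR\, b_n = 1+\sum_{m=1}^{n+a-1} b_m$ for all $n\ge1$.

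\emph{Identification of the limit (main step).} Subtracting the equations for $n$ and $n+1$ in the infinite system gives the linear recurrence
\[
b_{n+a} = aR\,(b_{n+1}-b_n), \qquad n\ge 1,
\]
whose characteristic polynomial is $X^a - aR(X-1)$. Hence $b^\infty_n=\sum_r P_r(n)\,r^n$, a finite sum over the roots $r$ of this polynomial with polynomial coefficients $P_r$. By Lemma~\ref{lemma:polynomial_root}, $\rho$ is a simple root and every other root has modulus strictly greater than $\rho$.

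The bound $0\le b^\infty_n\le v_n = O(\rho^n)$ should force every coefficient $P_r$ with $|r|>\rho$ to vanish. To prove this, consider the roots of largest modulus among those with a nonzero coefficient; if this modulus exceeded $\rho$, the quantity $b^\infty_n/|r|^n$ would tend to $0$. A standard argument then shows that an exponential polynomial $\sum P_r(n)\,e^{in\theta_r}$ with distinct angles $\theta_r$ cannot tend to $0$ unless all $P_r$ vanish, for instance by a Cesàro average against $e^{-in\theta_r}$, or because a linear recurrence sequence tending to $0$ must have its characteristic roots inside the unit disk. Consequently $b^\infty_n = c\,\rho^{n-1}$ for some constant $c$.

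Substituting this into the $n=1$ equation gives
\[
aR\,c = 1 + c\,\frac{\rho^{a}-1}{\rho-1},
\]
and using $\rho^a=aR(\rho-1)$ this reduces to $c=\rho-1$. Thus $b^{(N)}_n\to(\rho-1)\rho^{n-1}$, as claimed. The step we expect to be the main obstacle is this exclusion of the larger-modulus roots, in particular making the ``cannot tend to $0$'' argument rigorous when several roots share the same modulus or have multiplicity greater than one.
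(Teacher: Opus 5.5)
Your proposal is correct and follows essentially the same route as the paper: monotone iteration of $\Phi_N$ from $0$ below the geometric supersolution $(\rho-1)\rho^{n-1}$, passage to the infinite system, and exclusion of the other roots of $X^a-aR(X-1)$ via Lemma~\ref{lemma:polynomial_root}. The differences are refinements rather than a new approach. You apply the root argument to $b^\infty$ and fix the constant from the $n=1$ equation, where the paper works with $\sigma-b^*$. You also prove that $b^{(N)}_n$ is nondecreasing in $N$, which justifies the existence of the limit that the paper simply assumes, and you handle equal-modulus roots more carefully than the paper's ``$|d_n|\sim|\rho_i|^n$'' remark.
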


\begin{proof}
Let us fix $a \geq 1$.

We start by justifying the existence of a  positive solution $(b_n^{(N)})$. We first introduce an infinite sequence that is solution a to the unbounded version of \eqref{eq:bounded_linear_seq}. Let $\rho$ be as defined in Lemma \ref{lemma:polynomial_root}. The candidate infinite sequence is $(\sigma_n)_{n \geq 1}$ where \[
    \sigma_n := (\rho - 1) \rho^{n-1}.
\]

This sequence indeed satisfies, using the definition of $\rho$ \begin{align} 
    1 + \sum_{j=1}^{n+a-1} \sigma_j
    &= 1 + (\rho - 1) \sum_{j = 0}^{n+a-2} \rho^j \notag\\
    &= \rho^{n+a-1} \notag\\
    &= a R \sigma_n. \label{eq:infinite_linear_recurrent}
\end{align}

Consider the map $F_N:\mathbb R_+^N\to\mathbb R_+^N$ defined by, $\forall 1 \leq n \leq N$ \[
(F_N(b))_n := \frac{1}{a R} \left(1+\sum_{m=1}^{\min(N,n+a-1)} b_m \right).
\]

First note that this map is monotone, in the sense that if $(u_n)_{n \leq N} \leq (v_n)_{n \leq N}$, then $F_N((u_n)_{n \leq N}) \leq F_N((u_n)_{n \leq N})$, where $\leq$ corresponds to term by term comparison.

Next, recall that $\sigma$ is solution of \eqref{eq:infinite_linear_recurrent}, hence we obtain, denoting by $\sigma^{(N)}$ the finite sequence $(\sigma_n)_{n \leq N}$ that \[
    F_N(\sigma^{(N)}) \leq \sigma^{(N)}. 
\]

Moreover, $F_N(0) = (\frac{1}{aR},...,\frac{1}{aR}) > 0$, hence the sequence $0,F_N(0),F_N^{(2)}(0),\dots$ is monotone. It is also bounded by $\sigma^{(N)}$ from above. Indeed, we have $0 \leq \sigma^{(N)}$, and using monotonicity of $F_N$, we obtain $F_N^{(k)} (0) \leq F_N^{(k)} (\sigma^{(N)}) \leq \sigma^{(N)}$.

Thus, $0,F_N(0),F_N^{(2)}(0),\dots$ is monotone and bounded from above, so it converges to a positive fixed point that we denote $b^{(N)}$, and which is smaller than $\sigma^{(N)}$. It remains to compute the limit of $b^{(N)}_n$ as $N$ tends to $+\infty$.

Let $n \geq 1$. For all $N \geq n + a - 1$, we have \[
    a R b_n^{(N)} = 1 + \sum_{m=1}^{n+a-1} b_m^{(N)}.
\]

We define $b_n^*$ as the limit of $b_n^{(N)}$ as $N$ tends to $+\infty$. Note that for every $n$ and $N \geq n + a - 1$, $(F_N^{(k)} (0))_n \leq \sigma^{(N)}_n$. As $N$ tends to $+\infty$, we obtain that \begin{equation}
    b^*_n \leq \sigma_n. \label{eq:bn_limit_below_sigma}
\end{equation} 

We also have \[
    a R b_n^* = 1 + \sum_{m=1}^{n+a-1} b_m^*,
\]

which is exactly \eqref{eq:infinite_linear_recurrent}. The goal is now to prove that \eqref{eq:bn_limit_below_sigma} is tight. We define the difference of these two sequences, i.e. $d_n := \sigma_n - b^*_n \geq 0$, $(d_n)$ is solution of the homogeneous version of \eqref{eq:infinite_linear_recurrent}: \[
    \forall n \geq 1, \quad a R d_n = \sum_{m=1}^{n+a-1} d_m.
\]

Subtracting this equation for $n$ and $n+1$ gives that $(d_n)$ is solution of \[
    \forall n \geq 1, \quad a R (d_{n+1} - d_n) = d_{n+a}.
\]

This linear recurrent equation has the characteristic polynomial $P(X) = X^a - a R (X - 1)$. Since $\rho$ is a root of multiplicity $1$ of $P(X)$, we obtain that there exists a constant $K$ and polynomials functions $(K_i)_{i \in I}$ for each other root $(\rho_i)_{i \in I}$ of  \[
    d_n = K \rho^n + K_i(n) \rho_i^n.
\]

However, $d_n = \sigma_n - b_n^*$ is bounded by $\sigma^n$, so $d_n = O(\rho^n)$. $\rho$ being the root of smallest modulus of $P$, we must have that every $K_i$ is equal to the null polynomial. Otherwise, we would have $|d_n| \sim |\rho_i|^n$ for some $i \in I$. Thus $d_n = K \rho^n$ for some $K \geq 0$. Let us suppose towards a contradiction that $K > 0$. It follows that 
\[
    a R K \rho^n = \sum_{m=1}^{n+a-1} K \rho^m.
\]

Dividing by $K$, and using that $a R = \frac{\rho^a}{\rho - 1}$, we obtain on the left hand side $\frac{\rho^{a+n}}{\rho - 1}$ and on the right hand side $\frac{\rho^{a+n} - \rho}{\rho - 1}$, which are different, a contradiction. Thus $K = 0$ and $b^*_n = \sigma_n$ for all $n$, hence $b^* = \sigma$, which completes the proof.
\end{proof}

\begin{proof}[\textbf{Proof of Theorem \ref{thm:lower_bound}}]
We construct a family of feasible solutions of the dual, indexed by $a$ and $N$. Their objective values will converge to $R-\wtop$. 

Fix $a \geq 1$ and $N \geq 1$, and choose $M=a-1$. Note that the term $\mathbf 1_{\{k\le a-1\}}$ is equal to $1$ for every $k \in \{-N,\dots,a-1\}$.

We first consider the sequence $b_1^{(N)}, \dots, b_N^{(N)}$ from Lemma \ref{lemma:bounded_linear_seq}. We specify the values of the dual variables. Let 
\begin{align*}
\beta_{-n} &:= b_n^{(N)} & (1 \le n \le N), \\
\beta_k &:= 0 & (0 \le k \le a-1).
\end{align*}

For $\gamma$, let $ \gamma_i=0$ for $ -N\le i\le -1$. For $ 0\le i\le a-2$, define \begin{equation}
    \gamma_i := \frac1a \sum_{m=i+1}^{a-1} \left( 1 + \sum_{j=1}^{a-1-m} \beta_j^{(N)} \right), \label{eq:gamma_def}
\end{equation}

keeping the boundary convention $\gamma_{a-1}=0$. Finally, we specify \[
\lambda_{a,N} := 1+ \frac1a \sum_{m=1}^{a-1} (a-m) b_m^{(N)}. 
\]

We will prove that the above specify a feasible dual solution. 

First consider a negative index $-N \leq k \leq -1$. Since $\gamma_{k} = \gamma_{k-1} = 0$, the dual constraint that corresponds to $k$ becomes \[
\frac1a + \frac1a \sum_{j=\max(k+1-a,-N)}^{-1} \beta_j \ge R \beta_k^{(N)}.
\]

Substituting $j = -m$ and $n = -k$, this becomes \[
\frac1a \left( 1 + \sum_{m=1}^{\min(N,n+a-1)} b_m^{(N)} \right) \ge R b_{n}^{(N)}.
\]

By \eqref{eq:bounded_linear_seq} from Lemma \ref{lemma:bounded_linear_seq}, the above inequality holds with equality.

Next, consider $1 \le k \le a-1$. Then $\beta_k = 0$, and there is no $\lambda$-term. The left-hand side of the dual constraint for $k$ is \[
\frac1a + \frac1a \sum_{j=k+1-a}^{-1} \beta_j + \gamma_k - \gamma_{k-1}.
\]

In the above expression, the sum over $\beta$ is equal to $\sum_{m=1}^{a-1-m} b_m^{(N)}$. By Equation \eqref{eq:gamma_def}, we have \[ 
\gamma_{k-1}-\gamma_k = \frac1a \left( 1 + \sum_{m=1}^{a-1-m} b_n^{(N)} \right).
\]

Thus the left-hand side of the constraint is equal to $0$, so it holds with equality.

It remains to verify the case $k=0$. Since $\beta_0 = 0$ and $\gamma_{-1}=0$, the left-hand side of the constraint is \[
\frac1a + \frac1a \sum_{m=1}^{a-1} b_m^{(N)} + \gamma_0.
\]

Using Equation \eqref{eq:gamma_def}, this is \[
\frac1a + \frac1a \sum_{m=1}^{a-1} b_m^{(N)} + \frac1a \sum_{m=1}^{a-1} \left( 1 + \sum_{j=1}^{a-1-m}b_j^{(N)} \right).
\]

The term $b_m^{(N)}$ appears $a-m$ times in total in the above expression. Therefore the left-hand side of the constraint for $k=0$ becomes \[
1 + \frac1a \sum_{m=1}^{a-1} (a-m) b_m^{(N)} = \lambda_{a,N}.
\]

Hence the dual constraint for $k=0$ also holds with equality. We conclude that $(\beta,\gamma,\lambda)$ is a feasible solution to the dual $D_{a,N,M}^R$.

The best lower bound is obtained as the parameters $a$ and $N$ tend to $+\infty$. We now compute the limit of the objective $\lambda_{a,N}$. For fixed $a$, let $\rho_a$ be the root from Lemma \ref{lemma:polynomial_root}. We denote $\lambda_a = \lim_{N \to\infty} \lambda_{a,N}$ and obtain \begin{align} 
    \lambda_a 
    &= 1 + \frac1a \sum_{n=1}^{a-1} (a-n) \rho_a^{n} - \rho_a^{n-1} \label{eq:lambda_limit} \\ 
    &= 1 + \frac1a \sum_{n=1}^{a-1} \rho_a^n - (a - 1) \notag \\
    &= \frac1a \sum_{n = 0}^{a-1} \rho_a^n \notag,
\end{align} 

By defining $\theta_a$ so that $\rho_a = e^{\theta_a/a}$, the definition of $\rho_a$ gives \[
e^{\theta_a}= a R \left( e^{\theta_a/a} - 1 \right).
\]

As $ a\to\infty$, since $\rho_a$ tends to $1$, $\theta_a / a$ tends to $0$ and this gives $e^\theta=R\theta$ where $\theta = \lim_{a \to \infty} \theta_a$. Since $\rho_a \in (1,a/(a-1))$, the relevant limit satisfies $\theta\in(0,1]$, and the equation $ e^\theta/\theta=R$ gives $\theta=1/\wtop$.

Finally, \eqref{eq:lambda_limit} becomes a Riemann sum: 
\[
\lambda_a = \frac1a \sum_{m=0}^{a-1} e^{\theta_a m/a} \longrightarrow \int_0^1 e^{\theta x} \diff x = \frac{e^\theta-1}{\theta}.
\]

From the fact that $ e^\theta/\theta=R$ and $ \theta=1 / \wtop$, we obtain 
\[
\frac{e^\theta-1}{\theta} = \frac{e^\theta}{\theta}-\frac1\theta = R-\wtop.
\]

Therefore every $R$-robust bidding sequence has consistency at least $R - \wtop - \varepsilon$. This proves the theorem.
\end{proof}

\section{Additional numerical analysis}
\label{app:numerical}

\begin{figure}[t!]
    \centering
    \includegraphics[width=0.8\linewidth]{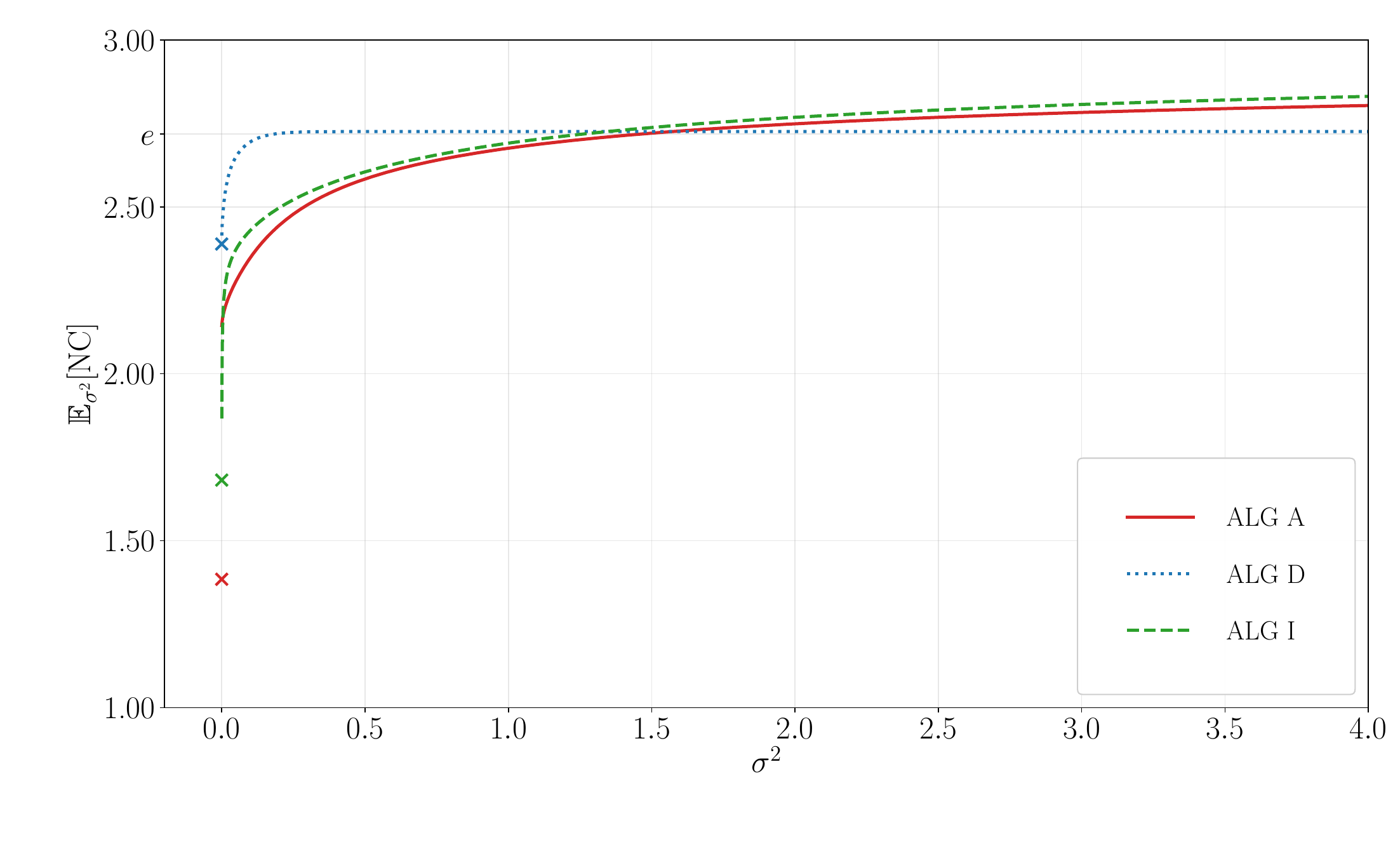}

    \smallskip
    \textit{$R=3$}

    \medskip

    \includegraphics[width=0.8\linewidth]{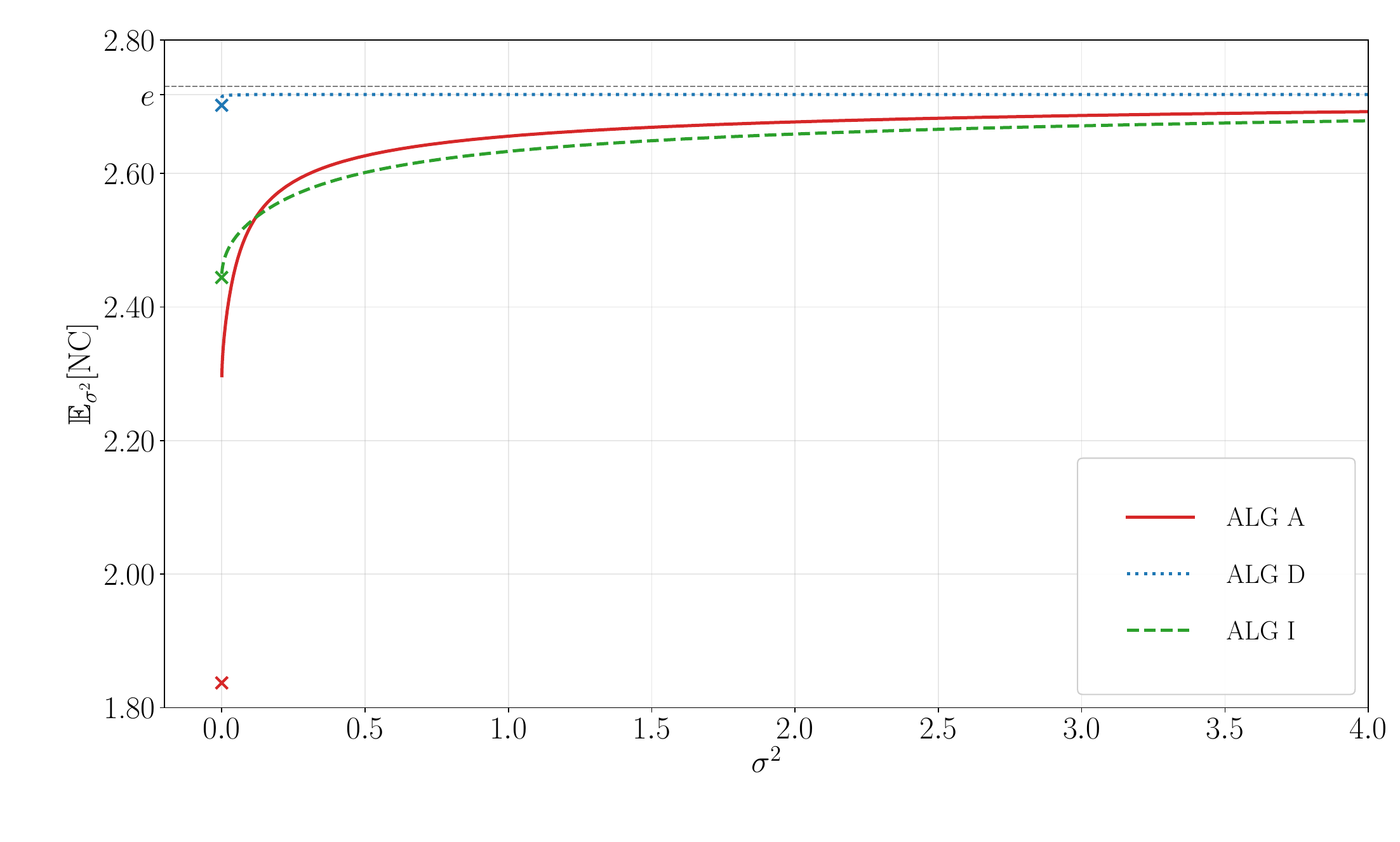}

    \smallskip
    \textit{$R=2.73$}

    \caption{Expected normalized cost as function of $\sigma^2$ for $R=3$ and $R=2.73$.}
    \label{fig:app_smoothness}
\end{figure}

\cref{fig:app_smoothness} depicts the expected normalized cost of the different bidding algorithms for $R=3$ and $R=2.73$, as a function of the variance $\sigma^2$, following the same experimental setting as in the main paper. 
%The same qualitative picture as in Section~\ref{sec:numerical} emerges for both values of $R$, with only a mild shift in the range of noise levels for which our algorithm performs best. 
As $R$ gets closer to $e$, algorithm
{\sc A} improves over {\sc D} on a wider interval of error, namely for $\sigma^2 \lesssim 40$. On the other hand, {\sc A} improves over {\sc I} only when the prediction oracle is fairly accurate.

\section{Omitted material from Section~\ref{sec:incremental}}
\label{app:medians}

Our general approach is the following.
First, we compute an approximation $\widetilde{\text{opt}}_k$ of $\text{opt}_k$, using the Python class \texttt{KMedoids} from the package \texttt{sklearn\_extra.cluster}\footnote{Released under the 3-Clause BSD license}, and their corresponding sets $\widetilde{F}_k$. These values $\widetilde{F}_n ,\ldots ,\widetilde{F}_1$ form a universe $\mathcal{U}$ of increasing values. A bidding sequence is projected on $\mathcal{U}$, in the following sense. It translates into a set of indices $\mathcal{K} \subseteq\left\{1 ,\ldots , n\right\}$ such that for every $i \in \mathcal{K}$, there is a bid $x$ in this sequence with $\widetilde{F}_i \leq x < \widetilde{F}_{i - 1}$. The algorithm constructs sets $F_k$ for every $k \in \mathcal{K}$. First, it initializes $F_n =\widetilde{F}_n =\mathcal{U}$. Then for every $k \in \mathcal{K} \setminus \left\{n\right\}$ in decreasing order, $F_k$ is obtained from the previously built set $F_{\ell}$, by projecting each point $p \in \widetilde{F}_k$ to the closest point in $F_{\ell}$, hence approximating the near-optimal set $\widetilde{F}_k$ with the only available points due to the \textit{incremental} constraint of the problem. Intermediate sets $F_i$ for $k < i < \ell$ are simply set to $F_k$. Here, we adhere to the original description of the algorithm of~\cite{chrobak2008incremental}, where an incremental solution $F_1\subseteq F_2 \subseteq \ldots \subseteq F_n$ only needs to satisfy $|F_i|\leq i$. Another option would have been to fill the sets greedily with points so to satisfy $|F_i| = i$ for all $i$.

The theoretical performance guarantee of the bidding sequence translates to a similar guarantee for the incremental medians problem, in the following sense. Suppose that the approximation ratio $\widetilde{\text{opt}}_k / \text{opt}_k$ is upper bounded by some factor $c > 1$. Unfortunately the documentation does not specify it~\cite{kmedoids}, but since we use the \texttt{kmeans++} initialization option, we can assume $c = O(\log n)$, see \cite{Arthur-Vassilvitskii-kmeansplusplus-2007}. If the bidding sequence guarantees a normalized cost $\leq\beta$, then the produced incremental medians solution guarantees a normalized cost $2 \beta \cdot c$, where $c$ comes from the approximation of $\widetilde{\text{opt}}$, and the factor $2$ comes from the triangular inequality of the distance function and the use of the projection in the algorithm.

\begin{lemma}
 Let $\text{B} =(X_i)_{i \in \ZZ}$ be a (randomized) sequence for the online bidding problem, whose robustness is at most $\beta$ and whose consistency is at most $C$. Then, for every $k$, there exists a projection onto the cost set $U =\{\cost(F_k^{*}), k\leq n\}$ of $\text{B}$ such that there exists an incremental median algorithm that is $2 \beta$-robust and achieves a normalized cost of at most $2 C$ at point $k$.
\end{lemma}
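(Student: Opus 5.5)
We follow the reduction of~\cite{chrobak2008incremental} and track the two guarantees separately. Write $\mathrm{opt}_k=\cost(F_k^*)$, which is nonincreasing in $k$. The cost axis is reversed relative to bidding: a larger budget $k$ gives a smaller cost. So we bid on the cost values. Given the random sequence $B=(X_i)$, we scale it so that its consistency is attained at the threshold $u=\mathrm{opt}_{\hat k}$; this is allowed by the scaling invariance $\NC(\delta X,\delta u)=\NC(X,u)$ from Section~\ref{sec:preliminaries}. We then project each bid $x$ onto $U$ by taking the smallest index $i$ with $\mathrm{opt}_i\le x$, and let $\mathcal K$ be the resulting random set of indices. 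The algorithm builds nested sets $F_k$ for $k\in\mathcal K$ in decreasing order of $k$: each point of $F_k^*$ is moved to its nearest point of the previously built larger set, and intermediate sets are copied as in the description above. Since these are the first $k$ facilities in an order, the output is a valid incremental solution.

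The core step is a per-level bound. For consecutive indices $k<\ell$ in $\mathcal K$, the triangle inequality gives $\cost(F_k)\le \cost(F_\ell)+2\,\mathrm{opt}_k$: each client $v$ is served by its optimal facility $f\in F_k^*$ moved to the nearest point $g\in F_\ell$, and $d(f,g)\le d(f,v)+d(v,F_\ell)$. Unrolling this gives $\cost(F_k)\le 2\sum_{j\in\mathcal K,\,j\ge k}\mathrm{opt}_j$. Every term here is at most the corresponding projected bid. Indeed, a bid projected to $j$ satisfies $\mathrm{opt}_j\le x$, so the sum is dominated by the sum of bids up to the first bid that reaches the threshold, which is exactly what the bidding cost counts.

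Now fix the actual budget $k'$ and take threshold $u=\mathrm{opt}_{k'}$. The largest usable index $k\le k'$ in $\mathcal K$ corresponds to the first bid $\ge u$, because the projected index of such a bid is at most $k'$. Hence
\[
\cost(F_{k'})\le\cost(F_k)\le 2\,\cost(B,u),
\]
and after taking expectations
\[
\EE[\cost(F_{k'})]/\mathrm{opt}_{k'}\le 2\,\NC(B,u).
\]
Taking the supremum over $u$ gives the $2\beta$ robustness. Evaluating at $k'=\hat k$, where the scaling makes $\NC(B,\mathrm{opt}_{\hat k})\le C$, gives normalized cost at most $2C$ at $k$. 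If $F_k^*$ is replaced by a $c$-approximate solution, the same argument carries an extra factor $c$.

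The main obstacle is the boundary behaviour of the projection. Several bids may collapse to the same index, and the bi-infinite sequence has bids below $\mathrm{opt}_n$ and above $\mathrm{opt}_1$. Collapsed bids only shrink the cost sum, so they are harmless. Bids below $\mathrm{opt}_n$ all map to $n$, where $F_n=V$ has cost $0$, and bids above $\mathrm{opt}_1$ never need to be used. A second delicate point is that $\cons$ is defined through the infimum~\eqref{eq:convenient_CR}, which may not be attained exactly. We handle this by scaling to get within $\varepsilon$ of it and letting $\varepsilon\to0$, or by noting that for the functions in this paper the infimum is attained at a reference point, via Theorem~\ref{thm:bidding-fct}.
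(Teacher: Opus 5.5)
Your proposal is correct and follows the same route as the paper: you rescale $B$ so that its consistency threshold lands on $\cost(F_k^*)$, project onto $U$, and apply the Chrobak et al.\ reduction. The only difference is that you re-derive that reduction yourself, namely the per-level bound $\cost(F_k)\le\cost(F_\ell)+2\,\mathrm{opt}_k$ and the charging of each $\mathrm{opt}_j$ with $j\in\mathcal K$ to a distinct bid up to the first bid $\ge u$, whereas the paper simply cites \cite[Lemma 5]{chrobak2008incremental}.
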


\begin{proof}We denote by $t \in \RR$ the value such that $\NC(\text{B}, t) \leq C$. Let $k \leq n$, and define $\widetilde{\text{B}} = (X_i \cdot \frac{\cost(F_k^{*})}{t})_{i \in \ZZ}$. By scale invariance, this sequence remains $\beta$-competitive, and moreover $\NC(\widetilde{\text{B}}, \cost(F_k^{*})) \leq C$. It then suffices to consider the projection of $\widetilde{\text{B}}$ onto the set $U$, as we defined. The conclusion then follows from~\cite[Lemma 5]{chrobak2008incremental}.
\end{proof}

Figure~\ref{fig:app_medians} depicts the empirical average approximation ratios of the various algorithms for predicted numbers of facilities $\hat{k}=20$ and $\hat{k}=4000$, respectively. The plots exhibit the same relative performance among the algorithms as observed for $\hat{k}=2500$, in the main paper.

\begin{figure}[t]
    \centering
    \includegraphics[width=0.8\linewidth]{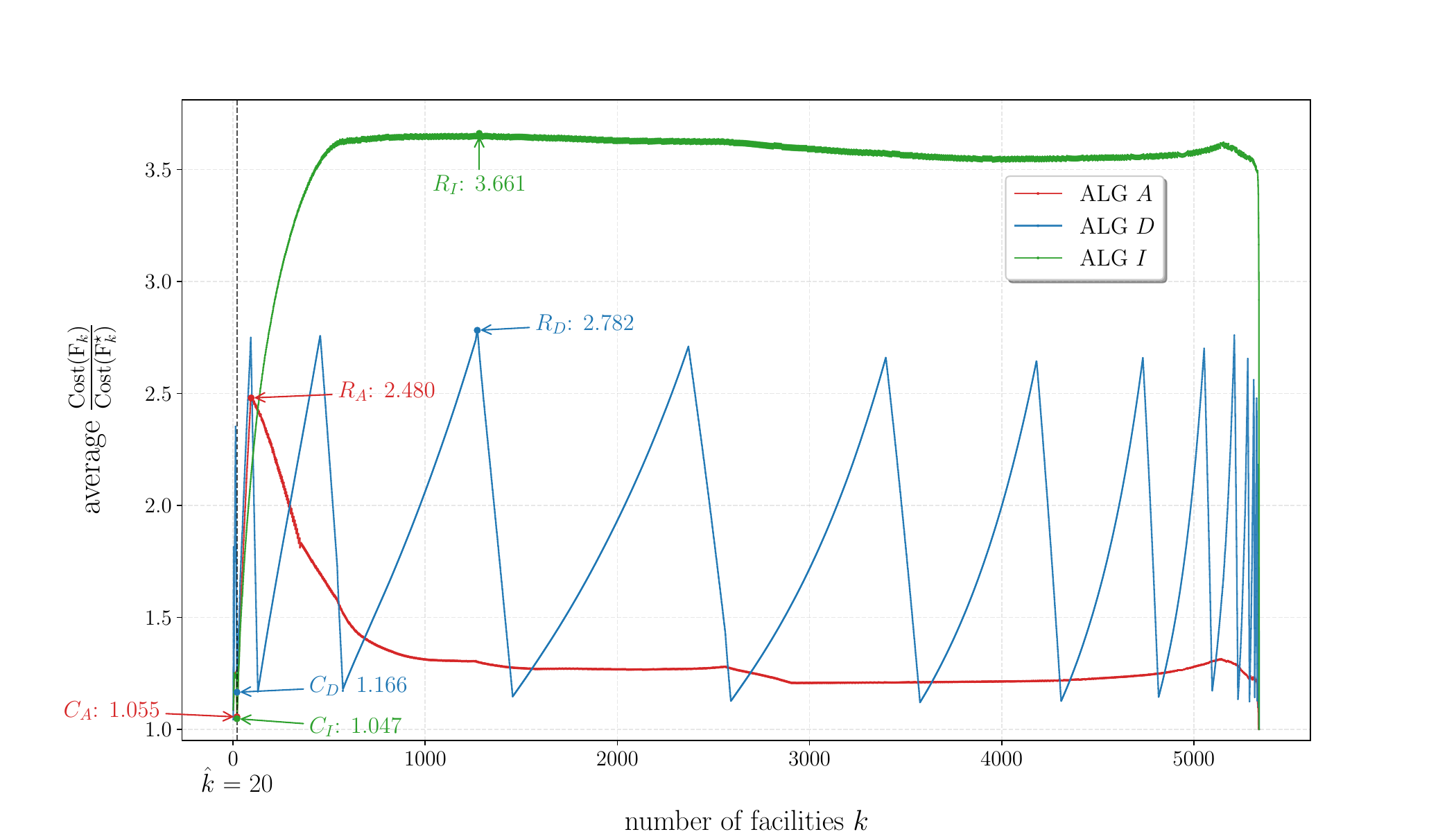}

    \smallskip
    \textit{Average empirical approximation ratios for $\hat{k}=20$.}

    \medskip

    \includegraphics[width=0.8\linewidth]{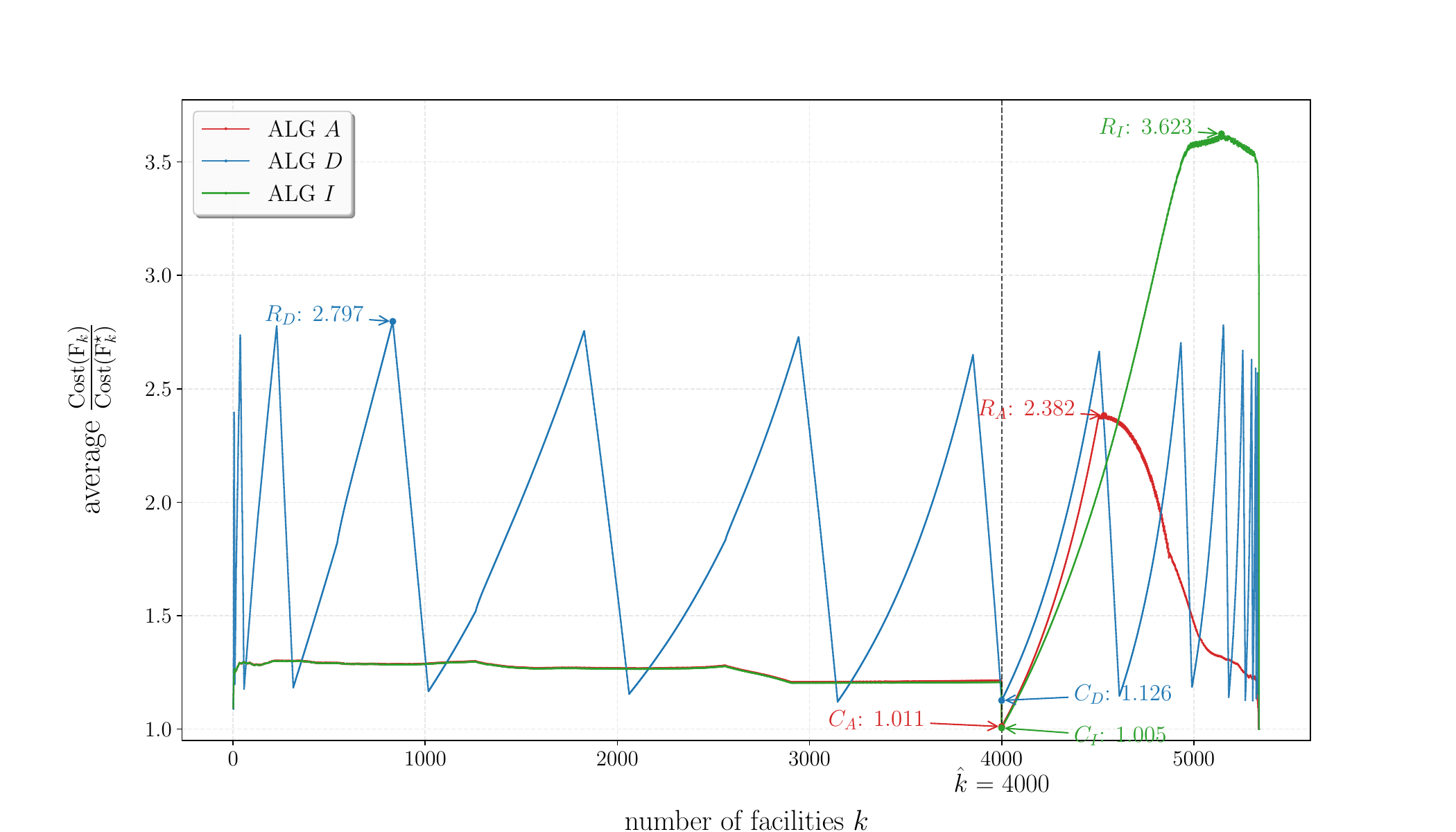}

    \smallskip
    \textit{Average empirical approximation ratios  for $\hat{k}=4000$.}

    \caption{Average empirical approximation ratios of the incremental median algorithms  for different values of $\hat{k}$.}
    \label{fig:app_medians}
\end{figure}